\def\imp{\supset}
\def\far{\rightarrow}
\def\wk{(wk)}
\def\cut{(cut)}
\def\id{(id)}
\def\idfo{(id_{q})}
\def\botr{(\bot_{l})}
\newcommand{\disrr}{(\vee_{r})}
\newcommand{\conrr}{(\wedge_{r})}
\newcommand{\disrl}{(\vee_{l})}
\newcommand{\conrl}{(\wedge_{l})}
\newcommand{\imprr}{(\supset_{r})}
\newcommand{\imprl}{(\supset_{l})}
\newcommand{\trans}{(tra)}
\newcommand{\refl}{(ref)}
\def\lift{(lift)}
\def\allr{(\forall_{r})}
\def\alll{(\forall_{l})}
\def\existsl{(\exists_{l})}
\def\existsr{(\exists_{r})}
\def\nd{(nd)}
\def\cd{(cd)}
\def\negl{(\neg_{l})}
\def\negr{(\neg_{r})}
\def\idnew{(id^{*})}
\def\idfonew{(id^{*}_{q})}
\def\allnewl{(\forall_{l}^{*})}
\def\intfocd{\mathsf{IntQC}}
\def\int{\mathsf{Int}}
\def\lint{\mathsf{G3Int}}
\def\intfocdl{\mathsf{IntQCL}}
\def\nint{\mathsf{NInt}}
\def\lintfocd{\mathsf{G3IntQC}}
\def\nintfocd{\mathsf{NIntQC}}
\def\lcut{\{}
\def\rcut{\}}
\def\switch{\mathfrak{N}}
\providecommand{\acknowledgments}[1]{\textbf{Acknowledgments. } #1}
\newcommand{\R}{\mathcal{R}}
\newenvironment{customlem}[1]
  {\innercustomlem}
  {\endinnercustomlem}
\newenvironment{customthm}[1]
  {\innercustomthm}
  {\endinnercustomthm}
\begin{document}

\title{On Deriving Nested Calculi for Intuitionistic Logics from Semantic Systems\thanks{Work funded by FWF projects I2982, Y544-N2, and W1255-N23.}}


\author{Tim Lyon} 

\institute{Institut f\"ur Logic and Computation, Technische Universit\"at Wien, 1040 Wien, Austria  \\ \email{lyon@logic.at}}

\authorrunning{Tim Lyon}

\maketitle

\begin{abstract}
This paper shows how to derive nested calculi from labelled calculi for propositional intuitionistic logic and first-order intuitionistic logic with constant domains, thus connecting the general results for labelled calculi with the more refined formalism of nested sequents. The extraction of nested calculi from labelled calculi obtains via considerations pertaining to the elimination of structural rules in labelled derivations. Each aspect of the extraction process is motivated and detailed, showing that each nested calculus inherits favorable proof-theoretic properties from its associated labelled calculus.  

\keywords{Intuitionistic logic · Kripke semantics · Labelled calculi · Nested Calculi · Proof theory · Structural rule elimination}
\end{abstract}

\section{Introduction}

Numerous fruitful consequences and applications obtain through the supplementation of a logic with an \emph{analytic} calculus. Such calculi are characterized on the basis of their inference rules, which stepwise (de)compose the formula to be proven. One of the most prominent realizations of this idea dates back to Gentzen~\cite{Gen35}, who proposed the \emph{sequent calculus} framework for classical and intuitionistic logic. Since then, countless extensions and reformulations of Gentzen's framework have been supplied for many logics of interest. Examples of extensions include \emph{display calculi}~\cite{Bel82,TiuIanGor12}, \emph{hypersequent calculi}~\cite{Pog08}, \emph{labelled calculi}~\cite{DycNeg12,Vig00}, and \emph{nested calculi}~\cite{Fit14,TiuIanGor12}. Such calculi have been exploited to prove meaningful results; e.g. decidability~\cite{Pog08,TiuIanGor12}, interpolation~\cite{LyoTiuGorClo20}, and automated counter-model extraction~\cite{LyoBer19,TiuIanGor12}. We focus on the labelled and nested formalisms in this paper.

The labelled approach of constructing calculi may be qualified as \emph{semantic} due to the fact that calculi are obtained through the transformation of semantic clauses and Kripke-frame properties into inference rules for a logic~\cite{DycNeg12,Vig00}. Although the approach has been criticized by some~\cite{Avr96}, it has also proven to be quite successful relative to certain criteria. For example, the labelled formalism is surprisingly modular and allows for the automated construction of analytic calculi for many intermediate and modal logics~\cite{CiaMafSpe13,DycNeg12,Neg16}. Furthermore, calculi constructed in the labelled paradigm often possess fruitful properties (e.g. contraction-admissibility, invertibility of rules, cut-elimination, etc.) that follow from general results~\cite{DycNeg12,Neg16}.

In 2009, Br\"unnler introduced \emph{nested sequent calculi}~\cite{Bru09} and Poggiolesi introduced \emph{tree hypersequent calculi}~\cite{Pog09Trends} for a set of modal logics. Both formalisms are essentially notational variants of one another and make use of an idea due to Bull~\cite{Bul92} and Kashima~\cite{Kas94} to organize sequents into treelike structures called \emph{nested sequents}. Although nested sequents can be seen as a distinct proof-theoretic formalism, it was shown in 2012~\cite{Fit12} that nested sequent calculi can be viewed as `upside-down' versions of prefixed tableaux, introduced much earlier in 1972~\cite{Fit72}.
In contrast to labelled sequents, nested sequents 
are often given in a language as expressive as the language of the logic; thus, nested calculi have the advantage that they minimize the bureaucracy sufficient to prove theorems. The nested formalism continues to receive much attention, proving itself suitable for constructing analytic calculi~\cite{Bru09}, developing automated reasoning algorithms~\cite{GorPosTiu11}, and verifying interpolation~\cite{LyoTiuGorClo20}, among other applications.

Despite the many advantages of nested calculi, constructing such calculi for logics as well as proving that they possess favorable proof-theoretic properties (admissibility of structural rules, cut-elimination, etc.) is often done on a case by case basis; i.e. the nested formalism does not---to date---offer the same generality of results that hold in the labelled paradigm (cf.~\cite{DycNeg12,Neg16}). 
Therefore, a significant advantage of the labelled paradigm over the nested paradigm is that labelled calculi are easily constructed on the basis of a logic's semantics and one often obtains highly favorable proof-theoretic properties of the calculi (essentially) for free via general theorems~\cite{DycNeg12,Neg16}. Nevertheless, the labelled formalism has its drawbacks: the calculi involve a complicated syntax, and labelled structural rules typically delete vital formulae from premise to conclusion, which can cause associated proof-search algorithms to be less efficient or rely on backtracking.

Since the labelled formalism is well-suited for constructing calculi and confirming properties, and the nested formalism is well-suited for applications, a general method of extracting nested calculi from labelled calculi (with properties preserved) is highly desirable. One could generate labelled calculi for a class of logics and confirm favorable proof-theoretic properties via existing general results; if such properties were preserved during an extraction procedure, then the ensuing nested calculi would possess the properties as well, yielding 
practical, cut-free nested calculi. 
Similar ideas and results have been discussed in the literature~\cite{CiaLyoRam18,GorRam12AIML,LyoBer19,Pim18}, where refined calculi (which can be viewed as nested calculi) were extracted from labelled calculi for various logics. (NB. The author has recently been made aware of~\cite{Pim18}, which mentions results strongly related to Sect.~\ref{section-3}. Although the results presented here were discovered independently, the work of Sect.~\ref{section-3} can be seen as a detailed explication and expansion of the work presented in~\cite{Pim18}.). In this paper, we advance our understanding of this method, and show how to derive Fitting's nested calculi (see~\cite{Fit14}) from labelled calculi for intuitionistic logics. The results of this paper are also worthwhile in that they clarify the connection between the intuitionistic labelled calculi and nested calculi considered, thus shedding light on the semantic roles played by certain inference rules and syntactic structures in Fitting's formalism.

This paper is organized as follows: Sect.~\ref{section-2} introduces the labelled and nested calculi for the intuitionistic logics considered, and Sect.~\ref{section-notation} shows how to translate labelled sequents into nested sequents. Sect.~\ref{section-3} and~\ref{section-4} show how to extract the nested calculi from the labelled calculi for propositional and first-order intuitionistic logic with constant domains, respectively. Last, Sect.~\ref{conclusion} concludes and discusses future work.



\section{Proof Calculi for Intuitionistic Logics}\label{section-2}

The language $\mathcal{L}$ for propositional intuitionistic logic ($\int$) is defined via the BNF grammar shown below top, and the language $\mathcal{L_{Q}}$ for constant domain first-order intuitionistic logic ($\intfocd$) is defined via the BNF grammar shown below bottom:
$$A ::= p \ | \ \bot \ | \ (A \lor A) \ | \ (A \land A) \ | \ (A \imp A)$$
$$A ::= p(x_{1}, \ldots, x_{n}) \ | \ \bot \ | \ (A \vee A) \ | \ (A \wedge A) \ | \ (A \imp A) \ | \ (\forall x) A \ | \ (\exists x) A $$
In the language $\mathcal{L}$, $p$ is among a denumerable set of \emph{propositional variables} $\{p, q, r, \ldots\}$. In the language $\mathcal{L_{Q}}$, $p$ is an $n$-ary \emph{predicate symbol} with $x_{1}, \ldots, x_{n}, x$ \emph{variables} ($n \in \mathbb{N}$), and when $n = 0$, $p$ is assumed to be a propositional variable. 
As usual, we define $\neg A := A \imp \bot$.

We assume the reader is familiar with intuitionistic logics; for a comprehensive overview, see~\cite{GabSheSkv09}.

\subsection{The Labelled Calculi $\lint$ and $\lintfocd$}

We define \emph{propositional $($first-order$)$ labelled sequents} to be syntactic objects of the form $L_{1} \Rightarrow L_{2}$ ($L_{1}' \Rightarrow L_{2}'$, resp.), where $L_{1}$ and $L_{2}$ ($L_{1}'$ and $L_{2}'$, resp.) are formulae defined via the BNF grammar below top (below bottom, resp.). 
\begin{center}
\begin{tabular}{c @{\hskip 2em} c}
$L_{1} ::= w : A \ | \ w \leq v \ | \ L_{1},L_{1}$

&

$L_{2} ::= w : A \ | \ L_{2},L_{2}$
\end{tabular}
\end{center}
\begin{center}
\begin{tabular}{c @{\hskip 2em} c}
$L_{1}' ::= w : A \ | \ a \in D_{w} \ | \ w \leq v \ | \ L_{1}',L_{1}'$

&

$L_{2}' ::= w : A \ | \ L_{2}',L_{2}'$
\end{tabular}
\end{center}
In the propositional case, $A$ is in the language $\mathcal{L}$ and $w$ is among a denumerable set of labels $\{w, v, u, \ldots \}$. In the first-order case, $A$ is in the language $\mathcal{L_{Q}}$, $a$ is among a denumerable set of \emph{parameters} $\{a, b, c, \ldots\}$, and $w$ is among a denumerable set of labels $\{w, v, u, \ldots \}$. We refer to formulae of the forms $w \leq u$ and $a \in D_{w}$ as \emph{relational atoms} (with formulae of the form $a \in D_{w}$ sometimes referred to as \emph{domain atoms}, more specifically) and refer to formulae of the form $w : A$ as \emph{labelled formulae}. Due to the two types of formulae occurring in a labelled sequent, we often use $\R$ to denote relational atoms, and $\Gamma$ and $\Delta$ to denote labelled formulae, thus distinguishing between the two. Labelled sequents are therefore written in a general form as $\R, \Gamma \Rightarrow \Delta$. Moreover, we take the comma operator to be commutative and associative; for example, we identify the formula $w : A, w \leq u, u:B$ with $w \leq u, u :B, w : A$. This interpretation of comma lets us view $\R, \Gamma$ and $\Delta$ as multisets. Also, we allow for empty antecedents and succedents in both our labelled and nested sequents.

In the first-order setting, we syntactically distinguish between \emph{bound variables} $\{x, y, z, \ldots\}$ and \emph{free variables}, which are replaced with \emph{parameters} $\{a, b, c, \ldots\}$, to avoid clashes between the two categories (cf.~\cite[Sect.~8]{Fit14}). Therefore, instead of using formulae directly from the first-order language, we use formulae from the first-order language where each freely occurring variable $x$ has been replaced by a distinct parameter $a$. For example, we would make use of the labelled formula $w : (\forall x) p(a,x) \lor q(a,b)$ instead of $w : (\forall x) p(y,x) \lor q(y,z)$ in a first-order sequent of $\lintfocd$. For a formula $A \in \mathcal{L_{Q}}$, we write $A[a/x]$ to mean the formula that results from substituting the parameter $a$ for all occurrences of the free variable $x$ in $A$. Last, we use the notation $A(a_{0}, \ldots, a_{n})$, with $n \in \mathbb{N}$, to denote that the parameters $a_{0}, \ldots, a_{n}$ are all parameters occurring in the formula $A$. We write $A(\vv{a})$ as shorthand for $A(a_{0}, \ldots, a_{n})$ and $\vv{a} \in D_{w}$ as shorthand for $a_{0} \in D_{w}, \ldots, a_{n} \in D_{w}$. The labelled calculi are given in Fig.~\ref{fig:propositional-calculus}.

\begin{figure}
\noindent\hrule

\begin{center}
\begin{tabular}{c @{\hskip 1em} c} 

\AxiomC{}
\RightLabel{$\id$}
\UnaryInfC{$\R,w \leq v,w :p,\Gamma \Rightarrow \Delta, v :p$}
\DisplayProof

&

\AxiomC{$\R, w \leq v, v :A, \Gamma \Rightarrow \Delta, v :B$}
\RightLabel{$\imprr^{\dag_{1}}$}
\UnaryInfC{$\R, \Gamma \Rightarrow \Delta, w :A \imp B$}
\DisplayProof

\end{tabular}
\end{center}

\begin{center}
\begin{tabular}{c @{\hskip 1em} c @{\hskip 1em} c}

\AxiomC{$\R,w :A, w :B, \Gamma \Rightarrow \Delta$}
\RightLabel{$\conrl$}
\UnaryInfC{$\R, w :A \wedge B, \Gamma \Rightarrow \Delta$}
\DisplayProof

&

\AxiomC{$\R, \Gamma \Rightarrow \Delta, w :A$}
\AxiomC{$\R, \Gamma \Rightarrow \Delta, w :B$}
\RightLabel{$\conrr$}
\BinaryInfC{$\R, \Gamma \Rightarrow \Delta, w :A \wedge B$}
\DisplayProof

\end{tabular}
\end{center}

\begin{center}
\begin{tabular}{c @{\hskip 1em} c @{\hskip 1em} c}

\AxiomC{$\R, w :A, \Gamma \Rightarrow \Delta$}
\AxiomC{$\R, w :B, \Gamma \Rightarrow \Delta$}
\RightLabel{$\disrl$}
\BinaryInfC{$\R, w :A \vee B, \Gamma \Rightarrow \Delta$}
\DisplayProof

&

\AxiomC{$\R, \Gamma \Rightarrow \Delta, w :A, w :B$}
\RightLabel{$\disrr$}
\UnaryInfC{$\R, \Gamma \Rightarrow \Delta, w :A \vee B$}
\DisplayProof

\end{tabular}
\end{center}




\begin{center}
\begin{tabular}{c}
\AxiomC{$\R,w \leq v, w :A \imp B, \Gamma \Rightarrow \Delta, v :A$}
\AxiomC{$\R,w \leq v, w :A \imp B, v :B, \Gamma \Rightarrow \Delta$}
\RightLabel{$\imprl$}
\BinaryInfC{$\R,w \leq v, w :A \imp B, \Gamma \Rightarrow \Delta$}
\DisplayProof 
\end{tabular}
\end{center}

\begin{center}
\begin{tabular}{c @{\hskip 1em} c}

\AxiomC{$\R,w \leq w, \Gamma \Rightarrow \Delta$}
\RightLabel{$\refl$}
\UnaryInfC{$\R,\Gamma \Rightarrow \Delta$}
\DisplayProof

&

\AxiomC{$\R,w \leq v, v \leq u, w \leq u, \Gamma \Rightarrow \Delta$}
\RightLabel{$\trans$}
\UnaryInfC{$\R,w \leq v, v \leq u, \Gamma \Rightarrow \Delta$}
\DisplayProof

\end{tabular}
\end{center}

\begin{center}
\begin{tabular}{c @{\hskip 1em} c}
\AxiomC{}
\RightLabel{$\botr$}
\UnaryInfC{$\R,w :\bot, \Gamma \Rightarrow \Delta$}
\DisplayProof

&

\AxiomC{}
\RightLabel{$\idfo$}
\UnaryInfC{$\R,w \leq v, \vv{a} \in D_{w}, w :p(\vv{a}),\Gamma \Rightarrow \Delta, v :p(\vv{a})$}
\DisplayProof
\end{tabular}
\end{center}

\begin{center}
\scalebox{.95}{
\begin{tabular}{c c} 

\AxiomC{$\R, w \leq v, a \in D_{v}, \Gamma \Rightarrow \Delta, v : A[a/x]$}
\RightLabel{$\allr^{\dag_{2}}$}
\UnaryInfC{$\R, \Gamma \Rightarrow \Delta, w : \forall x A$}
\DisplayProof

&

\AxiomC{$\R, a \in D_{w}, \Gamma \Rightarrow \Delta, w: A[a/x], w: \exists x A$}
\RightLabel{$\existsr$}
\UnaryInfC{$\R, a \in D_{w}, \Gamma \Rightarrow \Delta, w: \exists x A$}
\DisplayProof

\end{tabular}
}
\end{center}

\begin{center}
\scalebox{.95}{
\begin{tabular}{c c} 

\AxiomC{$\R, a \in D_{w}, w: A[a/x], \Gamma \Rightarrow \Delta$}
\RightLabel{$\existsl^{\dag_{3}}$}
\UnaryInfC{$\R, w : \exists x A, \Gamma \Rightarrow \Delta$}
\DisplayProof

&

\AxiomC{$\R, w \leq v, a \in D_{v}, v : A[a/x], w : \forall x A, \Gamma \Rightarrow \Delta$}
\RightLabel{$\alll$}
\UnaryInfC{$\R, w \leq v, a \in D_{v}, w : \forall x A, \Gamma \Rightarrow \Delta$}
\DisplayProof

\end{tabular}
}
\end{center}

\begin{center}
\begin{tabular}{c @{\hskip 1em} c}

\AxiomC{$\R, w \leq v, a \in D_{w}, a \in D_{v}, \Gamma \Rightarrow \Delta$}
\RightLabel{$\nd$}
\UnaryInfC{$\R, w \leq v, a \in D_{w}, \Gamma \Rightarrow \Delta$}
\DisplayProof

&

\AxiomC{$\R, w \leq v, a \in D_{v}, a \in D_{w}, \Gamma \Rightarrow \Delta$}
\RightLabel{$\cd$}
\UnaryInfC{$\R, w \leq v, a \in D_{v}, \Gamma \Rightarrow \Delta$}
\DisplayProof

\end{tabular}
\end{center}

\hrule
\caption{The labelled calculus $\lint$ for propositional intuitionistic logic consists of $\id$, $\imprr$, $\conrl$, $\conrr$, $\disrl$, $\disrr$, $\imprl$, $\refl$, $\trans$, and $\botr$ (see~\cite{DycNeg12}), and all rules give the calculus $\lintfocd$. The side condition $\dag_{1}$ states that the variable $v$ does not occur in the conclusion, $\dag_{2}$ states that neither $a$ nor $v$ occur in the conclusion, and $\dag_{3}$ states that $a$ does not occur in the conclusion. Labels and parameters restricted from occurring in the conclusion of an inference are called \emph{eigenvariables}.\protect\footnotemark}
\label{fig:propositional-calculus}
\end{figure}

\footnotetext{Note that $\id$ is an instance of $\idfo$; the same holds in the nested setting.}

We define a \emph{label substitution} $[w/v]$ on a labelled sequent in the usual way as the replacement of all labels $v$ occurring in the sequent with the label $w$. Similarly, we define a \emph{parameter substitution} $[a/b]$ on a labelled sequent as the replacement of all parameters $b$ occurring in the sequent with the parameter $a$.

\begin{theorem}
\label{thm:lint-properties} The calculi $\lint$ and $\lintfocd$ have the following properties:
\begin{itemize}

\item[$(i)$] 

\begin{itemize}

\item[(a)] For all $A \in \mathcal{L}$, $ \vdash_{\lint} \R,w \leq v, w : A, \Gamma \Rightarrow v : A, \Delta$;

\item[(b)] For all $A \in \mathcal{L}$, $ \vdash_{\lint} \R,w:A,\Gamma \Rightarrow \Delta, w :A$; 

\item[(c)] For all $A \in \mathcal{L_{Q}}$, $\vdash_{\lintfocd} \R,w \leq v, \vv{a} \in D_{w}, w : A(\vv{a}), \Gamma \Rightarrow v : A(\vv{a}), \Delta$; 

\item[(d)] For all $A \in \mathcal{L_{Q}}$, $\vdash_{\lintfocd} \R, \vv{a} \in D_{w}, w:A(\vv{a}),\Gamma \Rightarrow \Delta, w :A(\vv{a})$;

\end{itemize}

\item[$(ii)$] The $(lsub)$ and $(psub)$ rules are height-preserving $($i.e. `hp-'$)$ admissible;
\begin{center}
\begin{tabular}{c @{\hskip 1em} c}
\AxiomC{$\R,\Gamma \Rightarrow \Delta$}
\RightLabel{$(lsub)$}
\UnaryInfC{$\R[w/v],\Gamma[w/v] \Rightarrow \Delta[w/v]$}
\DisplayProof

&

\AxiomC{$\R,\Gamma \Rightarrow \Delta$}
\RightLabel{$(psub)$}
\UnaryInfC{$\R[a/b],\Gamma[a/b] \Rightarrow \Delta[a/b]$}
\DisplayProof
\end{tabular}
\end{center}


\item[$(iii)$] All rules are hp-invertible;

\item[$(iv)$] The $\wk$ and $\{(ctr_{R}),(ctr_{F_{l}}),(ctr_{F_{r}})\}$ rules (below) are hp-admissible;

\begin{center}
\begin{tabular}{c @{\hskip 2em} c} 
\AxiomC{$\R,\Gamma \Rightarrow \Delta$}
\RightLabel{$(wk)$}
\UnaryInfC{$\R',\R,\Gamma',\Gamma \Rightarrow \Delta',\Delta$}
\DisplayProof

&

\AxiomC{$\R,\R',\R',\Gamma \Rightarrow \Delta$}
\RightLabel{$(ctr_{R})$}
\UnaryInfC{$\R,\R',\Gamma \Rightarrow \Delta$}
\DisplayProof

\end{tabular}
\end{center}
\begin{center}
\begin{tabular}{c @{\hskip 2em} c} 

\AxiomC{$\R,\Gamma',\Gamma',\Gamma \Rightarrow \Delta$}
\RightLabel{$(ctr_{F_{l}})$}
\UnaryInfC{$\R,\Gamma',\Gamma \Rightarrow \Delta$}
\DisplayProof

&

\AxiomC{$\R,\Gamma \Rightarrow \Delta, \Delta', \Delta'$}
\RightLabel{$(ctr_{F_{r}})$}
\UnaryInfC{$\R,\Gamma \Rightarrow \Delta, \Delta'$}
\DisplayProof
\end{tabular}
\end{center}

\item[$(v)$] The $\cut$ rule (below) is admissible;

\begin{center}
\begin{tabular}{c}
\AxiomC{$\R,\Gamma \Rightarrow \Delta, w :A$}
\AxiomC{$\R,w :A,\Gamma \Rightarrow \Delta$}
\RightLabel{$\cut$}
\BinaryInfC{$\R,\Gamma \Rightarrow \Delta$}
\DisplayProof
\end{tabular}
\end{center}

\item[$(vi)$] $\lint$ $(\lintfocd)$ is sound and complete for $\int$ $(\intfocd$, resp.$)$.

\end{itemize}
\end{theorem}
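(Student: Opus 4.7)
The plan is to establish (i)--(vi) in that order, following the standard Negri-style methodology for $\mathrm{G3}$-type labelled calculi, adapted to the intuitionistic/first-order setting with relational and domain atoms. I would first prove (i) by simultaneous induction on the complexity of $A$: the base cases are handled by $\id$ and $\idfo$, and each inductive case proceeds by applying the appropriate invertible logical rule on both sides (e.g. for $A = B \imp C$, apply $\imprr$ and $\imprl$, then invoke $(trans)$ to chain the $\leq$-atoms and the induction hypothesis). For the first-order clauses (c) and (d), the $\forall$-case additionally uses $(nd)/(cd)$ to move domain atoms between worlds, and fresh eigenvariables are chosen as required.

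Next I would prove (ii) by induction on the height of derivations, treating $(lsub)$ and $(psub)$ uniformly. The only delicate cases are rules with eigenvariable side conditions ($\imprr$, $\allr$, $\existsl$): if the target label/parameter of the substitution clashes with an eigenvariable, the eigenvariable is first renamed (using the inductive hypothesis on the premise with a fresh name), after which the substitution can be pushed into the premise and the rule reapplied. Part (iii) is then straightforward: every rule of $\lint$/$\lintfocd$ is of a shape where the principal formula is preserved in the premise, so hp-invertibility of non-eigenvariable rules follows by a simple induction on height using hp-admissibility of weakening (proved independently below) on the premises, while invertibility of the eigenvariable rules is obtained by combining this with (ii) to instantiate the eigenvariable by a label/parameter appearing in the intended conclusion.

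For (iv), hp-admissibility of $(wk)$ is a routine induction on height, again using (ii) to avoid clashes with eigenvariables; hp-admissibility of the three contraction rules is then proved by induction on the height of derivations using the hp-invertibility of (iii), following the standard Negri recipe (invert the last rule over one of the contracted copies, contract the premises by the inductive hypothesis, reapply the rule). For (v), cut elimination is shown by the usual double induction, primary on the complexity of the cut formula and secondary on the sum of the heights of the two premise derivations, with the principal cases reduced by the hp-invertibility of (iii) and closed using (iv); the relational and domain atoms are not cut-formulae, so only labelled formulae need to be considered. I expect the main obstacle to lie here, specifically in the commutative cases for $\imprl$, $\alll$ and $\existsr$, where the cut formula is parametric and the context contains $\leq$- or domain-atoms whose placement interacts with eigenvariable conditions; these cases are resolved by first applying $(psub)/(lsub)$ via (ii) to rename offending eigenvariables, then permuting the cut upward.

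Finally, (vi) splits into two halves. For soundness, I would define, for a Kripke model with constant domain, a satisfaction relation for labelled sequents under an assignment of worlds to labels and domain elements to parameters, and check by induction on the derivation that each rule preserves validity; the side conditions on $\imprr$, $\allr$ and $\existsl$ are exactly what makes the semantic reading of these rules correct, while $(nd)$ and $(cd)$ reflect the constant-domain condition. For completeness, I would translate any Hilbert/natural-deduction proof of a theorem of $\int$ (resp.\ $\intfocd$) into $\lint$ (resp.\ $\lintfocd$) using (i) for axioms, the logical rules for connectives and quantifiers, and admissible cut (v) for modus ponens and related rules; alternatively, one can simulate the standard sequent calculi $\mathsf{G3Int}$ and $\mathsf{G3IntQC}$ from the literature cited in the figure caption.
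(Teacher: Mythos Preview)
Your proposal is correct and follows essentially the same Negri-style approach as the paper: mutual induction on formula complexity for (i), height induction with eigenvariable renaming for (ii), hp-invertibility via height induction and weakening for (iii), the standard contraction argument via invertibility for (iv), lexicographic induction on complexity and heights for (v), and Kripke soundness plus simulation of the axiomatic system for (vi). The only quibble is your remark in (iii) that ``every rule \ldots\ is of a shape where the principal formula is preserved in the premise'': this is false for $\conrl$, $\conrr$, $\disrl$, $\disrr$, $\imprr$, $\allr$, $\existsl$, and for those rules hp-invertibility requires the genuine height induction rather than an appeal to weakening, but this does not affect the overall correctness of your plan.
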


\begin{proof} We refer the reader to~\cite{DycNeg12} for proofs of properties (i)--(vi) for $\lint$; note that hp-admissibility of $(psub)$ is trivial in the propositional setting since formulae do not contain parameters. The proofs of properties (i)--(vi) can be found in App.~\ref{app:proofs} for $\lintfocd$.
\qed
\end{proof}

\subsection{The Nested Calculi $\nint$ and $\nintfocd$}

We define a propositional (or, first-order) nested sequent $\Sigma$ to be a syntactic object defined via the following BNF grammars:
\begin{center}
\begin{tabular}{c @{\hskip 2em} c}
$X ::= A \ | \ X, X$ 

&

$\Sigma ::= X \far X \ | \ X \far X, [\Sigma], \ldots, [\Sigma]$
\end{tabular}
\end{center}
where $A$ is in the propositional language $\mathcal{L}$ (first-order language $\mathcal{L_{Q}}$, resp.). As in the previous section, we take the comma operator to be commutative and associative, allowing us to view (for example) syntactic entities $X$ as multisets.

In the first-order setting, we syntactically distinguish between bound variables and free variables in first-order formulae, using $\{x, y, z, \ldots\}$ for bound variables and replacing the occurrence of free variables in formulae with parameters $\{a, b, c, \ldots\}$. For example, we would use $p(a) \far p(b), [\bot \far \forall x q(x,b)]$ instead of the sequent $p(x) \far p(y), [\bot \far \forall x q(x,y)]$ in a nested derivation (where the free variable $x$ has been replaced by the parameter $a$ and $y$ has been replaced by $b$).




Nested sequents are often written as $\Sigma\{X \far Y, [\Sigma_{0}], \ldots, [\Sigma_{n}] \}$, which indicates that $X \far Y, [\Sigma_{0}], \ldots, [\Sigma_{n}]$ occurs at some depth in the nestings of the sequent $\Sigma$; e.g. if $\Sigma$ is taken to be $p(a) \far [\bot \far \forall x q(x,b), [ \far \top ]]$, then both $\Sigma\{\bot \far \forall x q(x,b)\}$ and $\Sigma\{ \far \top\}$ are correct representations of $\Sigma$ in our notation. The nested calculi are given in Fig.~\ref{fig:nested-calculus-propositional}.

\begin{figure}
\noindent\hrule

\begin{center}
\begin{tabular}{c c c}
\AxiomC{} \RightLabel{$\id$}
\UnaryInfC{$\Sigma \lcut X, p \far p, Y \rcut$}
\DisplayProof

&

\AxiomC{$\Sigma \lcut X, A,B \far Y \rcut $}
\RightLabel{$\conrl$}
\UnaryInfC{$\Sigma \lcut X, A \land B \far Y \rcut$}
\DisplayProof

&

\AxiomC{$\Sigma \lcut X \far A,B, Y \rcut $}
\RightLabel{$\disrr$}
\UnaryInfC{$\Sigma \lcut X \far A\lor B, Y \rcut$}
\DisplayProof
\end{tabular}
\end{center}

\begin{center}
\begin{tabular}{c c}
\AxiomC{$\Sigma \lcut X, A \far Y \rcut$}
\AxiomC{$\Sigma \lcut X, B \far Y \rcut$}
\RightLabel{$\disrl$}
\BinaryInfC{$\Sigma \lcut X, A \lor B \far Y \rcut$}
\DisplayProof

&

\AxiomC{$\Sigma \lcut X \far A, Y \rcut$}
\AxiomC{$\Sigma \lcut X \far B, Y \rcut$}
\RightLabel{$\conrr$}
\BinaryInfC{$\Sigma \lcut X \far A\land B, Y \rcut$}
\DisplayProof

\end{tabular}
\end{center}

\begin{center}
\begin{tabular}{c c c}
\AxiomC{$\Sigma \lcut X \far Y, [A \far ] \rcut$}
\RightLabel{$(\neg_{r})$}
\UnaryInfC{$\Sigma \lcut X \far Y, \neg A \rcut$}
\DisplayProof

&

\AxiomC{$\Sigma \lcut X \far A, Y \rcut$}
\RightLabel{$(\neg_{l})$}
\UnaryInfC{$\Sigma \lcut X, \neg A \far Y \rcut$}
\DisplayProof

&

\AxiomC{$\Sigma\{X \far Y, [X', A \far Y']\}$}
\RightLabel{$\lift$}
\UnaryInfC{$\Sigma\{X, A \far Y, [X' \far Y']\}$}
\DisplayProof
\end{tabular}
\end{center}

\begin{center}
\begin{tabular}{c c}
\AxiomC{$\Sigma \lcut X \far Y, [A \far B] \rcut$}
\RightLabel{$\imprr$}
\UnaryInfC{$\Sigma \lcut X \far A \imp B, Y \rcut$}
\DisplayProof

&

\AxiomC{$\Sigma \lcut X \far A, Y \rcut$}
\AxiomC{$\Sigma \lcut X, B \far Y \rcut$}
\RightLabel{$\imprl$}
\BinaryInfC{$\Sigma \lcut X, A \imp B \far Y \rcut$}
\DisplayProof
\end{tabular}
\end{center}

\begin{center}
\begin{tabular}{c c}
\AxiomC{}
\RightLabel{$(id_{q})$}
\UnaryInfC{$\Sigma \lcut X, p(\vv{a}) \far p(\vv{a}), Y \rcut$}
\DisplayProof

&

\AxiomC{$\Sigma \lcut X \far Y, A[a/x]  \rcut$}
\RightLabel{$(\exists_{r})$}
\UnaryInfC{$\Sigma \lcut X \far Y, \exists x A \rcut$}
\DisplayProof
\end{tabular}
\end{center}

\begin{center}
\begin{tabular}{c c c} 
 
\AxiomC{$\Sigma \lcut X \far Y, A[a/x]  \rcut$}
\RightLabel{$(\forall_{r})^{\dag}$}
\UnaryInfC{$\Sigma \lcut X \far Y, \forall x A \rcut$}
\DisplayProof

&

\AxiomC{$\Sigma \lcut X, A[a/x] \far Y  \rcut$}
\RightLabel{$(\forall_{l})$}
\UnaryInfC{$\Sigma \lcut X, \forall x A \far Y \rcut$}
\DisplayProof

&

\AxiomC{$\Sigma \lcut X, A[a/x] \far Y  \rcut$}
\RightLabel{$(\exists_{l})^{\dag}$}
\UnaryInfC{$\Sigma \lcut X, \exists x A \far Y \rcut$}
\DisplayProof
\end{tabular}
\end{center}

\hrule
\caption{Fitting's nested calculus $\nint$ for propositional intuitionistic logic consists of $\id$, $\conrl$, $\disrr$, $\disrl$, $\conrr$, $(\neg_{r})$, $(\neg_{l})$, $\lift$, $\imprr$, and $\imprl$. All rules taken together give the nested calculus $\nintfocd$~\cite{Fit14}. The side condition $\dag$ states that $a$ does not occur in the conclusion.}
\label{fig:nested-calculus-propositional}
\end{figure}

\begin{theorem}[Soundness and Completeness~\cite{Fit14}]\label{thm:nint-properties} The calculus $\nint$ $(\nintfocd)$ is sound and complete for $\int$ $(\intfocd$, resp.$)$.

\end{theorem}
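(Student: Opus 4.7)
The plan is to route soundness and completeness through the labelled calculi, for which the analogous properties are already in hand via Theorem~1(vi). This requires a correspondence between nested sequents and labelled sequents (essentially the translation developed in Sect.~\ref{section-notation}) and then a simulation of nested rules by labelled derivations in one direction, together with a transformation of labelled derivations into nested derivations in the other.

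For soundness of $\nint$ (and analogously $\nintfocd$), I would define a translation $\tau$ that recursively assigns fresh labels to the tree of nestings: the root of $\Sigma$ receives label $w_{0}$, each bracketed subsequent $[\Sigma_{i}]$ occurring at label $w$ receives a fresh label $w_{i}$ together with the relational atom $w \leq w_{i}$ (and, in the first-order case, the domain atoms $\vv{a} \in D_{w_{i}}$ for the parameters that have been introduced into the successor). I then prove by induction on nested derivations that the $\tau$-image of every nested rule is derivable in the corresponding labelled calculus. The propositional rules $\id$, $\imprr$, $\imprl$, $\conrl$, $\conrr$, $\disrl$, $\disrr$ translate transparently; the first-order rules $\allr$, $\alll$, $\existsr$, $\existsl$ translate using the built-in domain atoms and the labelled eigenvariable conditions $\dag_{2}$, $\dag_{3}$. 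The interesting case is $\lift$: moving $A$ from a parent to a nested child world corresponds, under $\tau$, to adding $v : A$ on the premise of some $w \leq v, w : A$, which is derivable by the generalized identity clause Theorem~1(i)(a)/(c) combined with cut or invertibility. Soundness of the nested calculi then follows from that of $\lint$ and $\lintfocd$.

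For completeness, I would start from a valid formula $A$, invoke Theorem~1(vi) to obtain a labelled derivation of ${} \Rightarrow w : A$ in $\lint$ or $\lintfocd$, and then transform this derivation into a nested derivation of ${} \far A$ by inverting $\tau$. The main obstacle, and in my view the genuine content of the theorem in this setting, is that a labelled derivation may introduce relational structure (via $\refl$, $\trans$, $\nd$, $\cd$) that is not tree-shaped, whereas nested sequents correspond precisely to trees rooted at a distinguished world. The plan is to show that applications of these relational rules can be permuted upward and absorbed, so that the resulting derivation's relational skeleton is a forest rooted at the world being proven; once this is achieved, each labelled rule application can be replayed as a nested rule, with $\imprl$ and $\alll$ at a non-immediate successor replayed via $\lift$ to push the principal formula down to the correct nesting depth. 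This structural-rule elimination, together with the tree-shape argument, is precisely the program that Sects.~\ref{section-3} and~\ref{section-4} carry out in detail; the technical heart lies in ensuring that the permutations terminate and preserve the eigenvariable side-conditions $\dag_{1}$, $\dag_{2}$, $\dag_{3}$, especially in the first-order case where $\nd$ and $\cd$ interact nontrivially with the parameter bookkeeping of $\allr$ and $\existsl$.
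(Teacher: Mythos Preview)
The paper does not prove this theorem at all: it is stated as a citation of Fitting~\cite{Fit14}, with no accompanying argument. Your proposal, by contrast, supplies an independent proof by routing through the labelled calculi and the structural-rule-elimination machinery of Sects.~\ref{section-3} and~\ref{section-4}. That is a genuinely different route, and in fact it is essentially the paper's main contribution run in reverse: rather than taking Fitting's result as input and showing that the extracted calculi $\nint^{*}$, $\nintfocd^{*}$ match it, you use the extraction itself to recover soundness and completeness of Fitting's calculi from Theorem~\ref{thm:lint-properties}(vi). The soundness direction you sketch (translate nested derivations to labelled ones, handling $\lift$ via Theorem~\ref{thm:lint-properties}(i) plus cut) is correct and self-contained.

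There is one gap on the completeness side. The extraction in Sects.~\ref{section-3}--\ref{section-4} produces $\nint^{*}$ and $\nintfocd^{*}$, which differ from Fitting's $\nint$ and $\nintfocd$ in that several rules ($(\neg_{l})$, $\imprl$, $\lift$, $(\forall_{l})$, $(\exists_{r})$) retain copies of the principal formula in the premise (compare Fig.~\ref{fig:nested-calculus-propositional} with App.~\ref{app:new-nested-calculi}). To conclude completeness of $\nint$ and $\nintfocd$ themselves you still need to show that every $\nint^{*}$-derivation can be replayed in $\nint$; this amounts to showing that the extra principal copies can be removed, which in turn requires admissibility of contraction (or a direct simulation argument) in Fitting's calculi. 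The paper acknowledges this step but only asserts that ``through additional work, one can eliminate such copies of principal formulae''; your proposal does not address it either. Without it, your argument establishes completeness of $\nint^{*}$ and $\nintfocd^{*}$ (i.e.\ Corollary~\ref{cor:nint-nintfocd-inherit-proeprties}(i)), not of $\nint$ and $\nintfocd$.
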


\section{Translating Notation: Labelled and Nested}\label{section-notation}

It is instructive to observe that both nested and labelled sequents can be viewed as graphs (with the former restricted to trees and the latter more general). Graphs of sequents are significant for two reasons: the first (technical) reason is that graphs can be leveraged to switch from labelled to nested notation; thus, graphs will play a role in deriving our nested calculi from our labelled calculi. The second reason is that graphs offer insight into \emph{why} structural rule elimination yields nested systems, which will be discussed in the next section. 

It is straightforward to define the graph of each type of sequent. To do this, we first introduce a bit of notation and define the multiset $\Gamma \restriction w := \{A \ | \ w : A \in \Gamma\}$. For a labelled sequent $\Lambda = \R, \Gamma \Rightarrow \Delta$, the \emph{graph} $G(\Lambda)$ is the tuple $( V, E, \lambda )$, where (i) $V = \{w \ | \ \text{$w$ is a label in $\Lambda$.}\}$, (ii) $(w,v) \in E$ iff $w \leq v \in \R$, and
$$
(iii) \quad \lambda = \{(w,\Gamma' \Rightarrow \Delta') \ | \ \Gamma' = \Gamma \restriction w \text{, } \Delta' = \Delta \restriction w \text{, and } w \in V\}.$$
For a nested sequent, the graph is defined inductively on the structure of the nestings; we use strings $\sigma$ of natural numbers to denote vertices in the graph, similar to the prefixes used in prefixed tableaux~\cite{Fit72,Fit12,Fit14}.

\emph{Base case.} Let our nested sequent be of the form $X \far Y$ with $X$ and $Y$ multisets of formulae. Then, $G_{\sigma}(X \far Y) := (V_{\sigma},E_{\sigma},\lambda_{\sigma})$, where (i) $V_{\sigma} := \{\sigma\}$, (ii) $E_{\sigma} := \emptyset$, and (iii) $\lambda_{\sigma} := \{(\sigma, X \far Y)\}$.

\emph{Inductive step.} Suppose our nested sequent is of the form $X \far Y, [\Sigma_{0}], \ldots, [\Sigma_{n}]$. We assume that each $G_{\sigma i}(\Sigma_{i}) = ( V_{ \sigma i}, E_{\sigma i}, \lambda_{\sigma i} )$ (with $i \in \{0, \ldots, n\}$) is already defined, and define $G_{\sigma}(X \far Y, [\Sigma_{0}], \ldots, [\Sigma_{n}]) := ( V_{\sigma}, E_{\sigma}, \lambda_{\sigma} )$ as follows:
$$
(i) \quad V_{\sigma} := \{\sigma\} \cup \displaystyle{\bigcup_{0 \leq i \leq n} V_{\sigma i}} \qquad (ii) \quad E_{\sigma} := \{(\sigma,\sigma i) \ | \ 0 \leq i \leq n \} \cup \displaystyle{\bigcup_{0 \leq i \leq n} E_{\sigma i}}
$$
$$(iii) \quad \lambda_{\sigma} := \{(\sigma, X \far Y)\} \cup \displaystyle{\bigcup_{0 \leq i \leq n} \lambda_{\sigma i}}
$$



\begin{definition} Let $G_{0} = (V_{0},E_{0},\lambda_{0})$ and $G_{1} = (V_{1},E_{1},\lambda_{1})$ be two graphs. We define an \emph{isomorphism} $f : V_{0} \mapsto V_{1}$ between $G_{0}$ and $G_{1}$ to be a function such that: (i) $f$ is bijective, (ii) $(x,y) \in E_{0}$ iff $(fx,fy) \in E_{1}$, (iii) $\lambda_{0}(x) = \lambda_{1}(fx)$. We say $G_{0}$ and $G_{1}$ are \emph{isomorphic} iff there exists an isomorphism between them.

\end{definition}

Although the formal definitions above may appear somewhat cumbersome, the example below shows that transforming a sequent into its graph---or conversely, obtaining the sequent from its graph---is relatively straightforward. 

\begin{example}\label{ex:graphs-of-sequents} The nested sequent $\Sigma$ is given below with its corresponding graph $G_{0}(\Sigma)$ shown on the left, and the labelled sequent $\Lambda$ is given below with its corresponding graph $G(\Lambda)$ on the right. Regarding the labelled sequent, we assume that $\Gamma_{i}$ and $\Delta_{i}$ consist solely of formulae labelled with $w_{i}$ (for $i \in \{0,1,2,3\}$).
$$
\Sigma = X_{0} \far Y_{0}, [X_{1} \far Y_{1}, [X_{2} \far Y_{2}]], [X_{3} \far Y_{3}]
$$
\vspace*{0em}
\begin{center}
\begin{tabular}{c @{\hskip 1em} c}
\xymatrix{
  \underset{0}{\boxed{X_{0} \far Y_{0}}}\ar@{->}[r]\ar@{>}[d] & \underset{01}{\boxed{X_{3} \far Y_{3}}}		\\
 	\underset{00}{\boxed{X_{1} \far Y_{1}}}\ar@{>}[r] & \underset{000}{\boxed{X_{2} \far Y_{2}}}   
}

&

\xymatrix@C=1em{
  \underset{w_{0}}{\boxed{\Gamma_{0} \restriction w_{0} \Rightarrow \Delta_{0} \restriction w_{0}}}\ar@{->}[r]\ar@{>}[d]\ar@(ul,u)\ar@{>}[dr] & \underset{w_{3}}{\boxed{\Gamma_{3} \restriction w_{3} \Rightarrow \Delta_{3} \restriction w_{3}}}		\\
 	\underset{w_{1}}{\boxed{\Gamma_{1} \restriction w_{1} \Rightarrow \Delta_{1} \restriction w_{1}}}\ar@{>}[r] & \underset{w_{2}}{\boxed{\Gamma_{2} \restriction w_{2} \Rightarrow \Delta_{2} \restriction w_{2}}}   
}
\end{tabular}
\end{center}
\begin{small}
$$
\Lambda = w_{0} \leq w_{0}, w_{0} \leq w_{1}, w_{1} \leq w_{2}, w_{0} \leq w_{2}, w_{0} \leq w_{3}, \Gamma_{0}, \Gamma_{1} , \Gamma_{2}, \Gamma_{3} \Rightarrow \Delta_{0}, \Delta_{1}, \Delta_{2}, \Delta_{3}   
$$
\end{small}
\end{example}

In the above example there is a loop from $w_{0}$ to itself in the graph of the labelled sequent; furthermore, there is an undirected cycle occurring between $w_{0}$, $w_{1}$, and $w_{2}$. As will be explained in the next section (specifically, Thm.~\ref{thm:treelike-derivations}), the $\refl$ and $\trans$ rules allow for such structures to appear in labelled derivations of theorems; however, the elimination of these rules in the labelled calculus has the effect that such structures \emph{can no longer} occur in the labelled derivation of a theorem. Consequently, it will be seen that eliminating such rules yields a labelled derivation where every sequent has a purely \emph{treelike} structure (see Def.~\ref{def:treelike}). This implies that each labelled sequent in the derivation has a graph isomorphic to the graph of a nested sequent. It is this idea which ultimately permits the extraction of our nested calculi from our labelled calculi.

\begin{definition}\label{def:treelike} Let $\Lambda$ be a labelled sequent and $G(\Lambda) = (V,E,\lambda)$. We say that $\Lambda$ is \emph{treelike} iff there exists a unique vertex $w \in V$, called the \emph{root}, such that there exists a unique path from $w$ to every other vertex $v \in V$.\footnote{Treelike sequents are equivalently characterized as sequents with graphs that are: (i) connected, (ii) acyclic, and (iii) contain no backwards branching.}

\end{definition}

If we take the graph of a treelike labelled sequent, then it can be viewed as the graph of a nested sequent, as the example below demonstrates.

\begin{example}\label{ex:switching-notation} The treelike labelled sequent $\Lambda'$ and its graph are given below. We assume that $\Gamma_{i}$ and $\Delta_{i}$ contain only formulae labelled with $w_{i}$ (for $i \in \{0,1,2,3\}$).
$$
\Lambda' = w_{0} \leq w_{1}, w_{1} \leq w_{2}, w_{0} \leq w_{3}, \Gamma_{0}, \Gamma_{1} , \Gamma_{2}, \Gamma_{3} \Rightarrow \Delta_{0}, \Delta_{1}, \Delta_{2}, \Delta_{3}   
$$
\begin{center}
\begin{tabular}{c} 

\xymatrix@C=1em{
\underset{w_{2}}{\boxed{\Gamma_{2}' \Rightarrow \Delta_{2}'}} & & \underset{w_{1}}{\boxed{\Gamma_{1}' \Rightarrow \Delta_{1}'}}\ar@{>}[ll] & & \underset{w_{0}}{\boxed{\Gamma_{0}' \Rightarrow \Delta_{0}'}}\ar@{->}[rr]\ar@{>}[ll] & & \underset{w_{3}}{\boxed{\Gamma_{3}' \Rightarrow \Delta_{3}'}}		
}
\end{tabular}
\end{center}
Also, we assume $\Gamma_{i}' = \Gamma_{i} \restriction w_{i} = X_{i}$ and $\Delta_{i} ' = \Delta_{i} \restriction w_{i} = Y_{i}$ (for $i \in \{0,1,2,3\}$). Therefore, the above graph is isomorphic to the graph of the nested sequent in Ex.~\ref{ex:graphs-of-sequents}, meaning that $\Lambda$ can be translated as that nested sequent.
\end{example}

\begin{definition}[The Translation $\switch$]\label{def:switch} Let $\Lambda$ be a treelike labelled sequent. We define $\switch(\Lambda)$ to be the nested sequent obtained from the graph $G(\Lambda)$.
\end{definition}

\section{Deriving $\nint$ from $\lint$}\label{section-3}


We begin by presenting two useful lemmata that will be referenced in the current and next section while deriving $\nint$ from $\lint$ and $\nintfocd$ from $\lintfocd$. All rules mentioned in the lemmata can be found in Fig.~\ref{fig:new-rules-focd} below. The proofs of both lemmata can be found in App.~\ref{app:proofs}.

\begin{lemma}
\label{lem:extended-lint-properties} The calculus $\lint + \{(id^{*}), (\neg_{l}), (\neg_{r}), (\imp^{*}_{l}), \lift\}$ and the calculus $\lintfocd + \{\idfonew, (\neg_{l}), (\neg_{r}), (\imp^{*}_{l}), (\forall_{l}^{*}), (\forall^{*}_{r}), (\exists_{r}^{*}), \lift\}$ have the following properties:
\begin{itemize}

\item[$(i)$] All sequents of the form $\R,w \leq v, \vv{a} \in D_{w}, w : A(\vv{a}), \Gamma \Rightarrow v : A(\vv{a}), \Delta$ and $\R, \vv{a} \in D_{w}, w:A(\vv{a}),\Gamma \Rightarrow \Delta, w :A(\vv{a})$ are derivable;\footnote{In the propositional setting, these sequents become $\R,w \leq v, w : A, \Gamma \Rightarrow v : A, \Delta$ and $\R, w:A,\Gamma \Rightarrow \Delta, w :A$, respectively.}







\item[(ii)] The rules $\{(lsub),(psub),\wk,(ctr_{R}),(ctr_{F_{r}})\}$ are hp-admissible;

\item[(iii)] With the exception of $\{\conrl,\existsl\}$, all rules are hp-invertible;

\item[(iv)] The rules $\{\conrl,\existsl\}$ are invertible;

\item[(v)] The rule $(ctr_{F_{l}})$ is admissible.

\end{itemize}
\end{lemma}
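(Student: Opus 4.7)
The plan is to lift each of the five properties already established in Theorem~\ref{thm:lint-properties} for $\lint$ and $\lintfocd$ to the extended calculi by adding the new rules case-by-case to the existing inductive arguments. The bulk of each proof is already carried out in~\cite{DycNeg12} (or in App.~\ref{app:proofs} for the first-order setting); the novelty is just the verification that the starred rules, $(\neg_{l})$, $(\neg_{r})$, and especially $\lift$ behave as expected.

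For property~(i), I would induct on the complexity of the (possibly first-order) formula $A(\vv{a})$. The base case is exactly the axiom $\idfonew$ (or $(id^{*})$ propositionally). For compound $A$, I would build the derivation bottom-up using the appropriate hp-invertibility results from (iii); the $\supset$-case uses $\lift$ together with $(\imp_{l}^{*})$ to push the principal formula $w{:}A\imp B$ down along the edge $w\leq v$ so that the inductive hypothesis can be applied, and the $\forall$-case uses $(\forall_{r}^{*})$ analogously. For property~(ii), hp-admissibility of $(lsub)$, $(psub)$, and $\wk$ is a routine induction on derivation height: one checks each new rule is closed under uniform substitution (using fresh eigenvariables to respect $\dag_{1},\dag_{2},\dag_{3}$) and under adding relational atoms or labelled formulae. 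Hp-admissibility of $(ctr_{R})$ and $(ctr_{F_{r}})$ then follows from the standard Negri-style argument, invoking hp-invertibility of the right-hand and relational rules (property (iii)) to permute the contraction above the last rule of the derivation.

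For property~(iii), hp-invertibility of every rule other than $\conrl$ and $\existsl$ is checked one rule at a time. Rules whose premise differs from the conclusion only by the addition of material (e.g. $\imprr$, $(\neg_{r})$, $(\forall_{r}^{*})$, $(\exists_{r}^{*})$) admit a direct proof via the hp-admissibility of $\wk$. Rules like $\lift$, $(\neg_{l})$, and $(\imp_{l}^{*})$ that only rearrange a labelled formula across an edge are handled similarly once one observes that the conclusion already contains all the material needed by the premise. Property~(iv) then handles $\conrl$ and $\existsl$ by an indirect argument: given a derivation of the conclusion, one obtains a derivation of the premise by cutting against the generalized axioms from (i), using admissibility of $\cut$ inherited from Theorem~\ref{thm:lint-properties}(v).

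The main obstacle is property~(v), admissibility of $(ctr_{F_{l}})$. The standard height-preserving contraction argument breaks precisely because $\conrl$ and $\existsl$ are not hp-invertible, so simply permuting the contraction upward is not possible when the last rule acts on one of the contracted copies. My plan is to induct on the lexicographic pair $(|A|, h)$, where $|A|$ is the size of the contracted formula and $h$ the height of the derivation. In all cases except when the last rule is $\conrl$ or $\existsl$ with principal formula the contracted one, hp-invertibility lets the induction go through on $h$ alone. In the problematic cases, ordinary invertibility from (iv) yields a derivation of the premise of the last rule, in which the contracted formula has been decomposed into strictly smaller subformulae, so the outer induction on $|A|$ applies. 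The $\lift$ rule requires some care because it transports a labelled formula between worlds, but since $\lift$ is hp-invertible by (iii), its cases reduce cleanly to the inductive hypothesis.
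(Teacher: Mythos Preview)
Your outline for (i), (ii), (iii), and (v) is broadly aligned with the paper, though in (iii) you have the two groups of rules reversed: the rules whose hp-invertibility follows immediately from hp-admissibility of $\wk$ are precisely those whose \emph{premise} is the conclusion plus extra material, namely $\lift$, $(\neg_{l})$, $(\imp_{l}^{*})$, $(\forall_{l}^{*})$, $(\exists_{r}^{*})$, $\imprl$, $\alll$, $\existsr$, $\refl$, $\trans$, $\nd$, $\cd$. The rules $\imprr$, $(\neg_{r})$, $(\forall_{r}^{*})$, $\conrr$, $\disrl$, $\disrr$, $\allr$ are the ones that need a genuine induction on height, since their conclusions contain compound formulae absent from the premises.

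The substantive gap is in (iv). You propose to obtain invertibility of $\conrl$ and $\existsl$ by cutting against the generalized axioms of (i), citing cut-admissibility ``inherited from Theorem~\ref{thm:lint-properties}(v)''. But Theorem~\ref{thm:lint-properties}(v) establishes cut-admissibility only for $\lint$ and $\lintfocd$, not for the extended calculi containing $\lift$ and the starred rules. To make your route work you would first have to show that every new rule is admissible in the original calculus (so that derivability in the two systems coincides and cut transfers); this is doable but is an extra argument you have not supplied, and it partially defeats the purpose of working in the extended system.

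The paper avoids cut entirely and proves (iv) by a direct induction on height, but with a \emph{strengthened} hypothesis: rather than inverting a single occurrence of $w{:}A\wedge B$ (resp.\ $w{:}\exists x A$), one shows that any finite family $w_{0}{:}A\wedge B,\ldots,w_{n}{:}A\wedge B$ on the left can be simultaneously decomposed. The strengthening is forced by $\lift$: when $\lift$ is the last rule with principal formula $u{:}A\wedge B$, the premise contains \emph{two} copies of $A\wedge B$ (at $u$ and at the target label), so a single-copy IH is too weak. This is also exactly why hp-invertibility fails here---the resolution uses two applications of $\lift$ (and, for $\existsl$, additionally $(psub)$ and $\nd$), increasing height. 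Your treatment of (v) via the lexicographic pair $(|A|,h)$ then matches the paper and goes through once (iv) is secured in this way.
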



\begin{lemma}
\label{lm:structural-rules-permutation} (i) $\refl$ and $\trans$ can be permuted above each rule in the set $\{\botr, \conrl, \conrr, \disrl, \disrr, \imprr, (\neg_{l}), (\neg_{r}), (\exists_{l}), (\exists_{r}), (\forall_{r})\}$. (ii) $\nd$ and $\cd$ can be permuted above $\{\botr, \conrl, \conrr, \disrl, \disrr, \imprl, \imprr, (\neg_{l}), (\neg_{r}), (\exists_{l}), (\forall_{r})\}$.

\end{lemma}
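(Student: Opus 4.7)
The plan is to proceed by case analysis on the pair consisting of a structural rule and a logical rule. For each pair, I would show that a derivation ending in the structural rule applied immediately below the logical rule can be transformed into a derivation ending in the logical rule applied below the structural rule; that is, given a derivation $\D$ whose last two steps are (log) producing $\Lambda'$ from $\Lambda$ followed by (struct) producing $\Lambda''$ from $\Lambda'$, I construct a derivation whose last two steps are (struct) producing $\Lambda^*$ from $\Lambda$ (where $\Lambda^*$ differs from $\Lambda$ only by the deletion of the same relational atom that (struct) removed in passing from $\Lambda'$ to $\Lambda''$) followed by (log) producing $\Lambda''$. For the multi-premise logical rules $\conrr$, $\disrl$, and $\imprl$, (struct) is applied to each premise individually before (log) fires.

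The key observation driving all cases is that $\refl$ and $\trans$ only manipulate frame atoms of the form $w \leq v$, while $\nd$ and $\cd$ only manipulate domain atoms of the form $a \in D_w$; moreover, the logical rules listed in (i) and (ii) are precisely those whose applicability does not depend on the presence of any specific such relational atom in the conclusion. The omissions --- $\id$, $\idfo$, $\imprl$, $\alll$ from (i), and $\id$, $\idfo$, $\existsr$, $\alll$ from (ii) --- are exactly the rules whose principal relational atom is of the type being manipulated by the structural rule in question, which is precisely where a permutation could genuinely fail because the structural rule would have been what made the logical rule applicable in the first place.

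The only subtlety concerns the eigenvariable side conditions for $\imprr$, $\negr$, $\allr$, and $\existsl$. For these rules the fresh label or parameter must be absent from the conclusion of the logical rule. After permutation, the new conclusion of the logical rule is $\Lambda''$, which differs from the original conclusion $\Lambda'$ only by the removal of one relational atom; thus absence from $\Lambda'$ entails absence from $\Lambda''$ and the side condition is preserved. Accordingly, I do not anticipate any genuine obstacle: the proof reduces to a finite enumeration of pairs, and in each case the principal formula or atom of the logical rule is disjoint from the atom moved by the structural rule, so the two rule applications commute cleanly.
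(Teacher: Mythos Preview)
Your proposal is correct and follows essentially the same approach as the paper: the paper's own proof is a two-sentence observation that none of the listed rules have active relational atoms of the relevant form (order atoms for (i), domain atoms for (ii)) in their conclusion, so the structural rules permute freely. Your treatment is more explicit---you spell out the case-analysis structure and, usefully, verify that the eigenvariable side conditions survive the permutation---but the underlying idea is identical.
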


\begin{proof} Claim (i) follows from the fact that none of the rules mentioned have active relational atoms of the form $w \leq u$ in the conclusion, and so, $\refl$ and $\trans$ may be freely permuted above each rule. Claim (ii) follows from the fact that none of the rules mentioned contain active domain atoms in the conclusion, allowing for $\nd$ and $\cd$ to be permuted above each rule.
\qed
\end{proof}

\begin{figure}
\noindent\hrule

\begin{center}
\resizebox{\columnwidth}{!}{
\begin{tabular}{c c}
\AxiomC{}
\RightLabel{$(id_{q}^{*})^{\dag_{1}}$}
\UnaryInfC{$\R, a_{0} \in D_{v_{0}}, \ldots, a_{n} \in D_{v_{n}}, w : p(\vv{a}), \Gamma \Rightarrow w : p(\vv{a}), \Delta$}
\DisplayProof

&

\AxiomC{}
\RightLabel{$\idnew$}
\UnaryInfC{$\R, w : p, \Gamma \Rightarrow \Delta, w : p$}
\DisplayProof
\end{tabular}
}
\end{center}

\begin{center}
\begin{tabular}{c @{\hskip 1em} c}

\AxiomC{$\R,w \leq v, v:A, \Gamma \Rightarrow \Delta$}
\RightLabel{$(\neg_{r})^{\dag_{2}}$}
\UnaryInfC{$\R, \Gamma \Rightarrow \Delta, w : \neg A$}
\DisplayProof

&

\AxiomC{$\R, w: \neg A, \Gamma \Rightarrow \Delta, w:A$}
\RightLabel{$(\neg_{l})$}
\UnaryInfC{$\R, w: \neg A, \Gamma \Rightarrow \Delta$}
\DisplayProof
\end{tabular}
\end{center}

\begin{center}
\begin{tabular}{c c}
\AxiomC{$\R, a \in D_{v}, w : A[a/x], w : \forall x A, \Gamma \Rightarrow \Delta$}
\RightLabel{$(\forall_{l}^{*})^{\dag_{3}}$}
\UnaryInfC{$\R, a \in D_{v}, w : \forall x A, \Gamma \Rightarrow \Delta$}
\DisplayProof
&
\AxiomC{$\R, a \in D_{w}, \Gamma \Rightarrow w : A[a/x], \Delta$}
\RightLabel{$(\forall^{*}_{r})^{\dag_{4}}$}
\UnaryInfC{$\R, \Gamma \Rightarrow w : \forall x A, \Delta$}
\DisplayProof
\end{tabular}
\end{center}

\begin{center}
\begin{tabular}{c c}
\AxiomC{$\R, w \leq u, w:A, u:A, \Gamma \Rightarrow \Delta$}
\RightLabel{$\lift$}
\UnaryInfC{$\R, w \leq u, w:A, \Gamma \Rightarrow \Delta$}
\DisplayProof

&

\AxiomC{$\R, a \in D_{v}, \Gamma \Rightarrow \Delta, w: A[a/x], w: \exists x A$}
\RightLabel{$(\exists_{r}^{*})^{\dag_{3}}$}
\UnaryInfC{$\R, a \in D_{v}, \Gamma \Rightarrow \Delta, w: \exists x A$}
\DisplayProof
\end{tabular}
\end{center}

\begin{center}
\AxiomC{$\R, w : A \imp B, \Gamma \Rightarrow \Delta, w : A$}
\AxiomC{$\R, w : A \imp B, w : B, \Gamma \Rightarrow \Delta$}
\RightLabel{$(\imp_{l}^{*})$}
\BinaryInfC{$\R, w : A \imp B, \Gamma \Rightarrow \Delta$}
\DisplayProof
\end{center}

\hrule
\caption{Rules used to derive $\nint$ and $\nintfocd$ from $\lint$ and $\lintfocd$, respectively. The side condition $\dag_{1}$ states that there exists a path of relational atoms (not necessarily directed) from $v_{i}$ to $w$ for each $i \in \{0, \ldots, n\}$ in $\R$; $\dag_{2}$ states that $v$ does not occur in the conclusion; $\dag_{3}$ stipulates that there exists a path of relational atoms (not necessarily directed) from $v$ to $w$ occurring in $\R$; $\dag_{4}$ states that $a$ does not occur in the conclusion.\protect\footnotemark}
\label{fig:new-rules-focd}
\end{figure}

\footnotetext{Let $u \sim v \in \{u \leq v, v \leq u\}$. A path of relational atoms (not necessarily directed) from a label $w$ to $v$ occurs in a sequent $\Lambda$ if and only if $w = v$, $w \sim v$, or there exist labels $z_{i}$ ($i \in \{0,\ldots,n\}$) such that $w \sim z_{0}, \ldots, z_{n} \sim v$ occurs in $\Lambda$.}


Deriving the calculus $\nint$ from $\lint$ depends on a crucial observation made in~\cite{CiaLyoRam18} concerning labelled derivations: \emph{rules such as $\refl$ and $\trans$ allow for theorems to be derived in proofs containing non-treelike labelled sequents}. To demonstrate this fact, observe the following derivation in $\lint$:
\begin{center}
\AxiomC{$w \leq v, v \leq v, v : p \Rightarrow v : p$}
\RightLabel{$\refl$}
\UnaryInfC{$w \leq v, v : p \Rightarrow v : p$}
\RightLabel{$\imprr$}
\UnaryInfC{$\Rightarrow w : p \imp p$}
\DisplayProof
\end{center}
The initial sequent is non-treelike due to the presence of the $v \leq v$ relational atom; however, the application of $\refl$ deletes this structure from the initial sequent and produces a treelike sequent as the conclusion.

In fact, it is true in general that every labelled derivation of a theorem (i.e., a derivation whose end sequent is of the form $\Rightarrow w : A$) can be partitioned into a top derivation consisting of non-treelike sequents, and a bottom derivation consisting of treelike sequents. Note that if a derivation ends with a sequent of the form $\Rightarrow w : A$, then the derivation must necessarily contain a bottom treelike fragment since $G(\Rightarrow w : A)$ is a tree. By contrast, the top non-treelike fragment of the derivation may be empty (e.g. the derivation of $\Rightarrow w : \bot \imp A$).

To demonstrate why the aforementioned partition always exists, suppose you are given a labelled derivation of a theorem $w : A$ and consider the derivation in a bottom-up manner. The graph of the end sequent $\Rightarrow w : A$ is evidently treelike by Def.~\ref{def:treelike}, and observe the each bottom-up application of a rule in $\lint$---with the exception of $\refl$ and $\trans$---will produce a treelike sequent (see Thm.~\ref{thm:treelike-derivations} for auxiliary details). If, however, at some point in the derivation $\refl$ or $\trans$ is applied, then all sequents above the inference will inherit the (un)directed cycle produced by the rule, thus producing the non-treelike fragment of the proof.

One can therefore imagine that permuting instances of the $\refl$ and $\trans$ rules upwards in a given derivation would potentially increase the bottom treelike fragment of the derivation and decrease the top non-treelike fragment. As it so happens, this intuition is correct so long as we choose \emph{adequate} rules---that bottom-up preserve the treelike structure of sequents---to replace certain instances of the $\refl$ and $\trans$ rules in a derivation, when necessary. We will first examine permuting instances of the $\refl$ rule, and motivate which adequate rules we ought to add to our calculus in order to achieve the complete elimination of $\refl$. After, we will turn our attention towards eliminating the $\trans$ rule, and conclude the section by leveraging our results to extract $\nint$.

Let us first observe an application of $\refl$ to an initial sequent obtained via the $\id$ rule. There are two possible cases to consider: either the relational atom principal in the initial sequent is active in the $\refl$ inference (shown below left), or it is not (shown below right):
\vspace*{-1em}
\begin{center}
\begin{tabular}{c c}
\AxiomC{}
\RightLabel{$\id$}
\UnaryInfC{$\R, w \leq w, w : p, \Gamma \Rightarrow \Delta, w : p$}
\RightLabel{$\refl$}
\UnaryInfC{$\R, w : p, \Gamma \Rightarrow \Delta, w : p$}
\DisplayProof

&

\AxiomC{}
\RightLabel{$\id$}
\UnaryInfC{$\R, u \leq u, w \leq v, w : p, \Gamma \Rightarrow \Delta, v : p$}
\RightLabel{$\refl$}
\UnaryInfC{$\R, w \leq v, w : p, \Gamma \Rightarrow \Delta, v : p$}
\DisplayProof
\end{tabular}
\end{center}
In the case shown above right, the end sequent is an instance of the $\id$ rule, regardless of if $u = w$, $u = v$, or $u$ is distinct from $w$ and $v$. The case shown above left, however, indicates that we ought to add the $\idnew$ rule (see Fig.~\ref{fig:new-rules-focd}) to our calculus if we aim to eliminate $\refl$ from any given derivation.

Additionally, notice that an application of $\refl$ to $\idnew$ or $\botr$ produces another instance of $\idnew$ or $\botr$, regardless of if $v = w$ or $v \neq w$.
\begin{center}
\begin{tabular}{c @{\hskip 2em} c}
\AxiomC{}
\RightLabel{$\idnew$}
\UnaryInfC{$\R, v \leq v, w : p, \Gamma \Rightarrow \Delta, w : p$}
\RightLabel{$\refl$}
\UnaryInfC{$\R, w : p, \Gamma \Rightarrow \Delta, w : p$}
\DisplayProof

&

\AxiomC{}
\RightLabel{$\botr$}
\UnaryInfC{$\R, v \leq v, w : \bot, \Gamma \Rightarrow \Delta$}
\RightLabel{$\refl$}
\UnaryInfC{$\R, w : \bot, \Gamma \Rightarrow \Delta$}
\DisplayProof
\end{tabular}
\end{center}
The above facts, coupled with Lem.~\ref{lm:structural-rules-permutation}, imply that any application of $\refl$ to an initial sequent, produces an initial sequent.

Concerning the remaining rules of $\lint$, we need only investigate the permutation of $\refl$ above the $\imprl$ rule, if we rely on Lem.~\ref{lm:structural-rules-permutation}. 
There are two cases: either the relational atom principal in the $\imprl$ inference is active in the $\refl$ inference, or it is not. The latter case is easily resolved, so we observe the former:
\vspace*{-2em}
\begin{center}
\scalebox{.95}{
\begin{tabular}{c}
\AxiomC{$\R, w \leq w, w : A \imp B, \Gamma \Rightarrow \Delta, w : A$}
\AxiomC{$\R, w \leq w, w : A \imp B, w : B, \Gamma \Rightarrow \Delta$}
\RightLabel{$\imprl$}
\BinaryInfC{$\R, w \leq w, w : A \imp B, \Gamma \Rightarrow \Delta$}
\RightLabel{$\refl$}
\UnaryInfC{$\R, w : A \imp B, \Gamma \Rightarrow \Delta$}
\DisplayProof
\end{tabular}
}
\end{center}
Applying $\refl$ to each premise of the $\imprl$ inference yields the following:
\begin{center}
\scalebox{.85}{
\begin{tabular}{c @{\hskip 1em} c}
\AxiomC{$\R, w \leq w, w : A \imp B, \Gamma \Rightarrow \Delta, w : A$}
\RightLabel{$\refl$}
\UnaryInfC{$\R, w : A \imp B, \Gamma \Rightarrow \Delta, w : A$}
\DisplayProof

&

\AxiomC{$\R, w \leq w, w : A \imp B, w : B, \Gamma \Rightarrow \Delta$}
\RightLabel{$\refl$}
\UnaryInfC{$\R, w : A \imp B, w : B, \Gamma \Rightarrow \Delta$}
\DisplayProof
\end{tabular}
}
\end{center}
The above observation suggests that we ought to add the $(\imp_{l}^{*})$ rule (see Fig.~\ref{fig:new-rules-focd}) to our calculus if we wish to permute $\refl$ above the $\imprl$ rule; a single application of the $(\imp_{l}^{*})$ rule to the end sequents above gives the desired conclusion.

With the $(\imp_{l}^{*})$ rule added to our calculus, we may freely permute the $\refl$ rule above any $\imprl$ inference. Still, we must confirm that the $\refl$ rule is permutable with the newly introduced $(\imp_{l}^{*})$ rule, but this is easily verifiable.

On the basis of our investigation, we may conclude the following lemma:

\begin{lemma}\label{lm:refl-admiss} The $\refl$ rule is eliminable in $\lint + \{(id^{*}),(\imp^{*}_{l})\} - \trans$.

\end{lemma}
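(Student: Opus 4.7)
The plan is to prove eliminability by a standard permutation argument: select a topmost application of $\refl$ in a given derivation of $\lint + \{(id^{*}),(\imp^{*}_{l})\} - \trans$ and show it can be removed; iterating on the next topmost instance then yields a $\refl$-free derivation. For each topmost instance, I would proceed by induction on the height $n$ of the subderivation ending at the premise of the $\refl$ inference.

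For the base case ($n=0$), the premise of $\refl$ is an initial sequent obtained via $\id$, $\idnew$, or $\botr$. The case analysis supplied in the paragraphs just before the lemma (with the two subcases for $\id$, according to whether the reflexive atom $w \leq w$ coincides with the relational atom principal in the initial sequent, together with the analogous observations for $\idnew$ and $\botr$) shows that the conclusion of $\refl$ is itself an instance of $\id$, $\idnew$, or $\botr$. Hence the $\refl$ inference can simply be dropped.

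For the inductive step, let $r$ be the rule applied immediately above the selected $\refl$ inference. If $r \in \{\botr, \conrl, \conrr, \disrl, \disrr, \imprr\}$, then by Lem.~\ref{lm:structural-rules-permutation}(i), $\refl$ permutes above $r$, after which the inductive hypothesis applies to the resulting subderivations of strictly smaller height. If $r$ is $\imprl$ and its principal relational atom is not active in the $\refl$ inference, permutation is again immediate; if it is active, we replace the $(\imprl;\refl)$ block by the derivation displayed in the motivation above the lemma statement, namely by pushing $\refl$ into both premises of $\imprl$ (invoking the inductive hypothesis on each) and then applying a single instance of $(\imp_{l}^{*})$ to obtain the same conclusion. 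Finally, for $r \in \{\idnew,(\imp_{l}^{*})\}$, neither rule has an active relational atom of the form $w \leq u$ in its conclusion, so the same reasoning as in the proof of Lem.~\ref{lm:structural-rules-permutation}(i) yields a straightforward upward permutation of $\refl$.

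The main obstacle is not any single case but rather the bookkeeping needed to confirm that every permutation preserves the side conditions of the rules of the extended calculus; in particular, the eigenvariable condition $\dag_{1}$ on $\imprr$ must still hold after $\refl$ is permuted past it. This is automatic, however, because $\refl$ only deletes a relational atom of the form $w \leq w$ and introduces no fresh labels, so it cannot invalidate any freshness condition. Once all cases are handled, the topmost $\refl$ has been removed while the derivation remains within $\lint + \{(id^{*}),(\imp^{*}_{l})\} - \trans$, and iterating the procedure eliminates $\refl$ entirely.
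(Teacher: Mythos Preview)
Your proposal is correct and follows essentially the same approach as the paper: the lemma is justified by the case analysis in the discussion immediately preceding it, and your induction on the height of the subderivation above a topmost $\refl$ instance organizes exactly those cases (initial sequents, the logical rules handled by Lem.~\ref{lm:structural-rules-permutation}(i), the two subcases for $\imprl$, and the trivial permutation past $(\imp^{*}_{l})$). The only minor redundancy is listing $\idnew$ and $\botr$ in the inductive step when they are already covered in the base case, but this does no harm.
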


Let us turn our attention towards eliminating the $\trans$ rule from a labelled derivation. Since our aim is to eliminate \emph{both} $\refl$ and $\trans$ from any derivation, we assume that the rules $\{(id^{*}),(\imp^{*}_{l})\}$ have been added to our calculus.

It is rather simple to verify that $\trans$ permutes with $\botr$ and $(id^{*})$, so we only consider the $\id$ case. As with the $\refl$ rule, there are two cases to consider when permuting $\trans$ above an $\id$ inference: either, the active formula of $\trans$ is principal in $\id$, or it is not. In the latter case, the result of the $\trans$ rule is an initial sequent, implying that the $\trans$ rule may be eliminated from the derivation. The former case proves trickier and is explicitly given below:
\begin{center}
\begin{tabular}{c}
\AxiomC{}
\RightLabel{$\id$}
\UnaryInfC{$\R, w \leq u, u \leq v, w \leq v, w : p, \Gamma \Rightarrow \Delta, v : p$}
\RightLabel{$\trans$}
\UnaryInfC{$\R, w \leq u, u \leq v, w : p, \Gamma \Rightarrow \Delta, v : p$}
\DisplayProof
\end{tabular}
\end{center}
Observe that the end sequent is not an initial sequent as it is not obtainable from an $\id$, $(id^{*})$, or $\botr$ rule. The issue is solved by considering the $\lift$ rule (see Fig.~\ref{fig:new-rules-focd}), 
which allows us to obtain the desired end sequent without the use of $\trans$, as the following derivation demonstrates:
\begin{center}
\AxiomC{}
\RightLabel{$(id^{*})$}
\UnaryInfC{$\R, w \leq u, u \leq v, w:p, u:p, v:p, \Gamma \Rightarrow v : p, \Delta$}
\RightLabel{$\lift$}
\UnaryInfC{$\R, w \leq u, u \leq v, w:p, u:p, \Gamma \Rightarrow v : p, \Delta$}
\RightLabel{$\lift$}
\UnaryInfC{$\R, w \leq u, u \leq v, w:p, \Gamma \Rightarrow v : p, \Delta$}
\DisplayProof
\end{center}
Thus, the addition of $\lift$ to our calculus resolves the issue of permuting $\trans$ above any initial sequent. Nevertheless, by Lem.~\ref{lm:structural-rules-permutation}, we still need to consider the permutation of $\trans$ above the $\imprl$, $(\imp_{l}^{*})$, and $\lift$ rules. The $\trans$ rule and $(\imp_{l}^{*})$ are freely permutable due to the fact that $\trans$ solely affects relational atoms, and $(\imp_{l}^{*})$ solely affects labelled formulae. Also, the following lemma entails that we may omit analyzing the permutation of $\trans$ above the $\imprl$ rule.

\begin{lemma}\label{lm:implies-left-deriv} The rule $\imprl$ is admissible in $\lint + \{(id^{*}), (\imp^{*}_{l}), \lift\}$.

\end{lemma}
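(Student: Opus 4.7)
The plan is to reduce any application of $\imprl$ to a combination of hp-admissible weakening, a single application of $(\imp_{l}^{*})$, and a single application of $\lift$. The guiding intuition is semantic: $\lift$ encodes monotonicity of forcing along $\leq$, so if both $w \leq v$ and $w : A \imp B$ are already present, the formula $v : A \imp B$ may be added to a sequent with impunity and later withdrawn by $\lift$. Since $(\imp_{l}^{*})$ is a ``single-world'' version of $\imprl$, applying it at $v$---rather than at $w$---is exactly what matches the active formulae $v : A$ and $v : B$ appearing in the two premises of $\imprl$.

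Concretely, suppose we are given the two premises of an $\imprl$ instance, namely
$$\R, w \leq v, w : A \imp B, \Gamma \Rightarrow \Delta, v : A \qquad \text{and} \qquad \R, w \leq v, w : A \imp B, v : B, \Gamma \Rightarrow \Delta.$$
By the hp-admissibility of weakening (Lem.~\ref{lem:extended-lint-properties}(ii)), I may insert the labelled formula $v : A \imp B$ into each premise, obtaining
$$\R, w \leq v, w : A \imp B, v : A \imp B, \Gamma \Rightarrow \Delta, v : A \qquad \text{and} \qquad \R, w \leq v, w : A \imp B, v : A \imp B, v : B, \Gamma \Rightarrow \Delta.$$
A single application of $(\imp_{l}^{*})$ at the label $v$ then combines these two sequents into $\R, w \leq v, w : A \imp B, v : A \imp B, \Gamma \Rightarrow \Delta$. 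Because $w \leq v$ and $w : A \imp B$ are both present in this sequent, a final application of $\lift$---with $A \imp B$ as the lifted formula---eliminates the auxiliary occurrence of $v : A \imp B$, yielding the desired conclusion $\R, w \leq v, w : A \imp B, \Gamma \Rightarrow \Delta$.

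There is no serious obstacle in this argument: no induction on derivation height is needed, as the entire construction is a short, direct derivation built from admissible and primitive rules already present in $\lint + \{(id^{*}), (\imp^{*}_{l}), \lift\}$. The only conceptually delicate point is recognising that $(\imp_{l}^{*})$ alone cannot simulate $\imprl$, since $\imprl$ has an inherently two-world character, and that $\lift$ is precisely the rule that bridges this gap by mediating between a formula asserted at a world and its copy at an accessible world.
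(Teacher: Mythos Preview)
Your proof is correct and follows essentially the same approach as the paper: weaken both premises with $v : A \imp B$ (via Lem.~\ref{lem:extended-lint-properties}), apply $(\imp_{l}^{*})$ at the label $v$, and finish with a single $\lift$ to discharge the auxiliary copy. The only difference is cosmetic (the paper uses labels $x,y$ in place of your $w,v$).
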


\begin{proof} We derive the rule as shown below:

\begin{center}
\resizebox{\columnwidth}{!}{
\begin{tabular}{c}
\AxiomC{$\R,x \leq y, x:A \imp B, \Gamma \Rightarrow \Delta, y:A$}
\RightLabel{Lem.~\ref{lem:extended-lint-properties}}
\dashedLine
\UnaryInfC{$\R,x \leq y, x:A \imp B,y:A \imp B, \Gamma \Rightarrow \Delta, y:A$}
\AxiomC{$\R,x \leq y, x:A \imp B, y:B, \Gamma \Rightarrow \Delta$}
\RightLabel{Lem.~\ref{lem:extended-lint-properties}}
\dashedLine
\UnaryInfC{$\R,x \leq y, x:A \imp B,y:A \imp B, y:B, \Gamma \Rightarrow \Delta$}
\RightLabel{$(\imp^{*}_{l})$}
\BinaryInfC{$\R, x \leq y, x:A \imp B, y:A \imp B, \Gamma \Rightarrow \Delta$}
\RightLabel{$\lift$}
\UnaryInfC{$\R, x \leq y, x:A \imp B, \Gamma \Rightarrow \Delta$}
\DisplayProof 
\end{tabular}
}
\end{center}
\qed
\end{proof}

Last, the $\trans$ rule is permutable with the $\lift$ rule. In the case where the principal relational atom of $\lift$ is \emph{not} active in the ensuing $\trans$ application, the two rules freely permute. The alternative case is resolved as shown below:
\begin{center}
\resizebox{\columnwidth}{!}{
\begin{tabular}{c c}
\AxiomC{$\R, w \leq u, u \leq v, w \leq v, w: A, v : A, \Gamma \Rightarrow \Delta$}
\RightLabel{$\lift$}
\UnaryInfC{$\R, w \leq u, u \leq v, w \leq v, w: A, \Gamma \Rightarrow \Delta$}
\RightLabel{$\trans$}
\UnaryInfC{$\R, w \leq u, u \leq v, w: A, \Gamma \Rightarrow \Delta$}
\DisplayProof
&
\AxiomC{$\R, w \leq u, u \leq v, w \leq v, w: A,v : A, \Gamma \Rightarrow \Delta$}
\RightLabel{$\trans$}
\UnaryInfC{$\R, w \leq u, u \leq v, w: A,v : A, \Gamma \Rightarrow \Delta$}
\RightLabel{Lem.~\ref{lem:extended-lint-properties}}
\dashedLine
\UnaryInfC{$\R, w \leq u, u \leq v, w: A, u:A, v : A, \Gamma \Rightarrow \Delta$}
\RightLabel{$\lift$}
\UnaryInfC{$\R, w \leq u, u \leq v, w: A, u:A, \Gamma \Rightarrow \Delta$}
\RightLabel{$\lift$}
\UnaryInfC{$\R, w \leq u, u \leq v, w: A, \Gamma \Rightarrow \Delta$}
\DisplayProof
\end{tabular}
}
\end{center}



Hence, we obtain the following:

\begin{lemma}\label{lm:trans-admiss} The $\trans$ rule is eliminable in $\lint + \{(id^{*}),(\imp^{*}_{l}),\lift\} - \refl$.

\end{lemma}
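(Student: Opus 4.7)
The plan is to prove the lemma by induction on the height of a derivation in $\lint + \{(id^{*}),(\imp_{l}^{*}),\lift\} - \refl$ ending with a topmost application of $\trans$, and show that this single application can always be eliminated. Iterating the procedure bottom-up then removes every occurrence of $\trans$ from an arbitrary derivation.

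For the base case, $\trans$ is applied to an initial sequent (an instance of $\id$, $(id^{*})$, or $\botr$). If the active atom $w \leq u$ of $\trans$ is not principal in the initial sequent, the conclusion remains an initial instance of the same rule. The only non-trivial subcase is when $\trans$ is applied to an instance of $\id$ whose principal relational atom coincides with the $w \leq u$ deleted by $\trans$; this is handled precisely by the derivation shown above, which applies $(id^{*})$ followed by two applications of $\lift$.

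For the inductive step, I permute $\trans$ upward past the last rule of the derivation of its premise. By Lemma~\ref{lm:structural-rules-permutation}(i), $\trans$ freely permutes with every propositional rule except $\imprl$; the rule $(\imp_{l}^{*})$ acts only on labelled formulae while $\trans$ acts only on relational atoms, so these commute trivially; and the permutation with $\lift$ is resolved by the case analysis displayed just above the lemma statement, which invokes the hp-admissibility of weakening on labelled formulae given by Lemma~\ref{lem:extended-lint-properties}(ii) and then reapplies $\lift$ twice. In each of these cases, the inductive hypothesis applies to the relocated $\trans$, whose premise derivation is strictly shorter.

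The hard part will be the case where $\trans$ must be permuted above $\imprl$, since the atom $w \leq u$ introduced bottom-up by $\trans$ may itself be the principal relational atom of $\imprl$, obstructing a direct permutation. To circumvent this I would first apply Lemma~\ref{lm:implies-left-deriv} to replace every $\imprl$ inference in the derivation of the premise by a combination of $(\imp_{l}^{*})$ and $\lift$, yielding an $\imprl$-free derivation to which the permutation strategy above applies uniformly. Termination then follows because each permutation either eliminates $\trans$ outright (as in the base case) or strictly reduces the height of the derivation above $\trans$, and no new $\trans$ instances are introduced by any of the permutations or by the detour through Lemma~\ref{lm:implies-left-deriv}.
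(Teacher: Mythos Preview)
Your proposal is correct and follows essentially the same route as the paper: handle the initial sequents (using $\lift$ with $(id^{*})$ for the nontrivial $\id$ subcase), invoke Lemma~\ref{lm:structural-rules-permutation}(i) for the bulk of the logical rules, note that $(\imp_{l}^{*})$ and $\trans$ act on disjoint syntactic material, resolve the $\lift$ interaction via weakening plus two $\lift$ applications, and dispose of $\imprl$ by appeal to Lemma~\ref{lm:implies-left-deriv}. The only presentational difference is that you make the $\imprl$-elimination an explicit preprocessing step before running the height induction, whereas the paper simply remarks that the $\imprl$ case ``may be omitted'' in light of its admissibility; both readings amount to the same argument.
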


Enough groundwork has been laid to state our main lemma:

\begin{lemma}\label{lm:refl-trans-admiss} The $\refl$ and $\trans$ rules are admissible in the calculus $\lint + \{(id^{*}),(\imp^{*}_{l}),\lift\}$.

\end{lemma}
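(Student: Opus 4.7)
The plan is to combine Lemma \ref{lm:refl-admiss} and Lemma \ref{lm:trans-admiss} by observing that their elimination procedures do not interfere with one another. Neither procedure introduces new instances of the rule it does not concern: the $\refl$-permutations only relocate $\refl$ upward past other rules, producing at most new $\refl$ instances and initial sequents; the $\trans$-elimination similarly only ever yields $\trans$, $\lift$, and weakening, never a fresh $\refl$. So it suffices to eliminate the two rules in sequence.

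The concrete strategy is as follows. Given a derivation $D$ in $\lint+\{(id^{*}),(\imp^{*}_{l}),\lift\}$, first push every occurrence of $\refl$ upward as in Lemma \ref{lm:refl-admiss}, extending the case analysis to cover the two additional rules present in the larger calculus, namely $\trans$ and $\lift$. The permutation of $\refl$ past $\trans$ is immediate, since both rules act only on relational atoms and their active atoms ($z \leq z$ versus $w \leq v, v \leq u, w \leq u$) are either disjoint or collapse into trivial multiset rearrangements. The permutation of $\refl$ past $\lift$ is equally direct in the generic case; in the sole borderline case where the atom $w \leq u$ used by $\lift$ coincides with the $z \leq z$ introduced by $\refl$ (so $w=u=z$), the $\lift$ inference degenerates to a contraction on $w:A$, which is absorbed by hp-admissibility of $(ctr_{F_{l}})$ from Lem.~\ref{lem:extended-lint-properties}(ii). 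This yields a $\refl$-free derivation $D'$ in the same calculus, after which Lemma \ref{lm:trans-admiss} applies directly to eliminate the remaining $\trans$ instances and produce the desired $\refl$- and $\trans$-free derivation.

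The main obstacle is verifying that the extended $\refl$-elimination procedure really does close under the new rules $\trans$ and $\lift$; the critical subtlety lies in the degenerate $\lift$ case above, where without contraction-admissibility the permutation would fail. Once that check is in hand, the rest of the argument is a straightforward composition of the two earlier lemmas.
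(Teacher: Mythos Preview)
Your approach is essentially correct and shares the paper's key technical point---the degenerate $\refl$-above-$\lift$ case is resolved by left contraction on $w:A$---but there are two things worth flagging. First, a citation slip: $(ctr_{F_{l}})$ is not among the hp-admissible rules in Lem.~\ref{lem:extended-lint-properties}(ii); it appears in part~(v) and is only shown admissible, not hp-admissible. This does not damage your argument (admissibility suffices), but the reference should be corrected. Second, your sequential strategy (eliminate all $\refl$, then all $\trans$) forces you to permute $\refl$ above $\trans$, an extra case the paper sidesteps. The paper instead takes the \emph{topmost} occurrence of either $\refl$ or $\trans$ and eliminates it; since everything above is already free of both rules, the only case missing from Lem.~\ref{lm:refl-admiss} (stated for a $\lift$-free calculus) is $\refl$ above $\lift$, and Lem.~\ref{lm:trans-admiss} already covers $\trans$ above $\lift$. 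Your $\refl$-above-$\trans$ analysis is fine (the edge cases where $z\leq z$ coincides with $w\leq v$ or $v\leq u$ need $(ctr_R)$, which your ``trivial multiset rearrangements'' remark gestures at), but the paper's topmost-first organisation is tidier precisely because it never has to make that check.
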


\begin{proof} Suppose we are given a proof $\Pi$ in $\lint + \{(id^{*}),(\imp^{*}_{l}),\lift\}$, and consider the topmost occurrence of either $\refl$ or $\trans$. If we can show that the $\refl$ rule permutes above the $\lift$ rule, then we can evoke Lem.~\ref{lm:refl-admiss} and Lem.~\ref{lm:trans-admiss} to conclude that each topmost occurrence of $\refl$ and $\trans$ can be eliminated from $\Pi$ in succession. This yields a $\refl$ and $\trans$ free proof of the end sequent and establishes the claim. Thus, we prove that the $\refl$ rule permutes above the $\lift$ rule.

In the case where the relational atom active in $\refl$ is not principal in the $\lift$ inference, the two rules may be permuted; the alternative case is resolved as shown below:
\begin{center}
\begin{tabular}{c @{\hskip 1em} c}

\AxiomC{$\R, w \leq w, w: A, w : A, \Gamma \Rightarrow \Delta$}
\RightLabel{$\lift$}
\UnaryInfC{$\R, w \leq w, w:A, \Gamma \Rightarrow \Delta$}
\RightLabel{$\refl$}
\UnaryInfC{$\R, w:A, \Gamma \Rightarrow \Delta$}
\DisplayProof

&

\AxiomC{}
\RightLabel{IH}
\dashedLine
\UnaryInfC{$\R, w: A, w : A, \Gamma \Rightarrow \Delta$}
\RightLabel{Lem.~\ref{lem:extended-lint-properties}-(v)}
\dashedLine
\UnaryInfC{$\R, w: A, \Gamma \Rightarrow \Delta$}
\DisplayProof

\end{tabular}
\end{center}
\qed
\end{proof}

The addition of the rules $\{\idnew, (\imp_{l}^{*}), \lift\}$ to our calculus and the above admissibility results demonstrate that we are readily advancing towards our goal of deriving $\nint$. Howbeit, 
our labelled calculus is still distinct since it makes use of the logical signature $\{\bot, \land, \lor, \imp\}$, whereas $\nint$ uses the signature $\{\neg, \land,\lor, \imp\}$. Therefore, we need to show that $\botr$ (we define $\bot := p \land \neg p$) is admissible in the presence of (labelled versions of) the $(\neg_{r})$ and $(\neg_{l})$ rules (see Fig.~\ref{fig:new-rules-focd}). 
This admissibility result is explained in the main theorem below.



\begin{theorem}
\label{thm:admiss-all-rules-propositional}
The rules $\{\id, \botr, \imprl, \refl, \trans\}$ are admissible in the calculus $\lint + \{\idnew, (\neg_{l}), (\neg_{r}), (\imp_{l}^{*}), \lift\}$.
\end{theorem}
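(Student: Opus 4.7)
The plan is to prove the five rules admissible \emph{individually}, leveraging already-established results for three of them and producing short direct derivations for the remaining two. Since $\refl$ and $\trans$ are admissible in $\lint + \{\idnew, (\imp^{*}_{l}), \lift\}$ by Lem.~\ref{lm:refl-trans-admiss}, and since adding the further rules $(\neg_{l})$ and $(\neg_{r})$ cannot invalidate their admissibility (admissibility is monotone under rule extension, as any derivation in the smaller calculus is already a derivation in the larger one), these two structural rules are disposed of immediately. Likewise $\imprl$ is admissible by Lem.~\ref{lm:implies-left-deriv}, and the same monotonicity observation carries it across to the full extended calculus.

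It then remains to show admissibility of $\id$ and $\botr$. For $\id$, given an instance $\R, w \leq v, w : p, \Gamma \Rightarrow \Delta, v : p$, the short derivation
\begin{center}
\AxiomC{}
\RightLabel{$\idnew$}
\UnaryInfC{$\R, w \leq v, w : p, v : p, \Gamma \Rightarrow \Delta, v : p$}
\RightLabel{$\lift$}
\UnaryInfC{$\R, w \leq v, w : p, \Gamma \Rightarrow \Delta, v : p$}
\DisplayProof
\end{center}
does the job, as $\lift$ absorbs the copied $v:p$ along the relational atom $w \leq v$.

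For $\botr$, we exploit the definition $\bot := p \land \neg p$ announced just before the statement; the sequent $\R, w : \bot, \Gamma \Rightarrow \Delta$ is literally $\R, w : p \land \neg p, \Gamma \Rightarrow \Delta$, which has the derivation
\begin{center}
\AxiomC{}
\RightLabel{$\idnew$}
\UnaryInfC{$\R, w : p, w : \neg p, \Gamma \Rightarrow \Delta, w : p$}
\RightLabel{$(\neg_{l})$}
\UnaryInfC{$\R, w : p, w : \neg p, \Gamma \Rightarrow \Delta$}
\RightLabel{$\conrl$}
\UnaryInfC{$\R, w : p \land \neg p, \Gamma \Rightarrow \Delta$}
\DisplayProof
\end{center}
This finishes all five cases. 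No serious obstacle is anticipated: the subtle work has already been carried out in Lem.~\ref{lm:implies-left-deriv} and Lem.~\ref{lm:refl-trans-admiss} (where the interaction of $\refl$, $\trans$ and $\lift$ had to be analysed), and what is new here is merely the observation that $\idnew$ together with $\lift$ subsumes $\id$, and that once negation rules are available the redefinition $\bot := p \land \neg p$ renders $\botr$ derivable by a single $\conrl$-followed-by-$(\neg_{l})$ step.
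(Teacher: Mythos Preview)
Your overall strategy matches the paper's: invoke Lem.~\ref{lm:refl-trans-admiss} for $\refl$ and $\trans$, Lem.~\ref{lm:implies-left-deriv} for $\imprl$, and supply direct derivations for $\id$ (via $\idnew$ and $\lift$) and $\botr$ (via $\conrl$ and $(\neg_l)$, under the convention $\bot := p \land \neg p$). Your two explicit derivations are correct and are exactly what the paper sketches.

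However, your justification for transporting the admissibility of $\refl$, $\trans$, and $\imprl$ to the calculus extended by $(\neg_l)$ and $(\neg_r)$ is flawed. You write that ``admissibility is monotone under rule extension, as any derivation in the smaller calculus is already a derivation in the larger one.'' The parenthetical establishes only that \emph{derivability} is monotone; admissibility in general is not. If a rule $R$ is admissible in $C$ and $C \subseteq C'$, a sequent derivable in $C'$ but not in $C$ may serve as a premise of $R$ whose conclusion remains underivable in $C'$, destroying admissibility. The paper handles this point correctly by also citing Lem.~\ref{lm:structural-rules-permutation}(i), which already records that $\refl$ and $\trans$ permute above $(\neg_l)$ and $(\neg_r)$; with that in hand, the upward-permutation procedure behind Lem.~\ref{lm:refl-trans-admiss} goes through unchanged in the extended calculus. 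Replace your monotonicity appeal with a reference to Lem.~\ref{lm:structural-rules-permutation} (or, alternatively, argue that $(\neg_l)$ and $(\neg_r)$ are themselves derivable from $\imprr$, $(\imp_l^*)$, and $\botr$ so that the two calculi have identical derivable sequents) and the proof is complete.
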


\begin{proof} Follows from Lem.~\ref{lm:structural-rules-permutation}, Lem.~\ref{lm:refl-trans-admiss}, the fact that $\id$ is derivable using $\idnew$ and $\lift$, the fact that $\botr$ is derivable from $(\neg_{l})$ and $\conrl$, and admissibility of $\imprl$ is shown as in Lem.~\ref{lm:implies-left-deriv}.
\qed
\end{proof}

\begin{theorem}
\label{thm:treelike-derivations}
Every derivation $\Pi$ in $\lint + \{\idnew, (\neg_{l}), (\neg_{r}), (\imp_{l}^{*}), \lift\} - \{\id, \botr, \imprl, \refl, \trans\}$ of a labelled formula $w:A$ contains solely treelike sequents with $w$ the root of each sequent in the derivation.
\end{theorem}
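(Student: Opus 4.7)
The plan is to proceed by induction on the height of $\Pi$, read bottom-up. For the base step, the end sequent $\Rightarrow w : A$ has graph consisting of the single vertex $w$ with no edges, trivially a tree with $w$ as root. For the inductive step, I assume the conclusion of the lowest inference is treelike with root $w$ and show that each premise is treelike with root $w$ as well; since only finitely many rules are available in the restricted calculus, a case analysis on the last rule applied suffices.

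I would organize the rules of $\lint + \{\idnew, \negl, \negr, \implnew, \lift\} - \{\id, \botr, \imprl, \refl, \trans\}$ into two groups according to whether, read bottom-up, the rule modifies the multiset $\R$ of relational atoms. The first group, comprising $\idnew$, $\conrl$, $\conrr$, $\disrl$, $\disrr$, $\negl$, $\implnew$, and $\lift$, leaves $\R$ unchanged between conclusion and premise(s). The premise's graph therefore coincides on vertices and edges with that of the conclusion (only the label function $\lambda$ may accrue or lose formulae at an already-present vertex, which is irrelevant to treelikeness), so the treelike property together with the root $w$ transfers directly from the induction hypothesis; for $\idnew$ the obligation is vacuous, as the rule is zero-premise. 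The second group consists of $\imprr$ and $\negr$; each, read bottom-up, introduces a single new relational atom $w' \leq v$ in which $w'$ is the label of the principal formula (hence already a vertex of the conclusion's graph by the IH) and $v$ is an eigenvariable fresh to the conclusion (by the side conditions $\dag_1$ and $\dag_2$). Accordingly, the premise's graph is obtained from the conclusion's tree by attaching a new leaf $v$ via a fresh outgoing edge from $w'$, which is again a tree rooted at $w$.

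The case analysis is essentially routine, which is why the theorem itself can be stated as a compact structural fact. The only conceptual content that I would flag --- and arguably the heart of the result --- is to recognise that the excluded rules $\refl$ and $\trans$ are precisely the rules whose bottom-up reading would insert a self-loop $w' \leq w'$ or a chord $w' \leq u$ closing an undirected cycle with the already-present $w' \leq v, v \leq u$, respectively; these are the sole sources of non-treelike structure in labelled derivations. Their admissibility having already been established in Lem.~\ref{lm:refl-trans-admiss} (and the admissibility of $\id$, $\botr$, $\imprl$ in Thm.~\ref{thm:admiss-all-rules-propositional}), it is exactly their absence from the considered calculus that makes the preservation argument go through and forces every sequent in $\Pi$ to be treelike with the fixed root $w$.
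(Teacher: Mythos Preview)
Your proposal is correct and rests on the same core observation as the paper: every rule of the restricted calculus, read bottom-up, either leaves $\R$ unchanged or appends a fresh leaf via an eigenvariable edge. You organise this as a single bottom-up induction on the derivation, establishing treelikeness and the root claim simultaneously by a case split on the last rule. The paper instead decomposes treelikeness into its three constituents (connectivity, acyclicity, absence of backwards branching) and argues each separately by contradiction---a violation in some sequent would propagate downward into the end sequent $\Rightarrow w:A$, which is visibly a tree---and only invokes the bottom-up view for the root claim. Your organisation is more uniform and a bit cleaner; the paper's has the modest benefit of making explicit which rule features matter for which property (e.g.\ that $(\neg_r)$ and $(\supset_r)$ are the only rules deleting relational atoms is the crux of the top-down connectivity argument, and that no label in a cycle can be an eigenvariable is the crux of acyclicity).
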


\begin{proof} To prove the claim we have to show that the graph of every sequent is (i) connected, (ii) free of directed cycles, and (iii) free of backwards branching. (NB. properties (i)--(iii) are equivalent to being treelike.) We assume that we are given a derivation $\Pi$ with end sequent $\Rightarrow w:A$ and argue that every sequent in $\Pi$ has properties (i)--(iii):

(i) Assume there exists a sequent $\Lambda$ in $\Pi$ whose graph $G(\Lambda)$ is disconnected. Then, there exist at least two distinct regions in $G(\Lambda)$ such that there does not exist an edge from a node $v$ in one region to a node $u$ in the other region. In other words, $\Lambda$ does not contain a relational atom $v \leq u$ for some $v$ in one region and some $u$ in the other region. If one observes the rules of our calculus, they will find that all rules either preserve the relational atoms $\R$ of a sequent or decrease it by one relation atom (as in the case of the $(\neg_{r})$ and $\imprr$ rules). Hence, the end sequent $\Rightarrow w : A$ will have a disconnected graph since the property will be preserved downwards, but this is a contradiction.

(ii) Assume that a sequent occurs in $\Pi$ containing a relational cycle $u \leq v_{1}, \ldots, v_{n} \leq u$ (for $n \in \mathbb{N}$). 
Observe that the $(\neg_{r})$ and $\imprr$ rules are never applicable to any of the relational atoms in the cycle, since no label occurring in a relational cycle is an eigenvariable. This implies that the relational cycle will be preserved downwards into the end sequent $\Rightarrow w:A$ due to the fact that the $(\neg_{r})$ and $\imprr$ rules are the only rules that delete relational atoms, giving a contradiction.

(iii) Assume that a sequent occurs in $\Pi$ with backwards branching it is graph, i.e. it contains relational atoms of the form $v \leq z, u \leq z$. By reasoning similar to case (ii), we obtain a contradiction.

To see that $w$ is the root of each treelike sequent in $\Pi$, observe that applying inference rules from the calculus bottom-up to $w : A$ either preserve relational structure, or add forward relational structure (e.g. $\imprr$ and $(\neg_{r})$), thus constructing a tree emanating from $w$.
\qed
\end{proof}


We refer to the labelled calculus $\lint + \{\idnew, (\neg_{l}), (\neg_{r}), (\imp_{l}^{*}), \lift\} - \{\id, \botr, \imprl, \refl, \trans\}$ (restricted to the use of treelike sequents) under the $\switch$ translation as $\nint^{*}$. Up to copies of principal formulae in the premise(s) of some rules, $\nint^{*}$ is identical to the calculus $\nint$ (cf.~App.~\ref{app:new-nested-calculi}).






\section{Deriving $\nintfocd$ from $\lintfocd$}\label{section-4}


To simplify notation, $\lintfocd + \{\idfonew, (\neg_{l}), (\neg_{r}), (\imp^{*}_{l}), (\forall^{*}_{r}), (\forall_{l}^{*}), (\exists_{r}^{*}), \lift\}$ will be referred to as $\intfocdl$. 
We begin the section by showing two lemmata that confirm the admissibility of structural rules in $\intfocdl$, and permit the extraction of $\nintfocd$ from $\lintfocd$. After, we list a number of significant proof-theoretic properties inherited by our nested calculi through the extraction process.


\begin{lemma}
\label{lem:trans-refl-fo-elim}
The $\refl$ and $\trans$ rules are admissible in the calculus $\intfocdl$.
\end{lemma}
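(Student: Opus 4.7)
The plan is to lift the three-part argument of Lem.~\ref{lm:refl-trans-admiss} to the first-order setting. First, I would establish that $\refl$ is eliminable in $\intfocdl - \trans$ by showing it permutes upwards above every other rule; second, that $\trans$ is eliminable in the $\refl$-free fragment by the same technique; and third, that $\refl$ permutes above $\lift$ to justify eliminating topmost $\refl$/$\trans$ occurrences in succession, exactly as in the propositional proof.

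For the $\refl$-elimination step, Lem.~\ref{lm:structural-rules-permutation}(i) already handles the logical connective rules together with $\existsl$, $\existsr$, and $\allr$. The remaining cases are (a) initial sequents $\id$ and $\idfonew$: if the $\refl$-active atom $w \leq w$ is principal in $\idfonew$, the conclusion is recovered by $\idfonew$ itself since removing $w \leq w$ leaves all required paths intact; otherwise the conclusion is again an axiom; (b) $\imprl$, handled via $(\imp_{l}^{*})$ exactly as in the propositional argument; (c) $\alll$, whose conclusion contains $w \leq v$: if $v \neq w$ the rules commute freely, while if $v = w$ the reduced inference is justified by $\alllnew$, whose path side condition is satisfied because the $\alll$-instance supplied the relational atom now being used as the path; (d) $\nd$ and $\cd$: if $\refl$ targets the distinguished $w \leq v$ of these rules (forcing $v = w$), the premise becomes a weakening of the conclusion plus domain atom, handled by hp-admissible weakening from Lem.~\ref{lem:extended-lint-properties}-(ii); otherwise $\refl$ commutes. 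Case $\lift$ is deferred to step three.

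For $\trans$-elimination the template is again the propositional one: on the initial sequents $\id$ and $\idfonew$, the crucial case (where the transitive atom is also principal) is resolved by iterated $\lift$ exactly as for $\lint$, with the minor extra remark that reading $\trans$ bottom-up only enlarges the set of relational atoms, so the path side conditions required by $\idfonew$, $\alllnew$, and $\existsrnew$ continue to hold. Permutation with $\imprl$ is bypassed by proving first-order admissibility of $\imprl$ in $\intfocdl$, which goes through just as in Lem.~\ref{lm:implies-left-deriv}. For $\alll$, $\nd$, $\cd$, $\alllnew$, $\existsrnew$: when the $w \leq v$ derived by $\trans$ is not principal in the lower inference the rules commute, and when it is, one applies $\trans$ above the rule and reinstates the inference afterwards, using hp-admissible weakening (Lem.~\ref{lem:extended-lint-properties}-(ii)) to introduce any auxiliary domain atom. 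Step three---the permutation of $\refl$ above $\lift$---is a straightforward adaptation of the propositional case using Lem.~\ref{lem:extended-lint-properties}-(v) (admissibility of $(ctr_{F_{l}})$). The main obstacle is purely bookkeeping: one must keep track of the path side conditions on $\idfonew$, $\alllnew$, and $\existsrnew$ while reshuffling relational atoms; the key observation that makes this manageable is that $\refl$ and $\trans$ only shrink (never enlarge) the relational structure when read top-down, so paths available in the original derivation remain available in the transformed one.
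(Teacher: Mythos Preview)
Your overall strategy coincides with the paper's: eliminate topmost $\refl/\trans$ instances by upward permutation, relying on Lem.~\ref{lm:structural-rules-permutation} for the easy cases and treating $\idfo$, $\alll$, $(\forall_{l}^{*})$, $(\exists_{r}^{*})$, $\nd$, $\cd$, and $\lift$ explicitly. Two concrete slips need fixing, however.

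First, in your case (d) for $\refl$ over $\nd$/$\cd$: when $v=w$ the premise is $\R, w\leq w, a\in D_{w}, a\in D_{w},\Gamma\Rightarrow\Delta$ and the target is $\R, a\in D_{w},\Gamma\Rightarrow\Delta$. You cannot reach the target by \emph{weakening}; the premise has strictly more material. What is needed is hp-admissible \emph{contraction} on relational atoms ($(ctr_{R})$, also in Lem.~\ref{lem:extended-lint-properties}-(ii)) to collapse the duplicate $a\in D_{w}$, followed by IH to remove $w\leq w$. This is exactly how the paper proceeds.

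Second, your description of $\trans$ over $\alll$ is too optimistic. After pushing $\trans$ above $\alll$ (i.e.\ applying IH to delete $w\leq v$), the relational atom needed to re-apply $\alll$ is gone, so you cannot simply ``reinstate the inference afterwards.'' The paper's fix is to weaken in a copy $u:\forall x A$, apply $\alll$ along the still-present $u\leq v$, and then use $\lift$ on $w\leq u$ to discharge the extra copy. Your summary mentions weakening-in a domain atom but omits the labelled-formula copy and the essential $\lift$ step. A similar subtlety arises in the $\trans$/$\idfo$ case, where the paper needs $\lift$ together with several applications of $\nd$ to move the atomic formula and its parameters along the chain $w\leq u, u\leq v$; your text alludes to iterated $\lift$ but not to the accompanying $\nd$ steps. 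With these two repairs your argument matches the paper's.
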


\begin{proof} We consider the topmost occurrence of a $\refl$ or $\trans$ inference and eliminate each topmost occurrence in succession until we obtain a proof free of $\refl$ and $\trans$ inferences. By Lem.~\ref{lm:structural-rules-permutation}, we need only show that $\refl$ and $\trans$ permute above rules $\{\idfo, (id^{*}_{q}), \imprl, (\imp_{l}^{*}), \alll, \allnewl, (\forall_{r}^{*}), (\exists^{*}_{r}), \lift, \nd, \cd\}$. 
The cases of permuting $\refl$ and $\trans$ above $(\imp_{l}^{*})$ and $\lift$ are similar to Lem.~\ref{lm:refl-admiss}, \ref{lm:trans-admiss}, and Thm.~\ref{thm:admiss-all-rules-propositional}. Also, $\imprl$ is admissible in the presence of $(\imp^{*}_{l})$ (similar to Lem.~\ref{lm:implies-left-deriv}), so the case may be omitted. Permuting $\refl$ and $\trans$ above $\idfonew$ and $(\forall_{r}^{*})$ is straightforward, so we exclude the cases. Hence, we focus only on the non-trivial cases involving the $\idfo$, $\alll$, $(\forall_{l}^{*})$, $(\exists^{*}_{r})$, $\nd$, and $\cd$ rules. We prove the elimination of $\refl$ and refer to reader to App.~\ref{app:proofs} for the proof of $\trans$ elimination. 
\begin{center}
\resizebox{\columnwidth}{!}{
\begin{tabular}{c c}
\AxiomC{}
\RightLabel{$\idfo$}
\UnaryInfC{$\R, w \leq w, \vv{a} \in D_{w}, w : p(\vv{a}), \Gamma \Rightarrow w : p(\vv{a}), \Delta$}
\RightLabel{$\refl$}
\UnaryInfC{$\R, \vv{a} \in D_{w}, w : p(\vv{a}), \Gamma \Rightarrow w : p(\vv{a}), \Delta$}
\DisplayProof
&
\AxiomC{}
\RightLabel{$\idfonew$}
\UnaryInfC{$\R, \vv{a} \in D_{w}, w : p(\vv{a}), \Gamma \Rightarrow w : p(\vv{a}), \Delta$}
\DisplayProof
\end{tabular}
}
\end{center}

\begin{center}
\resizebox{\columnwidth}{!}{
\begin{tabular}{c c}
\AxiomC{$\R, w \leq w, a \in D_{w}, w : A[a/x], w : \forall x A, \Gamma \Rightarrow \Delta$}
\RightLabel{$\alll$}
\UnaryInfC{$\R, w \leq w, a \in D_{w}, w : \forall x A, \Gamma \Rightarrow \Delta$}
\RightLabel{$\refl$}
\UnaryInfC{$\R, a \in D_{w}, w : \forall x A, \Gamma \Rightarrow \Delta$}
\DisplayProof
&
\AxiomC{}
\RightLabel{IH}
\dashedLine
\UnaryInfC{$\R, a \in D_{w}, w : A[a/x], w : \forall x A, \Gamma \Rightarrow \Delta$}
\RightLabel{$(\forall^{*}_{l})$}
\UnaryInfC{$\R, a \in D_{w}, w : \forall x A, \Gamma \Rightarrow \Delta$}
\DisplayProof
\end{tabular}
}
\end{center}

\begin{center}
\resizebox{\columnwidth}{!}{
\begin{tabular}{c c}
\AxiomC{$\R, u \leq u, a \in D_{v}, w : A[a/x], w : \forall x A, \Gamma \Rightarrow \Delta$}
\RightLabel{$(\forall^{*}_{l})$}
\UnaryInfC{$\R, u \leq u, a \in D_{v}, w : \forall x A, \Gamma \Rightarrow \Delta$}
\RightLabel{$\refl$}
\UnaryInfC{$\R, a \in D_{v}, w : \forall x A, \Gamma \Rightarrow \Delta$}
\DisplayProof
&
\AxiomC{}
\RightLabel{IH}
\dashedLine
\UnaryInfC{$\R, a \in D_{v}, w : A[a/x], w : \forall x A, \Gamma \Rightarrow \Delta$}
\RightLabel{$(\forall^{*}_{l})$}
\UnaryInfC{$\R, a \in D_{v}, w : \forall x A, \Gamma \Rightarrow \Delta$}
\DisplayProof
\end{tabular}
}
\end{center}

\begin{center}
\resizebox{\columnwidth}{!}{
\begin{tabular}{c c}
\AxiomC{$\R, u \leq u, a \in D_{v}, \Gamma \Rightarrow w : A[a/x], w : \exists x A, \Delta$}
\RightLabel{$(\exists^{*}_{r})$}
\UnaryInfC{$\R, u \leq u, a \in D_{v}, \Gamma \Rightarrow w : \exists x A, \Delta$}
\RightLabel{$\refl$}
\UnaryInfC{$\R, a \in D_{v}, \Gamma \Rightarrow w : \exists x A, \Delta$}
\DisplayProof
&
\AxiomC{}
\RightLabel{IH}
\dashedLine
\UnaryInfC{$\R, a \in D_{v}, \Gamma \Rightarrow w : A[a/x], w : \exists x A, \Delta$}
\RightLabel{$(\exists^{*}_{l})$}
\UnaryInfC{$\R, a \in D_{v}, \Gamma \Rightarrow w : \exists x A, \Delta$}
\DisplayProof
\end{tabular}
}
\end{center}

\begin{center}
\resizebox{\columnwidth}{!}{
\begin{tabular}{c c}
\AxiomC{$\R, w \leq w, a \in D_{w}, a \in D_{w}, \Gamma \Rightarrow \Delta$}
\RightLabel{$\nd$}
\UnaryInfC{$\R, w \leq w, a \in D_{w}, \Gamma \Rightarrow \Delta$}
\RightLabel{$\refl$}
\UnaryInfC{$\R, a \in D_{w}, \Gamma \Rightarrow \Delta$}
\DisplayProof

&

\AxiomC{$\R, w \leq w, a \in D_{w}, a \in D_{w}, \Gamma \Rightarrow \Delta$}
\RightLabel{Lem.~\ref{lem:extended-lint-properties}-(iv)}
\dashedLine
\UnaryInfC{$\R, w \leq w, a \in D_{w}, \Gamma \Rightarrow \Delta$}
\RightLabel{IH}
\dashedLine
\UnaryInfC{$\R, a \in D_{w}, \Gamma \Rightarrow \Delta$}
\DisplayProof
\end{tabular}
}
\end{center}

\begin{center}
\resizebox{\columnwidth}{!}{
\begin{tabular}{c c}
\AxiomC{$\R, w \leq w, a \in D_{w}, a \in D_{w}, \Gamma \Rightarrow \Delta$}
\RightLabel{$\cd$}
\UnaryInfC{$\R, w \leq w, a \in D_{w}, \Gamma \Rightarrow \Delta$}
\RightLabel{$\refl$}
\UnaryInfC{$\R, a \in D_{w}, \Gamma \Rightarrow \Delta$}
\DisplayProof

&

\AxiomC{$\R, w \leq w, a \in D_{w}, a \in D_{w}, \Gamma \Rightarrow \Delta$}
\RightLabel{Lem.~\ref{lem:extended-lint-properties}-(iv)}
\dashedLine
\UnaryInfC{$\R, w \leq w, a \in D_{w}, \Gamma \Rightarrow \Delta$}
\RightLabel{IH}
\dashedLine
\UnaryInfC{$\R, a \in D_{w}, \Gamma \Rightarrow \Delta$}
\DisplayProof
\end{tabular}
}
\end{center}

In the $(\forall^{*}_{l})$ and $(\exists^{*}_{r})$ cases, observe that the side condition continues to hold after IH is applied. If the path from $w$ to $v$ does not go through $u$, then the side
condition trivially holds, and if it does go through $u$, then there
must exist relational atoms in $\R$ occurring along the path from $w$
to $v$, which continue to be present after the evocation of IH.
\qed
\end{proof}

\begin{lemma}
\label{lm:nd-cd-admiss}
The rules $\nd$ and $\cd$ are admissible in the calculus $\intfocdl - \{\refl,\trans\}$.
\end{lemma}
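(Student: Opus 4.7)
The plan is to follow the same strategy as Lem.~\ref{lem:trans-refl-fo-elim}: pick the topmost occurrence of $\nd$ or $\cd$ in a derivation and eliminate it by permuting the inference upward, iterating until the proof is free of both rules. By Lem.~\ref{lm:structural-rules-permutation}-(ii), $\nd$ and $\cd$ already permute with $\{\botr, \conrl, \conrr, \disrl, \disrr, \imprl, \imprr, (\neg_{l}), (\neg_{r}), (\exists_{l}), (\forall_{r})\}$, so I only need to analyse permutability with the remaining rules of $\intfocdl - \{\refl,\trans\}$: the initial sequents $\idfo$ and $\idfonew$, the quantifier rules $\alll$, $\allnewl$, $\existsr$, $\existsrnew$, together with $\lift$, $(\imp_{l}^{*})$, $(\forall_{r}^{*})$, and the two rules $\nd$ and $\cd$ themselves.

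For $\lift$, $(\imp_{l}^{*})$, and $(\forall_{r}^{*})$ the permutations are immediate: the first two never touch domain atoms, and the $a$ active in $(\forall_{r}^{*})$ is an eigenvariable disjoint from the parameter targeted by $\nd$/$\cd$. For the initial sequents, I argue that applying $\nd$ or $\cd$ to an $\idfo$ or $\idfonew$ conclusion yields another initial sequent: if the discharged atom is incidental to the principal formulae, the result is of the same form, and if it is one of the $a_{i} \in D_{v_{i}}$ required by $\idfonew$, then the other atom $a_{i} \in D_{u}$ produced by the rule is already present and, because $u \sim v_{i}$ by the relational atom used, the undirected path side condition $\dag_{1}$ to $w$ persists through $v_{i}$.

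The crucial cases are $\alll$ and $\existsr$, whose conclusions carry an active domain atom that $\nd$ or $\cd$ may erase when permuted upward. Here the rules $\allnewl$ and $\existsrnew$ in $\intfocdl$ provide the needed slack: after permuting $\nd$ (resp.~$\cd$) into the premise and applying the induction hypothesis, the discharged $a \in D_{v}$ disappears, but the sequent retains $a \in D_{u}$ with $u \sim v$ in $\R$; since there is a (possibly trivial) path from $u$ to $w$ via $u \sim v$ followed by the original path, the side condition $\dag_{3}$ of $\allnewl$ or $\existsrnew$ is met and these starred rules replace the original $\alll$ or $\existsr$ instance. For $\allnewl$ and $\existsrnew$ themselves, as well as for the case where $\nd$ or $\cd$ interacts with another $\nd$ or $\cd$, the same observation applies uniformly: the relational atom used by $\nd$/$\cd$ is preserved in the conclusion, so any path condition depending on it continues to hold after the permutation. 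The main obstacle is precisely this bookkeeping of path side conditions $\dag_{1}$ and $\dag_{3}$, but since $\nd$ and $\cd$ only delete domain atoms---never relational atoms---the required paths are never severed, and the induction goes through.
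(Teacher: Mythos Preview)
Your approach is essentially the paper's: induction on height, permute the topmost $\nd$/$\cd$ upward, reduce the cases via Lem.~\ref{lm:structural-rules-permutation}, and resolve the critical $\alll$/$\existsr$ cases by swapping in the starred rules $\allnewl$/$\existsrnew$. Your handling of the path side conditions is correct and matches the paper's reasoning exactly.

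There is one imprecision in your base case. You assert that applying $\nd$ to an $\idfo$ conclusion ``yields another initial sequent,'' but this is not true when the discharged domain atom is one of the required $a_{i} \in D_{w}$. Concretely, from
\[
\R, u \leq w, w \leq v, \vv{a} \in D_{w}, b \in D_{u}, b \in D_{w}, w : p(\vv{a},b), \Gamma \Rightarrow \Delta, v : p(\vv{a},b)
\]
after $\nd$ you obtain a sequent with $b \in D_{u}$ only; this is neither an $\idfo$ instance (the required $b \in D_{w}$ is gone) nor an $\idfonew$ instance (the two occurrences of $p(\vv{a},b)$ sit at different labels $w$ and $v$). The paper repairs this by deriving the end sequent using $\idfonew$ at $v$ together with $\lift$ along $w \leq v$. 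Once you add this use of $\lift$ in the base case, the argument goes through as you describe. (Your remark about $\nd$/$\cd$ interacting with each other is harmless but superfluous, since by choice of the topmost instance there are none above.)
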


\begin{proof} The result is shown by induction on the height of the given derivation by permuting all instances of $\nd$ and $\cd$ upwards until all such instances are removed from the derivation. See App.~\ref{app:proofs} for details.
\qed
\end{proof}

\begin{theorem}
\label{thm:focd-admissible-rules-all}
The rules $\{\idfo,\botr,\imprl,\alll,\allr,\existsr,\refl,\trans,\nd,\cd\}$ are admissible in $\intfocdl$.
\end{theorem}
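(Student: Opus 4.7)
The plan is to mirror the strategy used for the propositional calculus in Thm.~\ref{thm:admiss-all-rules-propositional}, with additional attention paid to the quantifier rules and the domain rules $\nd, \cd$. The admissibility of $\refl$ and $\trans$ is already established by Lem.~\ref{lem:trans-refl-fo-elim}, and the admissibility of $\nd$ and $\cd$ by Lem.~\ref{lm:nd-cd-admiss} (where the combination of these two lemmata in the given order suffices since Lem.~\ref{lm:nd-cd-admiss} is applied after $\refl$ and $\trans$ have already been eliminated). So it remains to handle $\{\idfo, \botr, \imprl, \alll, \allr, \existsr\}$.

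For $\idfo$ one observes that the conclusion can be obtained by first applying $\idfonew$ with both $w : p(\vec{a})$ and $v : p(\vec{a})$ appearing, and then using $\lift$ to contract $v : p(\vec{a})$ against the $w \leq v$ atom and the lifted $w : p(\vec{a})$ in the antecedent. Defining $\bot := p \land \neg p$, the rule $\botr$ is derivable from $\conrl$, $(\neg_l)$, and $\idfonew$ just as in the propositional case. The admissibility of $\imprl$ is shown verbatim as in Lem.~\ref{lm:implies-left-deriv}, appealing to Lem.~\ref{lem:extended-lint-properties}(i) to duplicate $w : A \imp B$ onto the successor world, then $(\imp_l^*)$ and $\lift$. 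The admissibility of $\existsr$ is immediate: it is literally the instance of $(\exists_r^*)$ obtained by taking $v = w$, since the path condition is then trivially satisfied.

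The two genuinely new cases are $\alll$ and $\allr$. For $\alll$, given a derivation of the premise $\R, w \leq v, a \in D_v, v : A[a/x], w : \forall x A, \Gamma \Rightarrow \Delta$, I would invoke Lem.~\ref{lem:extended-lint-properties}(ii) to apply $\wk$ (hp-admissible) adding $w : A[a/x]$, then use $\lift$ (which demands $w \leq v$, $w : A[a/x]$, and $v : A[a/x]$ in the antecedent, all present) to remove $v : A[a/x]$, arriving at $\R, w \leq v, a \in D_v, w : A[a/x], w : \forall x A, \Gamma \Rightarrow \Delta$. A final application of $(\forall_l^*)$ is licensed since $w \leq v$ provides the required path from $v$ to $w$, and yields the desired conclusion. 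For $\allr$, from the premise $\R, w \leq v, a \in D_v, \Gamma \Rightarrow \Delta, v : A[a/x]$ with eigenvariables $a$ and $v$, I apply the hp-admissible $(lsub)[w/v]$ of Lem.~\ref{lem:extended-lint-properties}(ii) to substitute $v$ by $w$ throughout, yielding $\R, w \leq w, a \in D_w, \Gamma \Rightarrow \Delta, w : A[a/x]$; the relational data outside the active atom is unchanged because $v$ was an eigenvariable. The $w \leq w$ is then removed by the already-established admissibility of $\refl$, and $(\forall_r^*)$ is applied (the side condition on $a$ still holds) to conclude $\R, \Gamma \Rightarrow \Delta, w : \forall x A$.

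The main obstacle I anticipate is bookkeeping in the $\alll$ case, specifically ensuring that the substitution of labels during the reduction to $(\forall_l^*)$ respects the side condition on paths of relational atoms and does not interfere with other occurrences of $w$ or $v$ in $\R$, $\Gamma$, $\Delta$; this is handled by observing that $w \leq v$ is itself the witnessing path. The $\allr$ case is easier conceptually but requires noting that the eigenvariable condition on $v$ is exactly what ensures $(lsub)$ leaves the context intact. Once these two cases are dispatched, assembling the theorem is a direct collation of the cases above.
\qed
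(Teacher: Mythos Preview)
Your proposal is correct and matches the paper's strategy: Lem.~\ref{lem:trans-refl-fo-elim} and Lem.~\ref{lm:nd-cd-admiss} handle the structural rules, the propositional arguments cover $\idfo$, $\botr$, $\imprl$, and $\existsr$ is recognised as an instance of $(\exists_r^*)$; for $\allr$ both you and the paper apply $(lsub)[w/v]$, then eliminate the resulting $w\leq w$ via Lem.~\ref{lem:trans-refl-fo-elim}, and finish with $(\forall_r^*)$.

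The one point of divergence is $\alll$. The paper parallels its $\allr$ argument (substitute $v\mapsto w$, eliminate $\refl$, apply $(\forall_l^*)$), whereas you weaken in $w:A[a/x]$, apply $\lift$ along $w\leq v$ to discharge $v:A[a/x]$, and then invoke $(\forall_l^*)$ with $w\leq v$ itself as the required path. Your variant is arguably the tidier of the two here: since $v$ is \emph{not} an eigenvariable in $\alll$, a global label substitution risks altering $\R,\Gamma,\Delta$, and the paper's displayed endsequent does not literally coincide with the conclusion of $\alll$. Your argument sidesteps this and lands exactly on the target sequent $\R, w\leq v, a\in D_v, w:\forall x A,\Gamma\Rightarrow\Delta$.
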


\begin{proof} Admissibility of $\idfo$, $\imprl$, and $\botr$ is shown similarly to Lem.~\ref{lm:implies-left-deriv} and Thm.~\ref{thm:admiss-all-rules-propositional}. Also, the rule $\existsr$ is an instance of $(\exists_{r}^{*})$, and the admissibility of $\allr$ and $\alll$ are witnessed by the derivations below:
\begin{center}
\resizebox{\columnwidth}{!}{
\begin{tabular}{c c}
\AxiomC{$\R, w \leq v, a \in D_{v}, \Gamma \Rightarrow v : A[a/x], \Delta$}
\RightLabel{Lem.~\ref{lem:extended-lint-properties}}
\dashedLine
\UnaryInfC{$\R, w \leq w, a \in D_{w}, \Gamma \Rightarrow w : A[a/x], \Delta$}
\RightLabel{Lem.~\ref{lem:trans-refl-fo-elim}}
\dashedLine
\UnaryInfC{$\R, a \in D_{w}, \Gamma \Rightarrow w : A[a/x], \Delta$}
\RightLabel{$(\forall_{r}^{*})$}
\UnaryInfC{$\R, \Gamma \Rightarrow w : \forall x A, \Delta$}
\DisplayProof

&

\AxiomC{$\R, w \leq v, a \in D_{v}, v : A[a/x], w : \forall x A, \Gamma \Rightarrow \Delta$}
\RightLabel{Lem.~\ref{lem:extended-lint-properties}}
\dashedLine
\UnaryInfC{$\R, w \leq w, a \in D_{w}, w : A[a/x], w : \forall x A, \Gamma \Rightarrow \Delta$}
\RightLabel{Lem.~\ref{lem:trans-refl-fo-elim}}
\dashedLine
\UnaryInfC{$\R, a \in D_{w}, w : A[a/x], w : \forall x A, \Gamma \Rightarrow \Delta$}
\RightLabel{$(\forall_{l}^{*})$}
\UnaryInfC{$\R, a \in D_{w}, w : \forall x A, \Gamma \Rightarrow \Delta$}
\DisplayProof
\end{tabular}
}
\end{center}
Hence, our result follows by Lem.~\ref{lem:trans-refl-fo-elim} and Lem.~\ref{lm:nd-cd-admiss}.
\qed
\end{proof}

\begin{theorem}
\label{thm:treelike-derivations-QC}
Every derivation $\Pi$ of a labelled formula $w : A$ in the calculus $\intfocdl - \{\idfo,\botr,\imprl,\alll,\allr,\existsr,\refl,\trans,\nd,\cd\}$ contains solely treelike sequents with $w$ the root of each sequent in the derivation.
\end{theorem}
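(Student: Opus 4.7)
The plan is to adapt the proof of Theorem~\ref{thm:treelike-derivations} verbatim, since the graph of a sequent depends only on its relational atoms and the first-order-specific rules do not touch those atoms. Accordingly, the first step is to catalog which rules of $\intfocdl - \{\idfo,\botr,\imprl,\alll,\allr,\existsr,\refl,\trans,\nd,\cd\}$ can alter the relational-atom multiset $\R$. Inspecting the remaining rules, the only ones that modify $\R$ (read bottom-up) are $(\neg_r)$ and $\imprr$, each of which introduces a single fresh relational atom $w\le v$ with $v$ an eigenvariable. The propositional connectives' rules, $\idfonew$, $(\imp_l^*)$, $\lift$, $(\forall_r^*)$, $(\forall_l^*)$, $(\exists_l)$, and $(\exists_r^*)$ all preserve $\R$ from premise to conclusion; the quantifier rules only add/remove domain atoms $a\in D_w$ and labelled formulae. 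In particular, no rule of the restricted calculus creates a reflexive edge, closes a cycle, or adds a second edge into an existing vertex.

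Given this, I would run the three contradiction arguments exactly as in Theorem~\ref{thm:treelike-derivations}. For connectivity, if some sequent in $\Pi$ has a disconnected graph, then because every rule either preserves $\R$ or (top-down) removes at most one relational atom, disconnectedness propagates downward, forcing $\Rightarrow w:A$ to be disconnected — a contradiction, since its graph is a single vertex. For acyclicity, a putative directed cycle $u\le v_1,\ldots,v_n\le u$ consists entirely of labels that cannot be eigenvariables of any $(\neg_r)$ or $\imprr$ inference below (eigenvariables are forbidden in the conclusion), so the cycle survives into the end sequent — again a contradiction. Backwards branching is ruled out by the same argument: if $v\le z,u\le z$ occur, then $z$ cannot be an eigenvariable below, so the configuration descends to $\Rightarrow w:A$.

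To see that $w$ is the root of every sequent in $\Pi$, I would argue by bottom-up induction on $\Pi$. The end sequent $\Rightarrow w:A$ has $w$ as its unique vertex and is thus rooted at $w$. For the inductive step, any rule application either leaves $\R$ unchanged (in which case rootedness at $w$ is preserved trivially) or, in the $(\neg_r)$/$\imprr$ cases, introduces a forward edge $z\le v$ from an existing vertex $z$ to a fresh $v$; since $z$ was reachable from $w$ in the conclusion, $v$ is reachable from $w$ in the premise, and uniqueness of the path is preserved because $v$ is new.

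I do not expect a real obstacle here; the only bookkeeping worth double-checking is that the side conditions on $(\forall_l^*)$, $(\exists_r^*)$, and $\idfonew$ — which refer to undirected paths in $\R$ — never force the addition of relational atoms during a rule application, and that the quantifier rules $(\forall_r^*)$ and $(\exists_l)$ introduce their eigenparameter only in a domain atom $a\in D_w$ rather than in $\R$. Both are immediate from Fig.~\ref{fig:new-rules-focd}, so the propositional argument transfers essentially unchanged.
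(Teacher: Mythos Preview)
Your proposal is correct and follows exactly the approach the paper intends: the paper's own proof is simply ``Similar to Thm.~\ref{thm:treelike-derivations},'' and you have correctly spelled out why that transfer works, namely that the remaining first-order rules $(\forall_r^*)$, $(\forall_l^*)$, $(\exists_l)$, $(\exists_r^*)$, and $\idfonew$ only manipulate domain atoms and labelled formulae, leaving the relational-atom multiset $\R$ untouched so that $(\neg_r)$ and $\imprr$ remain the only rules affecting the graph structure.
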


\begin{proof}
Similar to Thm.~\ref{thm:treelike-derivations}.
\qed
\end{proof}

We refer to $\intfocdl-\{\idfo, \imprl, \alll, \allr, \existsr, \refl,\trans,\nd,\cd\}$ (restricted to using treelike sequents) under the $\switch$ translation as $\nintfocd^{*}$. It is crucial to point out that by the definition of $\switch$ (Def.~\ref{def:switch}) and the definition of the graph of a labelled sequent, domain atoms $a \in D_{w}$ are omitted when translating from labelled to nested (and the $\{\idfonew,(\forall_{l}^{*}),(\exists_{r}^{*})\}$ side conditions become unnecessary). Hence, 
up to copies of principal formulae in the premises of some rules, $\nintfocd^{*}$ is identical to $\nintfocd$ (cf. App.~\ref{app:new-nested-calculi}). In fact, through additional work, one can eliminate such copies of principal formulae, begetting the complete extraction of $\nint$ and $\nintfocd$ from $\nint^{*}$ and $\nintfocd^{*}$ (and therefore, from $\lint$ and $\lintfocd$).


An interesting consequence of our work is that $\nint^{*}$ and $\nintfocd^{*}$ inherited favorable proof-theoretic properties as a consequence of their extraction. Such properties are listed in Cor.~\ref{cor:nint-nintfocd-inherit-proeprties} below with admissible rules found in Fig.~\ref{fig:admiss-rules-nint}.

\begin{figure}
\noindent\hrule

\begin{center}
\begin{tabular}{c @{\hskip 1em} c @{\hskip 1em} c} 
\AxiomC{$\Sigma\{X \far Y, [\Sigma'], [\Sigma']\}$}
\RightLabel{$(ctr_{1})$}
\UnaryInfC{$\Sigma\{X \far Y, [\Sigma']\}$}
\DisplayProof

&

\AxiomC{$\Sigma\{X,X \far Y\}$}
\RightLabel{$(ctr_{2})$}
\UnaryInfC{$\Sigma\{X \far Y\}$}
\DisplayProof

&

\AxiomC{$\Sigma\{X \far Y,Y\}$}
\RightLabel{$(ctr_{3})$}
\UnaryInfC{$\Sigma\{X \far Y\}$}
\DisplayProof


\end{tabular}
\end{center}

\begin{center}
\begin{tabular}{c @{\hskip 1em} c @{\hskip 1em} c}
\AxiomC{$\Sigma\{X \far Y\}$}
\RightLabel{$(wk_{1})$}
\UnaryInfC{$\Sigma\{X \far Y, [\Sigma']\}$}
\DisplayProof

&

\AxiomC{$\Sigma\{X \far Y\}$}
\RightLabel{$(wk_{2})$}
\UnaryInfC{$\Sigma\{X \far Y,Z\}$}
\DisplayProof

&

\AxiomC{$\Sigma\{X \far Y\}$}
\RightLabel{$(wk_{3})$}
\UnaryInfC{$\Sigma\{X,Z \far Y\}$}
\DisplayProof
\end{tabular}
\end{center}

\hrule
\caption{Examples of admissible structural rules in $\nint^{*}$ and $\nintfocd^{*}$.}
\label{fig:admiss-rules-nint}
\end{figure}

\begin{corollary}
\label{cor:nint-nintfocd-inherit-proeprties}
The calculi $\nint^{*}$ and $\nintfocd^{*}$ have inherited: (i) cut-free completeness, (ii) invertibility of all rules, and (iii) admissibility of all rules in Fig.~\ref{fig:admiss-rules-nint}.
\end{corollary}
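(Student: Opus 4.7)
The plan is to inherit each of the three properties from the corresponding properties of the underlying extended labelled calculi via the $\switch$ translation. The key observation, provided by Thm.~\ref{thm:admiss-all-rules-propositional} and Thm.~\ref{thm:focd-admissible-rules-all}, is that every derivation in $\lint+\{\idnew,(\neg_{l}),(\neg_{r}),(\imp_{l}^{*}),\lift\}$ (resp.\ $\intfocdl$) can be transformed into one in which the rules eliminated to obtain $\nint^{*}$ (resp.\ $\nintfocd^{*}$) never appear. Combining this with Thm.~\ref{thm:treelike-derivations} (resp.\ Thm.~\ref{thm:treelike-derivations-QC}), every sequent in such a restricted derivation is treelike, so $\switch$ is defined on every sequent and converts the whole derivation into one of $\nint^{*}$ (resp.\ $\nintfocd^{*}$).

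For (i), I would start from soundness and completeness of the labelled calculi (Thm.~\ref{thm:lint-properties}(vi) and its first-order analogue) together with admissibility of $\cut$ in the labelled systems (Thm.~\ref{thm:lint-properties}(v)). Given a theorem $A$ of $\int$ or $\intfocd$, a cut-free labelled derivation of $\Rightarrow w:A$ exists; applying the admissibility results above removes all non-translatable rules, after which the derivation is entirely treelike and $\switch$ yields a cut-free nested derivation. Conversely, soundness transfers by reading any $\nint^{*}$/$\nintfocd^{*}$ derivation back as a labelled one.

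For (ii), Lem.~\ref{lem:extended-lint-properties}(iii)-(iv) already establishes (hp-)invertibility of every rule in the extended labelled calculi. Because each rule of $\nint^{*}$ and $\nintfocd^{*}$ is the $\switch$-image of an extended-labelled rule applied to a treelike sequent, and because bottom-up application of each relevant labelled rule either leaves the underlying graph unchanged or adds a forward edge to a fresh label (hence preserves treelikeness), the inversion procedure stays within the treelike fragment. Invertibility of the nested rule then reads off from invertibility of its labelled preimage.

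For (iii), the weakening rules $(wk_{1})$, $(wk_{2})$, $(wk_{3})$ translate directly to the admissible labelled $\wk$ rule via Lem.~\ref{lem:extended-lint-properties}(ii), and the contractions $(ctr_{2})$, $(ctr_{3})$ translate to $(ctr_{F_{l}})$ and $(ctr_{F_{r}})$. The main obstacle is $(ctr_{1})$: it contracts two isomorphic nested subtrees sharing a parent, which at the labelled level means that two distinct $\leq$-successors $v_{1},v_{2}$ of a label $w$ carry pointwise identical content throughout their entire subtrees. My plan is to use hp-admissibility of $(lsub)$ from Lem.~\ref{lem:extended-lint-properties}(ii) to rename $v_{2}$ and all its descendants to match the corresponding labels in the subtree rooted at $v_{1}$ (eigenvariable conditions can be preserved by prior fresh renaming in the premise), then repeatedly apply $(ctr_{R})$ to the duplicated relational atoms and $(ctr_{F_{l}})$, $(ctr_{F_{r}})$ to the duplicated labelled formulas throughout the identified subtree. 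Verifying that the resulting sequent is still treelike and that the successive applications stay inside the restricted calculus is the delicate technical point; once this is in hand, the corollary follows.
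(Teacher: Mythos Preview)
Your proposal is correct and follows essentially the same route as the paper: cut-free completeness via Thm.~\ref{thm:lint-properties} together with Thm.~\ref{thm:admiss-all-rules-propositional}/\ref{thm:focd-admissible-rules-all} and the treelike theorems, invertibility via Lem.~\ref{lem:extended-lint-properties}(iii)--(iv), and the structural admissibilities via the labelled $\wk$, $(ctr_{F_{l}})$, $(ctr_{F_{r}})$. In particular, your handling of $(ctr_{1})$ by $(lsub)$-renaming followed by $(ctr_{R})$, $(ctr_{F_{l}})$, $(ctr_{F_{r}})$ is exactly the reduction the paper invokes.
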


\begin{proof} All properties follow from Lem.~\ref{lem:extended-lint-properties} and Thm.'s~\ref{thm:lint-properties}, \ref{thm:admiss-all-rules-propositional}, \ref{thm:treelike-derivations}, \ref{thm:focd-admissible-rules-all}, and \ref{thm:treelike-derivations-QC}. The admissibility of $(ctr_{1})$ follows from the admissibility of the rules in the set $\{(lsub),(ctr_{R}),(ctr_{F_{l}}),(ctr_{F_{r}})\}$, the admissibility of $(ctr_{2})$ follows from the admissibility of $(ctr_{F_{l}})$, the admissibility of $(ctr_{3})$ follows from the admissibility of $(ctr_{F_{r}})$, and the admissibility of $\{(wk_{1}), (wk_{2}), (wk_{3})\}$ follows from the admissibility of $\wk$ in the labelled variants of $\nint^{*}$ and $\nintfocd^{*}$.
\qed
\end{proof}

\section{Conclusion}\label{conclusion}

In this paper, we showed how to extract Fitting's nested calculi (up to copies of principal formulae in premises) from the labelled calculi $\lint$ and $\lintfocd$. The extraction is obtained via the elimination of structural rules and through the addition of special rules to $\lint$ and $\lintfocd$, necessitating the use of only treelike sequents in proofs of theorems. Consequently, the extraction of the nested calculi from the labelled calculi demonstrated that the former inherited favorable proof-theoretic properties from the latter (cut-free completeness, invertibility of rules, etc.). 


Regarding future work, the author aims to 
investigate 
modal and intermediate logics that allow for the extraction of cut-free nested calculi from their labelled calculi, as well as provide new nested calculi for logics lacking one. 
These results could also prove beneficial in the explication of a general methodology for obtaining nested calculi well-suited for automated reasoning methods and other applications (by exploiting general results from the labelled paradigm~\cite{CiaMafSpe13,DycNeg12,Neg16}). Such results have the added benefit that they expose interesting connections between the different proof-theoretic formalisms involved.








\ \\
\noindent
\acknowledgments{The author would like to express his gratitude to his supervisor A. Ciabattoni for her support and helpful comments.}

\bibliographystyle{abbrv}
\bibliography{bib}

\appendix

\section{Proofs}
\label{app:proofs}

In what follows, all results are proven with the assumption that $A[a/x] = A(a)$. Nevertheless, all results continue to hold in the vacuous setting as well (where $a$ does not occur in $A[a/x]$).

\begin{customthm}{\ref{thm:lint-properties}} The calculi $\lint$ and $\lintfocd$ have the following properties:
\begin{itemize}

\item[$(i)$] 

\begin{itemize}

\item[(a)] For all $A \in \mathcal{L}$, $ \vdash_{\lint} \R,w \leq v, w : A, \Gamma \Rightarrow v : A, \Delta$;

\item[(b)] For all $A \in \mathcal{L}$, $ \vdash_{\lint} \R,w:A,\Gamma \Rightarrow \Delta, w :A$; 

\item[(c)] For all $A \in \mathcal{L_{Q}}$, $\vdash_{\lintfocd} \R,w \leq v, \vv{a} \in D_{w}, w : A(\vv{a}), \Gamma \Rightarrow v : A(\vv{a}), \Delta$; 

\item[(d)] For all $A \in \mathcal{L_{Q}}$, $\vdash_{\lintfocd} \R, \vv{a} \in D_{w}, w:A(\vv{a}),\Gamma \Rightarrow \Delta, w :A(\vv{a})$;

\end{itemize}

\item[$(ii)$] The $(lsub)$ and $(psub)$ rules are height-preserving $($i.e. `hp-'$)$ admissible;
\begin{center}
\begin{tabular}{c @{\hskip 1em} c}
\AxiomC{$\R,\Gamma \Rightarrow \Delta$}
\RightLabel{$(lsub)$}
\UnaryInfC{$\R[w/v],\Gamma[w/v] \Rightarrow \Delta[w/v]$}
\DisplayProof

&

\AxiomC{$\R,\Gamma \Rightarrow \Delta$}
\RightLabel{$(psub)$}
\UnaryInfC{$\R[a/b],\Gamma[a/b] \Rightarrow \Delta[a/b]$}
\DisplayProof
\end{tabular}
\end{center}


\item[$(iii)$] All rules are hp-invertible;

\item[$(iv)$] The $\wk$ and $\{(ctr_{R}),(ctr_{F_{l}}),(ctr_{F_{r}})\}$ rules (below) are hp-admissible;

\begin{center}
\begin{tabular}{c @{\hskip 2em} c} 
\AxiomC{$\R,\Gamma \Rightarrow \Delta$}
\RightLabel{$(wk)$}
\UnaryInfC{$\R,\R',\Gamma',\Gamma \Rightarrow \Delta',\Delta$}
\DisplayProof

&

\AxiomC{$\R,\R',\R',\Gamma \Rightarrow \Delta$}
\RightLabel{$(ctr_{R})$}
\UnaryInfC{$\R,\R',\Gamma \Rightarrow \Delta$}
\DisplayProof

\end{tabular}
\end{center}
\begin{center}
\begin{tabular}{c @{\hskip 2em} c} 

\AxiomC{$\R,\Gamma',\Gamma',\Gamma \Rightarrow \Delta$}
\RightLabel{$(ctr_{F_{l}})$}
\UnaryInfC{$\R,\Gamma',\Gamma \Rightarrow \Delta$}
\DisplayProof

&

\AxiomC{$\R,\Gamma \Rightarrow \Delta, \Delta', \Delta'$}
\RightLabel{$(ctr_{F_{r}})$}
\UnaryInfC{$\R,\Gamma \Rightarrow \Delta, \Delta'$}
\DisplayProof
\end{tabular}
\end{center}

\item[$(v)$] The $\cut$ rule (below) is admissible;

\begin{center}
\begin{tabular}{c}
\AxiomC{$\R,\Gamma \Rightarrow \Delta, w :A$}
\AxiomC{$\R,w :A,\Gamma \Rightarrow \Delta$}
\RightLabel{$\cut$}
\BinaryInfC{$\R,\Gamma \Rightarrow \Delta$}
\DisplayProof
\end{tabular}
\end{center}

\item[$(vi)$] $\lint$ $(\lintfocd)$ is sound and complete for $\int$ $(\intfocd$, resp.$)$.

\end{itemize}
\end{customthm}

\begin{proof} We argue that properties (i)--(vi) obtain for $\lintfocd$.


\textit{Claim (i).} Claims (a) and (b) are shown in~\cite[Lem.~5.1]{DycNeg12}, so we focus on proving (c) and (d). We prove claims (c) and (d) together by mutual induction on the complexity of $A$.\\

\textit{Claim (i)-(c).} The base case is resolved using the $\idfo$ rule for atomic formulae and the $\botr$ rule for $\bot$. We provide the cases for $\imp$, $\exists$, and $\forall$ for the inductive step; the $\lor$ and $\land$ cases are simple to verify. Note that for the $\imp$ case the parameters $\vv{a_{1}}$ and  $\vv{a_{2}}$ are all and only those parameters that occur in $\vv{a}$, with $\vv{a_{1}}$ and  $\vv{a_{2}}$ potentially intersecting. Also, the $\imp$ case relies on claim (d) (shown below).

\begin{center}
\resizebox{\columnwidth}{!}{
\begin{tabular}{c}
\AxiomC{$\Pi_{1}$}
\AxiomC{$\Pi_{2}$}
\RightLabel{$\imprl$}
\BinaryInfC{$\R, w \leq v, v \leq u, w \leq u, \vv{a} \in D_{w}, w : A(\vv{a_{1}}) \imp B(\vv{a_{2}}), u : A(\vv{a_{1}}), \Gamma \Rightarrow u : B(\vv{a_{2}}), \Delta$}
\RightLabel{$\trans$}
\UnaryInfC{$\R, w \leq v, v \leq u, \vv{a} \in D_{w}, w : A(\vv{a_{1}}) \imp B(\vv{a_{2}}), u : A(\vv{a_{1}}), \Gamma \Rightarrow u : B(\vv{a_{2}}), \Delta$}
\RightLabel{$\imprr$}
\UnaryInfC{$\R, w \leq v, \vv{a} \in D_{w}, w : A(\vv{a_{1}}) \imp B(\vv{a_{2}}), \Gamma \Rightarrow v : A(\vv{a_{1}}) \imp B(\vv{a_{2}}), \Delta$}
\DisplayProof
\end{tabular}
}
\end{center}

\begin{center}
\resizebox{\columnwidth}{!}{
\begin{tabular}{c c c}
$\Pi_{1}$

&

$= \Big \{$

&

\AxiomC{}
\RightLabel{(d)}
\dashedLine
\UnaryInfC{$\R, w \leq v, v \leq u, w \leq u, \vv{a} \in D_{w}, w : A(\vv{a_{1}}) \imp B(\vv{a_{2}}), u : A(\vv{a_{1}}), \Gamma \Rightarrow u : A(\vv{a_{1}}), u : B(\vv{a_{2}}), \Delta$}
\DisplayProof
\end{tabular}
}
\end{center}
\begin{center}
\resizebox{\columnwidth}{!}{
\begin{tabular}{c c c}
$\Pi_{2}$

&

$= \Big \{$

&

\AxiomC{}
\RightLabel{(d)}
\dashedLine
\UnaryInfC{$\R, w \leq v, v \leq u, w \leq u, \vv{a} \in D_{w}, w : A(\vv{a_{1}}) \imp B(\vv{a_{2}}), u : A(\vv{a_{1}}),  u : B(\vv{a_{2}}), \Gamma \Rightarrow u : B(\vv{a_{2}}), \Delta$}
\DisplayProof
\end{tabular}
}
\end{center}
\begin{center}
\begin{tabular}{c} 
\AxiomC{}
\RightLabel{IH}
\dashedLine
\UnaryInfC{$\R, w \leq v, b \in D_{w}, \vv{a} \in D_{w}, w : A(b,\vv{a}), \Gamma \Rightarrow v : A(b,\vv{a}), \Delta$}
\RightLabel{$\existsr$}
\UnaryInfC{$\R, w \leq v, b \in D_{w}, \vv{a} \in D_{w}, w : A(b,\vv{a}), \Gamma \Rightarrow v : \exists x A(x,\vv{a}), \Delta$}
\RightLabel{$\existsl$}
\UnaryInfC{$\R, w \leq v, \vv{a} \in D_{w}, w : \exists x A(x,\vv{a}), \Gamma \Rightarrow v : \exists x A(x,\vv{a}), \Delta$}
\DisplayProof
\end{tabular}
\end{center}
\begin{center}
\resizebox{\columnwidth}{!}{
\begin{tabular}{c}
\AxiomC{}
\RightLabel{IH}
\dashedLine
\UnaryInfC{$\R, w \leq v, v \leq u, b \in D_{u}, \vv{a} \in D_{w}, v : A(b,\vv{a}), w : \forall x A(x,\vv{a}), \Gamma \Rightarrow u : A(b,\vv{a}), \Delta$}
\RightLabel{$\alll$}
\UnaryInfC{$\R, w \leq v, v \leq u, b \in D_{u}, \vv{a} \in D_{w}, w : \forall x A(x,\vv{a}), \Gamma \Rightarrow u : A(b,\vv{a}), \Delta$}
\RightLabel{$\allr$}
\UnaryInfC{$\R, w \leq v, \vv{a} \in D_{w}, w : \forall x A(x,\vv{a}), \Gamma \Rightarrow v : \forall x A(x,\vv{a}), \Delta$}
\DisplayProof
\end{tabular}
}
\end{center}

\textit{Claim (i)-(d).} The base case for atomic formulae is shown below; the case for $\bot$ is omitted as it is simple to verify using the $\botr$ rule. We provide the $\imp$, $\exists$, and $\forall$ cases as the cases for the other connectives are straightforward to prove.\\ 

\begin{center}
\begin{tabular}{c}
\AxiomC{}
\RightLabel{$\idfo$}
\UnaryInfC{$\R, w \leq w, \vv{a} \in D_{w}, w : p(\vv{a}), \Gamma \Rightarrow w : p(\vv{w}), \Delta$}
\RightLabel{$\refl$}
\UnaryInfC{$\R, \vv{a} \in D_{w}, w : p(\vv{a}), \Gamma \Rightarrow w : p(\vv{a}), \Delta$}
\DisplayProof
\end{tabular}
\end{center}
\begin{center}
\begin{tabular}{c} 
\AxiomC{$\Pi_{1}$}
\AxiomC{$\Pi_{2}$}
\RightLabel{$\imprl$}
\BinaryInfC{$\R, w \leq v, \vv{a} \in D_{w}, w : A(\vv{a_{1}}) \imp B(\vv{a_{2}}), v : A(\vv{a_{1}}), \Gamma \Rightarrow v : B(\vv{a_{2}}), \Delta$}
\RightLabel{$\imprr$}
\UnaryInfC{$\R, \vv{a} \in D_{w}, w : A(\vv{a_{1}}) \imp B(\vv{a_{2}}), \Gamma \Rightarrow w : A(\vv{a_{1}}) \imp B(\vv{a_{2}}), \Delta$}
\DisplayProof
\end{tabular}
\end{center}
\begin{center}
\resizebox{\columnwidth}{!}{
\begin{tabular}{c c c}
$\Pi_{1}$

&

$= \Big \{$

&

\AxiomC{}
\RightLabel{IH}
\dashedLine
\UnaryInfC{$\R, w \leq v, \vv{a} \in D_{w}, w : A(\vv{a_{1}}) \imp B(\vv{a_{2}}), v : A(\vv{a_{1}}), \Gamma \Rightarrow v : A(\vv{a_{1}}), v : B(\vv{a_{2}}), \Delta$}
\DisplayProof
\end{tabular}
}
\end{center}
\begin{center}
\resizebox{\columnwidth}{!}{
\begin{tabular}{c c c}
$\Pi_{2}$

&

$= \Big \{$

&

\AxiomC{}
\RightLabel{IH}
\dashedLine
\UnaryInfC{$\R, w \leq v, \vv{a} \in D_{w}, w : A(\vv{a_{1}}) \imp B(\vv{a_{2}}), v : A(\vv{a_{1}}), v : B(\vv{a_{2}}), \Gamma \Rightarrow v : B(\vv{a_{2}}), \Delta$}
\DisplayProof
\end{tabular}
}
\end{center}
\begin{center}
\begin{tabular}{c} 
\AxiomC{}
\RightLabel{IH}
\dashedLine
\UnaryInfC{$\R, b \in D_{w}, \vv{a} \in D_{w}, w : A(b,\vv{a}), \Gamma \Rightarrow w : A(b,\vv{a}), \Delta$}
\RightLabel{$\existsr$}
\UnaryInfC{$\R, b \in D_{w}, \vv{a} \in D_{w}, w : A(b,\vv{a}), \Gamma \Rightarrow w : \exists x A(x,\vv{a}), \Delta$}
\RightLabel{$\existsl$}
\UnaryInfC{$\R, \vv{a} \in D_{w}, w : \exists x A(x,\vv{a}), \Gamma \Rightarrow w : \exists x A(x,\vv{a}), \Delta$}
\DisplayProof
\end{tabular}
\end{center}

\begin{center}
\resizebox{\columnwidth}{!}{
\begin{tabular}{c}
\AxiomC{}
\RightLabel{IH}
\dashedLine
\UnaryInfC{$\R, w \leq v, b \in D_{u}, \vv{a} \in D_{w}, v : A(b,\vv{a}), w : \forall x A(x,\vv{a}), \Gamma \Rightarrow v : A(b,\vv{a}), \Delta$}
\RightLabel{$\alll$}
\UnaryInfC{$\R, w \leq v, b \in D_{v}, \vv{a} \in D_{w}, w : \forall x A(x,\vv{a}), \Gamma \Rightarrow v : A(b,\vv{a}), \Delta$}
\RightLabel{$\allr$}
\UnaryInfC{$\R, \vv{a} \in D_{w}, w : \forall x A(x,\vv{a}), \Gamma \Rightarrow w : \forall x A(x,\vv{a}), \Delta$}
\DisplayProof
\end{tabular}
}
\end{center}

\textit{Claim (ii).} Hp-admissibility of $(lsub)$ and $(psub)$ are proven by induction on the height of the given derivation; both are similar to \cite[Lem.~5.1]{DycNeg12}. The only non-trivial cases for the former are the $\imprr$ and $\allr$ rules of the inductive step, and the non-trivial cases for the latter are the $\allr$ and $\existsl$ rules of the inductive step. In such cases the side condition must be preserved if the rule is to be applied. Nevertheless, this can be ensured in the usual way (cf. \cite[Lem.~5.1]{DycNeg12}) by evoking IH twice: first, IH replaces the eigenvariable of the $\imprr$, $\allr$, or $\existsl$ rule with a fresh parameter, and second, the substitution of $(lsub)$ or $(psub)$ is performed, after which, the corresponding rule may be applied to derive the desired conclusion.\\

\textit{Claim (iii).} The cases for the propositional rules are straightforward to check, so we omit them. By hp-admissibility of $\wk$ (property (v)), we know that $\existsr$, $\alll$, $\nd$, and $\cd$ are hp-invertible (note that the proof of $\wk$ admissibility does not depend on property (iii) holding, so we may evoke it). We therefore only need to check that $\allr$ and $ \existsl$ are hp-invertible. We prove the claim by induction on the height of the given derivation of $\R, \Gamma \Rightarrow \Delta, w : \forall x A(x)$ for the $\allr$ rule; the proof for $\existsl$ is similar.\\

\textit{Base case.} If the height of the derivation is $0$, then $\R, w \leq v, a \in D_{v}, \Gamma \Rightarrow \Delta, v : A(a)$ is either an instance of $\botr$ or $\idfo$.\\

\textit{Inductive step.} If the last rule applied in the derivation is $\conrl$, $\conrr$, $\disrl$, $\disrr$, $\imprl$, $\refl$, $\trans$, $\existsr$, $\alll$, $\nd$, or $\cd$, then the conclusion follows by applying IH followed by an application of the associated rule. If the last rule applied is $\imprr$, $\existsl$, or $\allr$ (where the principal formula of $\allr$ is in $\Delta$), then we potentially apply hp-admissibility of $(lsub)$ or $(psub)$ (property (ii) above), followed by IH, and then an application of the corresponding rule. If the last rule applied is $\allr$, with $w : \forall x A(x)$ the principal formula, then the premise of the inference gives the desired result.\\

\textit{Claim (iv).} We prove the hp-admissibility of $\wk$ and each contraction rule in the set $\{(ctr_{R}),(ctr_{F_{l}}),(ctr_{F_{r}})\}$ by induction on the height of the given derivation. Hp-admissibility of $\wk$ is relatively straightforward; the only non-trivial cases are the $\imprr$, $\allr$, and $\existsl$ rules of the inductive step. Such cases are resolved, however, by potentially applying hp-admissibility of $(lsub)$ or $(psub)$ (property (ii) above), then IH, and last the corresponding rule. Hp-admissibility of $(ctr_{R})$ is simple; any application of the rule to an initial sequent yields an initial sequent, and every case in the inductive step is resolved by applying IH followed by the corresponding rule. The $(ctr_{F_{l}})$ and $(ctr_{F_{r}})$ cases are also quite straightforward. The only non-trivial cases occur when a principal occurrence of an $\imprr$, $\allr$, or $\existsl$ rule is contracted. In such cases, the inductive step is solved by evoking the hp-invertibility of each rule (property (iii) above), followed by an application of IH, and hp-admissibility of $(ctr_{R})$ if needed.\\

\textit{Claim (v).} We prove the admissibility of $\cut$ by induction on the lexicographic ordering of tuples $(|A|,h_{1},h_{2})$, where $|A|$ is the complexity of the cut formula $A$, $h_{1}$ is the height of the left premise of $\cut$, and $h_{2}$ is the height of the right premise of $\cut$. We only consider the cases where the cut formula is prinicpal in both premises of $\cut$; the other cases where the cut formula is not principal in one premise, or in both premises, are relatively straightforward to verify, though the large number of cases makes the proof tedious. By the cut admissibility theorem for $\lint$~\cite[Thm.~5.6]{DycNeg12}, we need only verify the cases where the cut formula is of the form $\forall x B(x)$ or $\exists x B(x)$. We show each case in turn below:

\begin{center}
\resizebox{\columnwidth}{!}{
\begin{tabular}{c} 
\AxiomC{$\R, w \leq v, a \in D_{v}, w \leq u, b \in D_{u}, \Gamma \Rightarrow \Delta, u : B(b)$}
\RightLabel{$\allr$}
\UnaryInfC{$\R, w \leq v, a \in D_{v}, \Gamma \Rightarrow \Delta, w : \forall x B(x)$}

\AxiomC{$\R, w \leq v, a \in D_{v}, v : B(a), w : \forall x B(x), \Gamma \Rightarrow \Delta$}
\RightLabel{$\alll$}
\UnaryInfC{$\R, w \leq v, a \in D_{v}, w : \forall x B(x), \Gamma \Rightarrow \Delta$}

\RightLabel{$\cut$}
\BinaryInfC{$\R, w \leq v, a \in D_{v}, \Gamma \Rightarrow \Delta$}

\DisplayProof
\end{tabular}
}
\end{center}

The case is resolved as follows:

\begin{center}
\begin{tabular}{c}
\AxiomC{$\Pi_{1}$}
\AxiomC{$\Pi_{2}$}
\RightLabel{$\cut$}
\BinaryInfC{$\R, w \leq v, a \in D_{v}, \Gamma \Rightarrow \Delta$}
\DisplayProof
\end{tabular}
\end{center}

\begin{center}
\begin{tabular}{c c c}
$\Pi_{1}$

&

$= \Bigg \{$

&

\AxiomC{$\R, w \leq u, b \in D_{u}, \Gamma \Rightarrow \Delta, u : B(b)$}
\RightLabel{$(lsub)$}
\dashedLine
\UnaryInfC{$\R, w \leq v, b \in D_{v}, \Gamma \Rightarrow \Delta, v : B(b)$}
\RightLabel{$(psub)$}
\dashedLine
\UnaryInfC{$\R, w \leq v, a \in D_{v}, \Gamma \Rightarrow \Delta, v : B(a)$}
\DisplayProof
\end{tabular}
\end{center}

\begin{center}
\resizebox{\columnwidth}{!}{
\begin{tabular}{c c c}
$\Pi_{2}$

&

$= \Bigg \{$

&

\AxiomC{$\R, w \leq u, b \in D_{u}, \Gamma \Rightarrow \Delta, u : B(b)$}
\RightLabel{$\wk$}
\dashedLine
\UnaryInfC{$\R, w \leq v, a \in D_{v}, v : B(a), w \leq u, b \in D_{u}, \Gamma \Rightarrow \Delta, u : B(b)$}
\RightLabel{$\allr$}
\UnaryInfC{$\R, w \leq v, a \in D_{v}, v : B(a), \Gamma \Rightarrow \Delta, w : \forall x B(x)$}

\AxiomC{$\R, w \leq v, a \in D_{v}, v : B(a), w : \forall x B(x), \Gamma \Rightarrow \Delta$}

\RightLabel{$\cut$}
\BinaryInfC{$\R, w \leq v, a \in D_{v}, v : B(a), \Gamma \Rightarrow \Delta$}
\DisplayProof
\end{tabular}
}
\end{center}
Observe that the $\cut$ in $\Pi_{2}$ has a height $h_{2}$ that is one less than the original $\cut$, and the second $\cut$ is on a formula $B(a)$ that is of less complexity than the original $\cut$. Let us examine the $\exists x B(x)$ case.

\begin{center}
\resizebox{\columnwidth}{!}{
\begin{tabular}{c}
\AxiomC{$\R, a \in D_{w}, \Gamma \Rightarrow \Delta, w: B(a), w: \exists x B(x)$}
\RightLabel{$\existsr$}
\UnaryInfC{$\R, a \in D_{w}, \Gamma \Rightarrow \Delta, w: \exists x B(x)$}

\AxiomC{$\R, a \in D_{w}, b \in D_{w}, w: B(b), \Gamma \Rightarrow \Delta$}
\RightLabel{$\existsl$}
\UnaryInfC{$\R, a \in D_{w}, w : \exists x B(x), \Gamma \Rightarrow \Delta$}

\RightLabel{$\cut$}

\BinaryInfC{$\R, a \in D_{w}, \Gamma \Rightarrow \Delta$}
\DisplayProof
\end{tabular}
}
\end{center}
The case is resolved as follows:
\begin{center}
\begin{tabular}{c}
\AxiomC{$\Pi_{1}$}
\AxiomC{$\Pi_{2}$}
\RightLabel{$\cut$}
\BinaryInfC{$\R, a \in D_{w}, \Gamma \Rightarrow \Delta$}
\DisplayProof
\end{tabular}
\end{center}

\begin{center}
\resizebox{\columnwidth}{!}{
\begin{tabular}{c c c}
$\Pi_{1}$

&

$= \Bigg \{ \qquad$

&

\AxiomC{$\R, a \in D_{w}, \Gamma \Rightarrow \Delta, w: B(a), w: \exists x B(x)$}

\AxiomC{$\R, a \in D_{w}, b \in D_{w}, w: B(b), \Gamma \Rightarrow \Delta$}
\RightLabel{$\wk$}
\dashedLine
\UnaryInfC{$\R, a \in D_{w}, b \in D_{w}, w: B(b), \Gamma \Rightarrow w : B(a), \Delta$}
\RightLabel{$\existsl$}
\UnaryInfC{$\R, a \in D_{w}, w : \exists x B(x), \Gamma \Rightarrow w : B(a), \Delta$}

\RightLabel{$\cut$}

\BinaryInfC{$\R, a \in D_{w}, \Gamma \Rightarrow w : B(a), \Delta$}

\DisplayProof
\end{tabular}
}
\end{center}

\begin{center}
\begin{tabular}{c c c}
$\Pi_{2}$

&

$= \Bigg \{$

&

\AxiomC{$\R, a \in D_{w}, b \in D_{w}, w: B(b), \Gamma \Rightarrow \Delta$}
\RightLabel{$(psub)$}
\dashedLine
\UnaryInfC{$\R, a \in D_{w}, a \in D_{w}, w: B(a), \Gamma \Rightarrow \Delta$}
\RightLabel{$(ctr_{R})$}
\dashedLine
\UnaryInfC{$\R, a \in D_{w}, w: B(a), \Gamma \Rightarrow \Delta$}
\DisplayProof
\end{tabular}
\end{center}
Observe that the $\cut$ in the $\Pi_{1}$ inference has a height $h_{1}$ that is one less than the original $\cut$, and the second $\cut$ is on a formula of smaller complexity.\\

\textit{Claim (vi).} Soundness is shown by interpreting labelled sequents on Kripke models for $\intfocd$~\cite[Ch.~3]{GabSheSkv09} and proving that validity is preserved from premise to conclusion. By~\cite{DycNeg12}, we know that $\lint$ is complete relative to $\int$, so we need only show that $\lintfocd$ can derive quantifier axioms and simulate the inference rules of the axiomatization for $\intfocd$ (see~\cite[Ch.~2.6]{GabSheSkv09}). We say that a formula $A(\vv{a})$ is derivable in $\lintfocd$ if and only if $\vv{a} \in D_{w} \Rightarrow w : A(\vv{a})$ is derivable in $\lintfocd$.


\begin{center}
\resizebox{\columnwidth}{!}{
\begin{tabular}{c} 
\AxiomC{}
\RightLabel{Prop.~(i)}
\dashedLine
\UnaryInfC{$w \leq u, u \leq u, \vv{a} \in D_{w}, a \in D_{w}, a \in D_{u}, \vv{a} \in D_{u}, u : \forall x A(\vv{a},x), u :A(\vv{a},a) \Rightarrow u : A(\vv{a},a)$}
\RightLabel{$\alll$}
\UnaryInfC{$w \leq u, u \leq u, \vv{a} \in D_{w}, a \in D_{w}, a \in D_{u}, \vv{a} \in D_{u}, u : \forall x A(\vv{a},x) \Rightarrow u : A(\vv{a},a)$}
\RightLabel{$\nd \times k$}
\UnaryInfC{$w \leq u, u \leq u, \vv{a} \in D_{w}, a \in D_{w}, u : \forall x A(\vv{a},x) \Rightarrow u : A(\vv{a},a)$}
\RightLabel{$\refl$}
\UnaryInfC{$w \leq u, \vv{a} \in D_{w}, a \in D_{w}, u : \forall x A(\vv{a},x) \Rightarrow u : A(\vv{a},a)$}
\RightLabel{$\imprr$}
\UnaryInfC{$\vv{a} \in D_{w},a \in D_{w} \Rightarrow w : \forall x A(\vv{a},x) \imp A(\vv{a},a)$}
\DisplayProof
\end{tabular}
}
\end{center}

\begin{center}
\resizebox{\columnwidth}{!}{
\begin{tabular}{c}
\AxiomC{}
\RightLabel{Prop.~(i)}
\dashedLine
\UnaryInfC{$w \leq u, \vv{a} \in D_{w},a \in D_{w}, \vv{a} \in D_{u}, a \in D_{u}, u : A(a) \Rightarrow u : A(\vv{a},a), u : \exists x A(\vv{a},x)$}
\RightLabel{$\existsr$}
\UnaryInfC{$w \leq u, \vv{a} \in D_{w},a \in D_{w}, \vv{a} \in D_{u}, a \in D_{u}, u : A(\vv{a},a) \Rightarrow u : \exists x A(\vv{a},x)$}
\RightLabel{$\nd \times k$}
\UnaryInfC{$w \leq u, \vv{a} \in D_{w},a \in D_{w},  u : A(\vv{a},a) \Rightarrow u : \exists x A(\vv{a},x)$}
\RightLabel{$\imprr$}
\UnaryInfC{$\vv{a} \in D_{w},a \in D_{w} \Rightarrow w : A(\vv{a},a) \imp \exists x A(\vv{a},x)$}
\DisplayProof
\end{tabular}
}
\end{center}

\begin{center}
\resizebox{\columnwidth}{!}{
\begin{tabular}{c}
\AxiomC{$\Pi_{1}$}
\AxiomC{$\Pi_{2}$}
\RightLabel{$\imprl$}
\BinaryInfC{$w \leq v, v \leq u, u \leq z, v \leq z, \vv{a} \in D_{w}, \vv{b} \in D_{w}, a \in D_{z}, z : B(\vv{b}) \imp A(\vv{a},a), v : \forall x (B(\vv{b}) \imp A(\vv{a},x)), u : B(\vv{b}) \Rightarrow z : A(\vv{a},a)$}
\RightLabel{$\alll$}
\UnaryInfC{$w \leq v, v \leq u, u \leq z, v \leq z, \vv{a} \in D_{w}, \vv{b} \in D_{w}, a \in D_{z}, v : \forall x (B(\vv{b}) \imp A(\vv{a},x)), u : B(\vv{b}) \Rightarrow z : A(\vv{a},a)$}
\RightLabel{$\trans$}
\UnaryInfC{$w \leq v, v \leq u, u \leq z, \vv{a} \in D_{w}, \vv{b} \in D_{w}, a \in D_{z}, v : \forall x (B(\vv{b}) \imp A(\vv{a},x)), u : B(\vv{b}) \Rightarrow z : A(\vv{a},a)$}
\RightLabel{$\allr$}
\UnaryInfC{$w \leq v, v \leq u, \vv{a} \in D_{w}, \vv{b} \in D_{w}, v : \forall x (B(\vv{b}) \imp A(\vv{a},x)), u : B(\vv{b}) \Rightarrow u : \forall x A(\vv{a},x)$}
\RightLabel{$\imprr$}
\UnaryInfC{$w \leq v, \vv{a} \in D_{w}, \vv{b} \in D_{w}, v : \forall x (B(\vv{b}) \imp A(\vv{a},x)) \Rightarrow v : B(\vv{b}) \imp \forall x A(\vv{a},x)$}
\RightLabel{$\imprr$}
\UnaryInfC{$\vv{a} \in D_{w}, \vv{b} \in D_{w} \Rightarrow w : \forall x (B(\vv{b}) \imp A(\vv{a},x)) \imp (B(\vv{b}) \imp \forall x A(\vv{a},x))$}
\DisplayProof
\end{tabular}
}
\end{center}
The proofs $\Pi_{1}$ and $\Pi_{2}$ are as follows (resp.).

\begin{center}
\resizebox{\columnwidth}{!}{
\begin{tabular}{c}
\AxiomC{}
\RightLabel{Prop.~(i)}
\dashedLine
\UnaryInfC{$w \leq v, v \leq u, u \leq z, v \leq z, \vv{a} \in D_{w}, \vv{b} \in D_{w}, \vv{b} \in D_{u}, a \in D_{z}, z : B(\vv{b}) \imp A(\vv{a},a), v : \forall x (B(\vv{b}) \imp A(\vv{a},x)), u : B(\vv{b}) \Rightarrow z : B(\vv{b}), z : A(\vv{a},a)$}
\RightLabel{$\nd \times k_{1}$}
\UnaryInfC{$w \leq v, v \leq u, u \leq z, v \leq z, \vv{a} \in D_{w}, \vv{b} \in D_{w}, a \in D_{z}, z : B(\vv{b}) \imp A(\vv{a},a), v : \forall x (B(\vv{b}) \imp A(\vv{a},x)), u : B(\vv{b}) \Rightarrow z : B(\vv{b}), z : A(\vv{a},a)$}
\DisplayProof
\end{tabular}
}
\end{center}

\begin{center}
\resizebox{\columnwidth}{!}{
\begin{tabular}{c}
\AxiomC{}
\RightLabel{Prop.~(i)}
\dashedLine
\UnaryInfC{$w \leq v, v \leq u, u \leq z, v \leq z, a \in D_{z}, \vv{a} \in D_{w}, \vv{a} \in D_{z}, \vv{b} \in D_{w}, z : B(\vv{b}) \imp A(\vv{a},a), v : \forall x (B(\vv{b}) \imp A(\vv{a},x)), u : B(\vv{b}), z : A(\vv{a},a) \Rightarrow z : A(\vv{a},a)$}
\RightLabel{$\nd \times k_{2}$}
\UnaryInfC{$w \leq v, v \leq u, u \leq z, v \leq z, a \in D_{z}, \vv{a} \in D_{w}, \vv{b} \in D_{w}, z : B(\vv{b}) \imp A(\vv{a},a), v : \forall x (B(\vv{b}) \imp A(\vv{a},x)), u : B(\vv{b}), z : A(\vv{a},a) \Rightarrow z : A(\vv{a},a)$}
\DisplayProof
\end{tabular}
}
\end{center}

The proof of the axiom $\forall x (A(x) \imp B) \imp (\exists x A(x) \imp B)$ is similar to the previous proof. The generalization rule is simulated as shown below:\\

\begin{center}
\begin{tabular}{c}
\AxiomC{$\vv{a} \in D_{w}, a \in D_{w} \Rightarrow w : A(\vv{a},a)$}
\RightLabel{$\wk$}
\dashedLine
\UnaryInfC{$u \leq w, \vv{a} \in D_{u}, \vv{a} \in D_{w}, a \in D_{w} \Rightarrow w : A(\vv{a},a)$}
\RightLabel{$\nd \times k$}
\UnaryInfC{$u \leq w, \vv{a} \in D_{u}, a \in D_{w} \Rightarrow w : A(\vv{a},a)$}
\RightLabel{$\allr$}
\UnaryInfC{$\vv{a} \in D_{u}, \Rightarrow u : \forall x A(\vv{a},x)$}
\RightLabel{$(lsub)$}
\dashedLine
\UnaryInfC{$\vv{a} \in D_{w}, \Rightarrow w : \forall x A(\vv{a},x)$}
\DisplayProof
\end{tabular}
\end{center}

\begin{center}
\resizebox{\columnwidth}{!}{
\begin{tabular}{c}
\AxiomC{$\Pi_{1}$}
\AxiomC{$\Pi_{2}$}
\RightLabel{$\disrl$}
\BinaryInfC{$w \leq v, v \leq u, \vv{a} \in D_{w}, \vv{b} \in D_{w}, a \in D_{u}, v : A(\vv{a},a) \lor B(\vv{b}), v : \forall x (A(\vv{a},x) \lor B(\vv{b})) \Rightarrow u : A(\vv{a},a), v : B(\vv{b})$}
\RightLabel{$\alll$}
\UnaryInfC{$w \leq v, v \leq u, v \leq v, \vv{a} \in D_{w}, \vv{b} \in D_{w}, a \in D_{u}, v : \forall x (A(\vv{a},x) \lor B(\vv{b})) \Rightarrow u : A(\vv{a},a), v : B(\vv{b})$}
\RightLabel{$\refl$}
\UnaryInfC{$w \leq v, v \leq u, \vv{a} \in D_{w}, \vv{b} \in D_{w}, a \in D_{u}, v : \forall x (A(\vv{a},x) \lor B(\vv{b})) \Rightarrow u : A(\vv{a},a), v : B(\vv{b})$}
\RightLabel{$\allr$}
\UnaryInfC{$w \leq v, \vv{a} \in D_{w}, \vv{b} \in D_{w}, v : \forall x (A(\vv{a},x) \lor B(\vv{b})) \Rightarrow v : \forall x A(\vv{a},x), v : B(\vv{b})$}
\RightLabel{$\disrr$}
\UnaryInfC{$w \leq v, \vv{a} \in D_{w}, \vv{b} \in D_{w}, v : \forall x (A(\vv{a},x) \lor B(\vv{b})) \Rightarrow v : \forall x A(\vv{a},x) \lor B(\vv{b})$}
\RightLabel{$\imprr$}
\UnaryInfC{$\vv{a} \in D_{w}, \vv{b} \in D_{w} \Rightarrow w : \forall x (A(\vv{a},x) \lor B(\vv{b})) \imp \forall x A(\vv{a},x) \lor B(\vv{b})$}
\DisplayProof
\end{tabular}
}
\end{center}
The proofs $\Pi_{1}$ and $\Pi_{2}$ are as follows (resp.).

\begin{center}
\resizebox{\columnwidth}{!}{
\begin{tabular}{c}
\AxiomC{}
\RightLabel{Prop.~(i)}
\dashedLine
\UnaryInfC{$w \leq v, v \leq u, \vv{a} \in D_{w}, \vv{a} \in D_{v}, \vv{b} \in D_{w}, a \in D_{u}, a \in D_{v}, v : A(\vv{a},a), v : \forall x (A(\vv{a},x) \lor B(\vv{b})) \Rightarrow u : A(\vv{a},a), v : B(\vv{b})$}
\RightLabel{$\cd$}
\UnaryInfC{$w \leq v, v \leq u, \vv{a} \in D_{w}, \vv{a} \in D_{v}, \vv{b} \in D_{w}, a \in D_{u}, v : A(\vv{a},a), v : \forall x (A(\vv{a},x) \lor B(\vv{b})) \Rightarrow u : A(\vv{a},a), v : B(\vv{b})$}
\RightLabel{$\nd \times k_{1}$}
\UnaryInfC{$w \leq v, v \leq u, \vv{a} \in D_{w}, \vv{b} \in D_{w}, a \in D_{u}, v : A(\vv{a},a), v : \forall x (A(\vv{a},x) \lor B(\vv{b})) \Rightarrow u : A(\vv{a},a), v : B(\vv{b})$}
\DisplayProof
\end{tabular}
}
\end{center}

\begin{center}
\resizebox{\columnwidth}{!}{
\begin{tabular}{c}
\AxiomC{}
\RightLabel{Prop.~(i)}
\dashedLine
\UnaryInfC{$w \leq v, v \leq u, \vv{a} \in D_{w}, \vv{b} \in D_{w}, \vv{b} \in D_{v}, a \in D_{u}, v : B(\vv{b}), v : \forall x (A(\vv{a},x) \lor B(\vv{b})) \Rightarrow u : A(\vv{a},a), v : B(\vv{b})$}
\RightLabel{$\nd \times k_{2}$}
\UnaryInfC{$w \leq v, v \leq u, \vv{a} \in D_{w}, \vv{b} \in D_{w}, a \in D_{u}, v : B(\vv{b}), v : \forall x (A(\vv{a},x) \lor B(\vv{b})) \Rightarrow u : A(\vv{a},a), v : B(\vv{b})$}
\DisplayProof
\end{tabular}
}
\end{center}
\qed
\end{proof}


\begin{customlem}{\ref{lem:extended-lint-properties}} The calculus $\lint + \{(id^{*}), (\neg_{l}), (\neg_{r}), (\imp^{*}_{l}), \lift\}$ and the calculus $\lintfocd + \{\idfonew, (\neg_{l}), (\neg_{r}), (\imp^{*}_{l}), (\forall_{l}^{*}), (\forall^{*}_{r}), (\exists_{r}^{*}), \lift\}$ have the following properties:
\begin{itemize}

\item[$(i)$] All sequents of the form $\R,w \leq v, \vv{a} \in D_{w}, w : A(\vv{a}), \Gamma \Rightarrow v : A(\vv{a}), \Delta$ and $\R, \vv{a} \in D_{w}, w:A(\vv{a}),\Gamma \Rightarrow \Delta, w :A(\vv{a})$ are derivable;







\item[(ii)] The rules $\{(lsub),(psub),\wk,(ctr_{R}),(ctr_{F_{r}})\}$ are hp-admissible;

\item[(iii)] With the exception of $\{\conrl,\existsl\}$, all rules are hp-invertible;

\item[(iv)] The rules $\{\conrl,\existsl\}$ are invertible;

\item[(v)] The rule $(ctr_{F_{l}})$ is admissible.

\end{itemize}
\end{customlem}

\begin{proof} We prove that the properties hold for the first-order calculus since our method of proof implies that the properties hold in the propositional calculus.

\textit{Claim (i).} Same as Thm.~\ref{thm:lint-properties}-(i).

\textit{Claim (ii).} By induction on the height of the given derivation; similar to Thm.~\ref{thm:lint-properties}-(ii) and Thm.~\ref{thm:lint-properties}-(iv). Hp-admissibility of $(ctr_{F_{r}})$ evokes property (iii) below.

\textit{Claim (iii).} Hp-invertibility of $\imprl$, $\negl$, $\existsr$, $\alll$, $\refl$, $\trans$, $\nd$, $\cd$, $(\imp_{l}^{*})$, $(\forall_{l}^{*})$, $(\exists_{r}^{*})$, and $\lift$ follow from hp-admissibility of $\wk$ (property (ii) above). Hence, we need only prove invertibility of $\conrr$, $\disrl$, $\disrr$, $\imprr$, $\allr$, $\negr$, and $(\forall^{*}_{r})$ rules. The result is shown by induction on the height of the given derivation. The base cases are simple and every case of the inductive step is resolved by evoking IH followed by an application of the relevant rule.

\textit{Claim (iv).} The addition of $\lift$ and $\nd$ to our calculus breaks the \emph{height preserving} invertibility of $\conrl$ and $\existsl$. Nevertheless, it is worthwhile to note that the rule $(lift')$ ought to allow for hp-invertibility of $\conrl$ in the propositional calculus, and $(lnd)$ ought to allow for the hp-invertibility of $\conrl$ and $\nd$ in the first-order calculus (which would have the effect that $(ctr_{F_{l}})$ is hp-admissible in both calculi) while retaining the soundness and cut-free completeness of each system. The notation $v : \Gamma'$ is used to represent multisets of formulae labelled with $v$.
\begin{center}
\resizebox{\columnwidth}{!}{
\begin{tabular}{c c}
\AxiomC{$\R, w \leq u, \Gamma, w : \Gamma', u : \Gamma' \Rightarrow \Delta$}
\RightLabel{$(lift')$}
\UnaryInfC{$\R, w \leq u, \Gamma, w : \Gamma' \Rightarrow \Delta$}
\DisplayProof

&

\AxiomC{$\R, w \leq u, \vv{a} \in D_{w}, \vv{a} \in D_{u}, \Gamma, w : \Gamma', u : \Gamma' \Rightarrow \Delta$}
\RightLabel{$(lnd)$}
\UnaryInfC{$\R, w \leq u, \vv{a} \in D_{w}, \Gamma, w : \Gamma' \Rightarrow \Delta$}
\DisplayProof
\end{tabular}
}
\end{center}
Despite this shortcoming, the rules are still invertible. To show this, we prove the following two claims by induction on the height of the given derivation.\\

\begin{itemize}

\item[(a)] If $\R, w_{0} : A \land B, \ldots, w_{n} : A \land B, \Gamma \Rightarrow \Delta$ is provable, then so is the sequent $\R, w_{0} : A, w_{0} : B, \ldots, w_{n} : A, w_{n} : B, \Gamma \Rightarrow \Delta$.

\item[(b)] If $\R, w_{0} : \exists x A(x), \ldots, w_{n} : \exists x A(x), \Gamma \Rightarrow \Delta$ is provable, then so is the sequent $\R, a_{0} \in D_{w_{0}}, \ldots, a_{n} \in D_{w_{n}}, w_{0} : A(a_{0}), \ldots, w_{n} : A(a_{n}), \Gamma \Rightarrow \Delta$.

\end{itemize}

\textit{Claim (a).} The base case is trivial, so we move on to the inductive step.\\

\textit{Inductive step.} In all cases, with the exception of $\lift$, apply IH followed by the relevant rule. If none of the conjunctions are active in a $\lift$ inference, then apply IH followed by an application of $\lift$. If one of the conjunctions is principal in an application of $\lift$ (as shown below top), then resolve the case as shown below bottom:
\begin{center}
\AxiomC{$\R, u \leq w_{0}, u : A \land B, w_{0} : A \land B, \ldots, w_{n} : A \land B, \Gamma \Rightarrow \Delta$}
\RightLabel{$\lift$}
\UnaryInfC{$\R, u \leq w_{0}, u : A \land B, \ldots, w_{n} : A \land B, \Gamma \Rightarrow \Delta$}
\DisplayProof
\end{center}
\begin{center}
\AxiomC{}
\RightLabel{IH}
\dashedLine
\UnaryInfC{$\R, u \leq w_{0}, u : A, u :B, w_{0} : A, w_{0} : B, \ldots, w_{n} : A, w_{n} : B, \Gamma \Rightarrow \Delta$}
\RightLabel{$\lift$}
\UnaryInfC{$\R, u \leq w_{0}, u :A, u :B, w_{0} : B, \ldots, w_{n} : A, w_{n} : B, \Gamma \Rightarrow \Delta$}
\RightLabel{$\lift$}
\UnaryInfC{$\R, u \leq w_{0}, u : A, u :B, \ldots, w_{n} : A, w_{n} : B, \Gamma \Rightarrow \Delta$}
\DisplayProof
\end{center}
Notice that the two applications of $\lift$ needed to derive the desired conclusion break the hp-invertibility of the $\conrl$ rule.\\

\textit{Claim (b).} The base case is trivial, so we move on to the inductive step.\\

\textit{Inductive step.} All cases, with the exception of the one given below top (where one of our existential formulae is principal in an application of $\lift$), are resolved by applying IH followed by the corresponding rule. The non-trivial case given below top is resolved as shown below bottom.
\begin{center}
\AxiomC{$\R, u \leq w_{0}, u : \exists x A(x), w_{0} : \exists x A(x), \ldots, w_{n} : \exists x A(x), \Gamma \Rightarrow \Delta$}
\RightLabel{$\lift$}
\UnaryInfC{$\R, u \leq w_{0}, u : \exists x A(x), \ldots, w_{n} : \exists x A(x), \Gamma \Rightarrow \Delta$}
\DisplayProof
\end{center}
\begin{center}
\resizebox{\columnwidth}{!}{
\AxiomC{}
\RightLabel{IH}
\dashedLine
\UnaryInfC{$\R, u \leq w_{0}, a \in D_{u}, a_{0} \in D_{w_{0}}, \ldots, a_{n} \in D_{w_{n}}, u : A(a), w_{0} : A(a_{0}), \ldots, w_{n} : A(a_{n}), \Gamma \Rightarrow \Delta$}
\RightLabel{$(psub)$}
\dashedLine
\UnaryInfC{$\R, u \leq w_{0}, a \in D_{u}, a \in D_{w_{0}}, \ldots, a_{n} \in D_{w_{n}}, u : A(a), w_{0} : A(a), \ldots, w_{n} : A(a_{n}), \Gamma \Rightarrow \Delta$}
\RightLabel{$\lift$}
\UnaryInfC{$\R, u \leq w_{0}, a \in D_{u}, a \in D_{w_{0}}, \ldots, a_{n} \in D_{w_{n}}, u : A(a),\ldots, w_{n} : A(a_{n}), \Gamma \Rightarrow \Delta$}
\RightLabel{$\nd$}
\UnaryInfC{$\R, u \leq w_{0}, a \in D_{u}, \ldots, a_{n} \in D_{w_{n}}, u : A(a),\ldots, w_{n} : A(a_{n}), \Gamma \Rightarrow \Delta$}
\DisplayProof
}
\end{center}
Observe that the use of $\lift$ and $\nd$ breaks height-preserving invertibility of the rule.\\

Propositions (a) and (b) imply the invertibility of $\conrl$ and $\existsl$.\\

\textit{Claim (v).} By induction on pairs of the form $(|A|,h)$ where $|A|$ is the complexity of the contraction formula $A$ and $h$ is the height of the derivation. The proof makes use of properties (iii) and (iv).
\qed
\end{proof}

\begin{customlem}{\ref{lem:trans-refl-fo-elim}}
The $\refl$ and $\trans$ rules are admissible in the calculus $\intfocdl$.
\end{customlem}

\begin{proof} The $\trans$ elimination cases are given below.

\begin{center}
\begin{tabular}{c}
\AxiomC{}
\RightLabel{$\idfo$}
\UnaryInfC{$\R, w \leq u, u \leq v, w \leq v, \vv{a} \in D_{w}, w : p(\vv{a}), \Gamma \Rightarrow v : p(\vv{a}), \Delta$}
\RightLabel{$\trans$}
\UnaryInfC{$\R, w \leq u, u \leq v, \vv{a} \in D_{w}, w : p(\vv{a}), \Gamma \Rightarrow v : p(\vv{a}), \Delta$}
\DisplayProof
\end{tabular}
\end{center}

\begin{center}
\begin{tabular}{c}
\resizebox{\columnwidth}{!}{
\AxiomC{}
\RightLabel{$\idfo$}
\UnaryInfC{$\R, w \leq u, u \leq v, \vv{a} \in D_{w}, \vv{a} \in D_{u}, w : p(\vv{a}), u : p(\vv{a}), \Gamma \Rightarrow v : p(\vv{a}), \Delta$}
\RightLabel{$\lift + \nd \times n$}
\UnaryInfC{$\R, w \leq u, u \leq v, \vv{a} \in D_{w}, w : p(\vv{a}), \Gamma \Rightarrow v : p(\vv{a}), \Delta$}
\DisplayProof
}
\end{tabular}
\end{center}

\begin{center}
\AxiomC{$\R, w \leq u, u \leq v, w \leq v, a \in D_{v}, v : A[a/x], w : \forall x A, \Gamma \Rightarrow \Delta$}
\RightLabel{$\alll$}
\UnaryInfC{$\R, w \leq u, u \leq v, w \leq v, a \in D_{v}, w : \forall x A, \Gamma \Rightarrow \Delta$}
\RightLabel{$\trans$}
\UnaryInfC{$\R, w \leq u, u \leq v, a \in D_{v}, w : \forall x A, \Gamma \Rightarrow \Delta$}
\DisplayProof
\end{center}
\begin{center}
\AxiomC{}
\RightLabel{IH}
\dashedLine
\UnaryInfC{$\R, w \leq u, u \leq v, a \in D_{v}, v : A[a/x], w : \forall x A, \Gamma \Rightarrow \Delta$}
\RightLabel{Lem.~\ref{lem:extended-lint-properties}-(ii)}
\dashedLine
\UnaryInfC{$\R, w \leq u, u \leq v, a \in D_{v}, v : A[a/x], w : \forall x A, u : \forall x A, \Gamma \Rightarrow \Delta$}
\RightLabel{$(\forall_{l})$}
\UnaryInfC{$\R, w \leq u, u \leq v, a \in D_{v}, w : \forall x A, u : \forall x A, \Gamma \Rightarrow \Delta$}
\RightLabel{$\lift$}
\UnaryInfC{$\R, w \leq u, u \leq v, a \in D_{v}, w : \forall x A,  \Gamma \Rightarrow \Delta$}
\DisplayProof
\end{center}

\begin{center}
\AxiomC{$\R, u \leq z, z \leq z', u \leq z', a \in D_{v}, w : A[a/x], w : \forall x A, \Gamma \Rightarrow \Delta$}
\RightLabel{$(\forall^{*}_{l})$}
\UnaryInfC{$\R, u \leq z, z \leq z', u \leq z', a \in D_{v}, w : \forall x A, \Gamma \Rightarrow \Delta$}
\RightLabel{$\trans$}
\UnaryInfC{$\R, u \leq z, z \leq z', a \in D_{v}, w : \forall x A, \Gamma \Rightarrow \Delta$}
\DisplayProof
\end{center}
\begin{center}
\AxiomC{}
\RightLabel{IH}
\dashedLine
\UnaryInfC{$\R, u \leq z, z \leq z', a \in D_{v}, w : A[a/x], w : \forall x A, \Gamma \Rightarrow \Delta$}
\RightLabel{$(\forall^{*}_{l})$}
\UnaryInfC{$\R, u \leq z, z \leq z', u \leq z', a \in D_{v}, w : \forall x A, \Gamma \Rightarrow \Delta$}
\DisplayProof
\end{center}

\begin{center}
\AxiomC{$\R, u \leq z, z \leq z', u \leq z', a \in D_{v}, \Gamma \Rightarrow w : A[a/x], w : \exists x A, \Delta$}
\RightLabel{$(\exists^{*}_{r})$}
\UnaryInfC{$\R, u \leq z, z \leq z', u \leq z', a \in D_{v}, \Gamma \Rightarrow w : \exists x A, \Delta$}
\RightLabel{$\trans$}
\UnaryInfC{$\R, u \leq z, z \leq z', a \in D_{v}, \Gamma \Rightarrow w : \exists x A, \Delta$}
\DisplayProof
\end{center}
\begin{center}
\AxiomC{}
\RightLabel{IH}
\dashedLine
\UnaryInfC{$\R, u \leq z, z \leq z', a \in D_{v}, \Gamma \Rightarrow w : A[a/x], w : \exists x A, \Delta$}
\RightLabel{$(\exists^{*}_{l})$}
\UnaryInfC{$\R, u \leq z, z \leq z', a \in D_{v}, \Gamma \Rightarrow  w : \exists x A, \Delta$}
\DisplayProof
\end{center}

\begin{center}
\begin{tabular}{c}
\AxiomC{$\R, w \leq u, u \leq v, w \leq v, a \in D_{w}, a \in D_{v}, \Gamma \Rightarrow \Delta$}
\RightLabel{$\nd$}
\UnaryInfC{$\R, w \leq u, u \leq v, w \leq v, a \in D_{w}, \Gamma \Rightarrow \Delta$}
\RightLabel{$\trans$}
\UnaryInfC{$\R, w \leq u, u \leq v, a \in D_{w}, \Gamma \Rightarrow \Delta$}
\DisplayProof
\end{tabular}
\end{center}

\begin{center}
\begin{tabular}{c}
\AxiomC{$\R, w \leq u, u \leq v, w \leq v, a \in D_{w}, a \in D_{v}, \Gamma \Rightarrow \Delta$}
\RightLabel{Lem.~\ref{lem:extended-lint-properties}-(ii)}
\dashedLine
\UnaryInfC{$\R, w \leq u, u \leq v, w \leq v, a \in D_{w}, a \in D_{u}, a \in D_{v}, \Gamma \Rightarrow \Delta$}
\RightLabel{IH}
\dashedLine
\UnaryInfC{$\R, w \leq u, u \leq v, a \in D_{w}, a \in D_{u}, a \in D_{v}, \Gamma \Rightarrow \Delta$}
\RightLabel{$\nd$}
\UnaryInfC{$\R, w \leq u, u \leq v, a \in D_{w}, a \in D_{u}, \Gamma \Rightarrow \Delta$}
\RightLabel{$\nd$}
\UnaryInfC{$\R, w \leq u, u \leq v, a \in D_{w}, \Gamma \Rightarrow \Delta$}
\DisplayProof
\end{tabular}
\end{center}

\begin{center}
\begin{tabular}{c}
\AxiomC{$\R, w \leq u, u \leq v, w \leq v, a \in D_{w}, a \in D_{v}, \Gamma \Rightarrow \Delta$}
\RightLabel{$\cd$}
\UnaryInfC{$\R, w \leq u, u \leq v, w \leq v, a \in D_{v}, \Gamma \Rightarrow \Delta$}
\RightLabel{$\trans$}
\UnaryInfC{$\R, w \leq u, u \leq v, a \in D_{v}, \Gamma \Rightarrow \Delta$}
\DisplayProof
\end{tabular}
\end{center}

\begin{center}
\begin{tabular}{c}
\AxiomC{$\R, w \leq u, u \leq v, w \leq v, a \in D_{w}, a \in D_{v}, \Gamma \Rightarrow \Delta$}
\RightLabel{Lem.~\ref{lem:extended-lint-properties}-(ii)}
\dashedLine
\UnaryInfC{$\R, w \leq u, u \leq v, w \leq v, a \in D_{w}, a \in D_{u}, a \in D_{v}, \Gamma \Rightarrow \Delta$}
\RightLabel{IH}
\dashedLine
\UnaryInfC{$\R, w \leq u, u \leq v, a \in D_{w}, a \in D_{u}, a \in D_{v}, \Gamma \Rightarrow \Delta$}
\RightLabel{$\cd$}
\UnaryInfC{$\R, w \leq u, u \leq v, a \in D_{u}, a \in D_{v}, \Gamma \Rightarrow \Delta$}
\RightLabel{$\cd$}
\UnaryInfC{$\R, w \leq u, u \leq v, a \in D_{v}, \Gamma \Rightarrow \Delta$}
\DisplayProof
\end{tabular}
\end{center}

For the $(\forall_{l}^{*})$ and $(\exists_{r}^{*})$ rules, the side condition still holds after the application of IH. If the path from $w$ to $v$ traverses the relational atom $u \leq z'$, then the relational atoms $u \leq z, z \leq z'$ are still present in the sequent after evoking IH, ensuring that a path between $w$ and $v$ continues to exist.

\qed
\end{proof}

\begin{customlem}{\ref{lm:nd-cd-admiss}}
The rules $\nd$ and $\cd$ are admissible in the calculus $\intfocdl - \{\refl,\trans\}$.
\end{customlem}

\begin{proof} We prove the result by induction on the height of the given derivation, and argue the claim for $\nd$, since $\cd$ is similar. By Lem.~\ref{lm:structural-rules-permutation} and Lem.~\ref{lem:trans-refl-fo-elim}, we need only consider the $\idfo$, $(id_{q}^{*})$, $(\imp^{*}_{l})$, $\alll$, $(\forall_{l}^{*})$, $(\forall_{r}^{*})$, $\existsr$, $(\exists_{r}^{*})$, $\lift$ cases.\\

\textit{Base case.} We demonstrate the $\idfo$ case; the $(id_{q}^{*})$ case is similar.
\begin{center}
\begin{tabular}{c}

\AxiomC{}
\RightLabel{$\idfo$}
\UnaryInfC{$\R, u \leq w, w \leq v, \vv{a} \in D_{w}, b \in D_{u}, b \in D_{w}, w : p(\vv{a},b), \Gamma \Rightarrow \Delta, v : p(\vv{a},b)$}
\RightLabel{$\nd$}
\UnaryInfC{$\R, u \leq w, w \leq v, \vv{a} \in D_{w}, b \in D_{u}, w : p(\vv{a},b), \Gamma \Rightarrow \Delta, v : p(\vv{a},b)$}
\DisplayProof

\end{tabular}
\end{center}
The end sequent is derivable with the $(id_{q}^{*})$ and $\lift$ rules.

\textit{Inductive step.} We provide each proof below showing that $\nd$ can be permuted above $(\imp^{*}_{l})$, $\alll$, $(\forall_{l}^{*})$, $(\forall_{r}^{*})$, $\existsr$, $(\exists_{r}^{*})$, $\lift$ (we consider only the non-trivial cases and exclude the $\lift$ case since it is easily resolved). In the $(\forall_{l}^{*})$ and $(\exists_{r}^{*})$ cases, we assume that the side condition holds, i.e. there exists a path from $v$ to $w$. This side condition is still satisfied after the upward permutation of $\nd$ due to the existence of $u \leq v,a \in D_{u}$ in our sequent and our assumption that there is a path from $v$ to $w$. Furthermore, it should be pointed out that $(\forall_{l}^{*})$ and $(\exists_{r}^{*})$ are used to resolve the non-trivial $(\forall_{l})$ and $(\exists_{r})$ cases; observe that the side condition for $(\forall_{l}^{*})$ and $(\exists_{r}^{*})$ holds after $\nd$ is permuted upwards.

\begin{center}
\resizebox{\columnwidth}{!}{
\begin{tabular}{c}
\AxiomC{$\R, u \leq v, a \in D_{u}, a \in D_{v}, w : A \imp B, \Gamma \Rightarrow \Delta, w : A$}
\AxiomC{$\R, u \leq v, a \in D_{}, a \in D_{v}, w : A \imp B, w : B, \Gamma \Rightarrow \Delta$}
\RightLabel{$(\imp_{l}^{*})$}
\BinaryInfC{$\R, u \leq v, a \in D_{u}, a \in D_{v}, w : A \imp B, \Gamma \Rightarrow \Delta$}
\RightLabel{$\nd$}
\UnaryInfC{$\R, u \leq v, a \in D_{w}, w : A \imp B, \Gamma \Rightarrow \Delta$}
\DisplayProof
\end{tabular}
}
\end{center}
\begin{center}
\resizebox{\columnwidth}{!}{
\begin{tabular}{c}
\AxiomC{$\R, u \leq v, a \in D_{u}, a \in D_{v}, w : A \imp B, \Gamma \Rightarrow \Delta, w : A$}
\RightLabel{$\nd$}
\UnaryInfC{$\R, u \leq v, a \in D_{u}, w : A \imp B, \Gamma \Rightarrow \Delta, w : A$}
\AxiomC{$\R, u \leq v, a \in D_{u}, a \in D_{v}, w : A \imp B, w : B, \Gamma \Rightarrow \Delta$}
\RightLabel{$\nd$}
\UnaryInfC{$\R, u \leq v, a \in D_{u}, w : A \imp B, \Gamma \Rightarrow \Delta, w : A$}
\RightLabel{$(\imp_{l}^{*})$}
\BinaryInfC{$\R, u \leq v, a \in D_{u}, w : A \imp B, \Gamma \Rightarrow \Delta$}
\DisplayProof
\end{tabular}
}
\end{center}
\begin{center}
\scalebox{1}{
\begin{tabular}{c}
\AxiomC{$\R, u \leq w, w \leq v, a \in D_{u}, a \in D_{w}, w : \forall x A, v : A[a/x], \Gamma \Rightarrow \Delta$}
\RightLabel{$(\forall_{l})$}
\UnaryInfC{$\R, u \leq w, w \leq v, a \in D_{u}, a \in D_{w}, w : \forall x A, \Gamma \Rightarrow \Delta$}
\RightLabel{$\nd$}
\UnaryInfC{$\R, u \leq w, w \leq v, a \in D_{u}, w : \forall x A, \Gamma \Rightarrow \Delta$}
\DisplayProof
\end{tabular}
}
\end{center}
\begin{center}
\scalebox{1}{
\begin{tabular}{c}
\AxiomC{$\R, u \leq w, w \leq v, a \in D_{u}, a \in D_{w}, w : \forall x A, v : A[a/x], \Gamma \Rightarrow \Delta$}
\RightLabel{$\nd$}
\UnaryInfC{$\R, u \leq w, w \leq v, a \in D_{u}, w : \forall x A, v : A[a/x], \Gamma \Rightarrow \Delta$}
\RightLabel{$(\forall_{l}^{*})$}
\UnaryInfC{$\R, u \leq w, w \leq v, a \in D_{u}, w : \forall x A, \Gamma \Rightarrow \Delta$}
\DisplayProof
\end{tabular}
}
\end{center}
\begin{center}
\scalebox{1}{
\begin{tabular}{c}
\AxiomC{$\R, u \leq v, a \in D_{u}, a \in D_{v}, w : \forall x A, w : A[a/x], \Gamma \Rightarrow \Delta$}
\RightLabel{$(\forall_{l}^{*})$}
\UnaryInfC{$\R, u \leq v, a \in D_{u}, a \in D_{v}, w : \forall x A, \Gamma \Rightarrow \Delta$}
\RightLabel{$\nd$}
\UnaryInfC{$\R, u \leq v, a \in D_{u}, w : \forall x A, \Gamma \Rightarrow \Delta$}
\DisplayProof
\end{tabular}
}
\end{center}
\begin{center}
\scalebox{1}{
\begin{tabular}{c}
\AxiomC{$\R, u \leq v, a \in D_{u}, a \in D_{v}, w : \forall x A, w : A[a/x], \Gamma \Rightarrow \Delta$}
\RightLabel{$\nd$}
\UnaryInfC{$\R, u \leq v, a \in D_{u}, w : \forall x A, w : A[a/x], \Gamma \Rightarrow \Delta$}
\RightLabel{$(\forall_{l}^{*})$}
\UnaryInfC{$\R, u \leq v, a \in D_{u}, w : \forall x A, \Gamma \Rightarrow \Delta$}
\DisplayProof
\end{tabular}
}
\end{center}
\begin{center}
\AxiomC{$\R, u \leq v, b \in D_{u}, b \in D_{v}, a \in D_{w}, \Gamma \Rightarrow w : A[a/x], \Delta$}
\RightLabel{$(\forall^{*}_{r})$}
\UnaryInfC{$\R, u \leq v, b \in D_{u}, b \in D_{v}, \Gamma \Rightarrow w : \forall x A, \Delta$}
\RightLabel{$\nd$}
\UnaryInfC{$\R, u \leq v, b \in D_{u}, \Gamma \Rightarrow w : \forall x A, \Delta$}
\DisplayProof
\end{center}
\begin{center}
\AxiomC{$\R, u \leq v, b \in D_{u}, b \in D_{v}, a \in D_{w}, \Gamma \Rightarrow w : A[a/x], \Delta$}
\RightLabel{$\nd$}
\UnaryInfC{$\R, u \leq v, b \in D_{u}, a \in D_{w}, \Gamma \Rightarrow w : A[a/x], \Delta$}
\RightLabel{$(\forall_{r}^{*})$}
\UnaryInfC{$\R, u \leq v, b \in D_{u}, \Gamma \Rightarrow w : \forall x A, \Delta$}
\DisplayProof
\end{center}
\begin{center}
\scalebox{1}{
\begin{tabular}{c}
\AxiomC{$\R, u \leq w, a \in D_{u}, a \in D_{w}, \Gamma \Rightarrow \Delta, w : A[a/x], w : \exists x A$}
\RightLabel{$\existsr$}
\UnaryInfC{$\R, u \leq w, a \in D_{u}, a \in D_{w}, \Gamma \Rightarrow \Delta, w : \exists x A$}
\RightLabel{$\nd$}
\UnaryInfC{$\R, u \leq w, a \in D_{u}, \Gamma \Rightarrow \Delta, w : \exists x A$}
\DisplayProof
\end{tabular}
}
\end{center}
\begin{center}
\scalebox{1}{
\begin{tabular}{c}
\AxiomC{$\R, u \leq w, a \in D_{u}, a \in D_{w}, \Gamma \Rightarrow \Delta, w : A[a/x], w : \exists x A$}
\RightLabel{$\nd$}
\UnaryInfC{$\R, u \leq w, a \in D_{u}, \Gamma \Rightarrow \Delta, w : A[a/x], w : \exists x A$}
\RightLabel{$(\exists_{r}^{*})$}
\UnaryInfC{$\R, u \leq w, a \in D_{u}, \Gamma \Rightarrow \Delta, w : \exists x A$}
\DisplayProof
\end{tabular}
}
\end{center}
\begin{center}
\scalebox{1}{
\begin{tabular}{c}
\AxiomC{$\R, u \leq v, a \in D_{u}, a \in D_{v}, \Gamma \Rightarrow \Delta, w : A[a/x], w : \exists x A$}
\RightLabel{$(\exists_{r}^{*})$}
\UnaryInfC{$\R, u \leq v, a \in D_{u}, a \in D_{v}, \Gamma \Rightarrow \Delta, w : \exists x A$}
\RightLabel{$\nd$}
\UnaryInfC{$\R, u \leq v, a \in D_{u}, \Gamma \Rightarrow \Delta, w : \exists x A$}
\DisplayProof
\end{tabular}
}
\end{center}
\begin{center}
\scalebox{1}{
\begin{tabular}{c}
\AxiomC{$\R, u \leq v, a \in D_{u}, a \in D_{v}, \Gamma \Rightarrow \Delta, w : A[a/x], w : \exists x A$}
\RightLabel{$\nd$}
\UnaryInfC{$\R, u \leq v, a \in D_{u}, \Gamma \Rightarrow \Delta, w : A[a/x], w : \exists x A$}
\RightLabel{$(\exists_{r}^{*})$}
\UnaryInfC{$\R, u \leq v, a \in D_{u}, \Gamma \Rightarrow \Delta, w : \exists x A$}
\DisplayProof
\end{tabular}
}
\end{center}
\qed
\end{proof}

\newpage

\section{The Nested Calculi $\nint^{*}$ and $\nintfocd^{*}$}
\label{app:new-nested-calculi}

\begin{figure}
\noindent\hrule

\begin{center}
\begin{tabular}{c c c}
\AxiomC{} \RightLabel{$\id$}
\UnaryInfC{$\Sigma \lcut X, p \far p, Y \rcut$}
\DisplayProof

&

\AxiomC{$\Sigma \lcut X, A,B \far Y \rcut $}
\RightLabel{$\conrl$}
\UnaryInfC{$\Sigma \lcut X, A \land B \far Y \rcut$}
\DisplayProof

&

\AxiomC{$\Sigma \lcut X \far A,B, Y \rcut $}
\RightLabel{$\disrr$}
\UnaryInfC{$\Sigma \lcut X \far A\lor B, Y \rcut$}
\DisplayProof
\end{tabular}
\end{center}

\begin{center}
\begin{tabular}{c c}
\AxiomC{$\Sigma \lcut X, A \far Y \rcut$}
\AxiomC{$\Sigma \lcut X, B \far Y \rcut$}
\RightLabel{$\disrl$}
\BinaryInfC{$\Sigma \lcut X, A \lor B \far Y \rcut$}
\DisplayProof

&

\AxiomC{$\Sigma \lcut X \far A, Y \rcut$}
\AxiomC{$\Sigma \lcut X \far B, Y \rcut$}
\RightLabel{$\conrr$}
\BinaryInfC{$\Sigma \lcut X \far A\land B, Y \rcut$}
\DisplayProof

\end{tabular}
\end{center}

\begin{center}
\begin{tabular}{c c c}
\AxiomC{$\Sigma \lcut X \far Y, [A \far ] \rcut$}
\RightLabel{$(\neg_{r})$}
\UnaryInfC{$\Sigma \lcut X \far Y, \neg A \rcut$}
\DisplayProof

&

\AxiomC{$\Sigma \lcut X, \neg A \far A, Y \rcut$}
\RightLabel{$(\neg_{l})$}
\UnaryInfC{$\Sigma \lcut X, \neg A \far Y \rcut$}
\DisplayProof

&

\AxiomC{$\Sigma\{X, A \far Y, [X', A \far Y']\}$}
\RightLabel{$\lift$}
\UnaryInfC{$\Sigma\{X, A \far Y, [X' \far Y']\}$}
\DisplayProof
\end{tabular}
\end{center}

\begin{center}
\begin{tabular}{c c}
\AxiomC{$\Sigma \lcut X \far Y, [A \far B] \rcut$}
\RightLabel{$\imprr$}
\UnaryInfC{$\Sigma \lcut X \far A \imp B, Y \rcut$}
\DisplayProof

&

\AxiomC{$\Sigma \lcut X, A \imp B \far A, Y \rcut$}
\AxiomC{$\Sigma \lcut X, A \imp B, B \far Y \rcut$}
\RightLabel{$\imprl$}
\BinaryInfC{$\Sigma \lcut X, A \imp B \far Y \rcut$}
\DisplayProof
\end{tabular}
\end{center}

\begin{center}
\begin{tabular}{c c}
\AxiomC{}
\RightLabel{$(id_{q})$}
\UnaryInfC{$\Sigma \lcut X, p(\vv{a}) \far p(\vv{a}), Y \rcut$}
\DisplayProof

&

\AxiomC{$\Sigma \lcut X \far Y, A[a/x], \exists x A  \rcut$}
\RightLabel{$(\exists_{r})$}
\UnaryInfC{$\Sigma \lcut X \far Y, \exists x A \rcut$}
\DisplayProof
\end{tabular}
\end{center}

\begin{center}
\begin{tabular}{c c c} 
 
\AxiomC{$\Sigma \lcut X \far Y, A[a/x]  \rcut$}
\RightLabel{$(\forall_{r})^{\dag}$}
\UnaryInfC{$\Sigma \lcut X \far Y, \forall x A \rcut$}
\DisplayProof

&

\AxiomC{$\Sigma \lcut X, A[a/x], \forall x A \far Y  \rcut$}
\RightLabel{$(\forall_{l})$}
\UnaryInfC{$\Sigma \lcut X, \forall x A \far Y \rcut$}
\DisplayProof

&

\AxiomC{$\Sigma \lcut X, A[a/x] \far Y  \rcut$}
\RightLabel{$(\exists_{l})^{\dag}$}
\UnaryInfC{$\Sigma \lcut X, \exists x A \far Y \rcut$}
\DisplayProof
\end{tabular}
\end{center}

\hrule
\caption{The nested calculus $\nint^{*}$ for propositional intuitionistic logic consists of the first four lines, and all rules taken together give the nested calculus $\nintfocd^{*}$. The side condition $\dag$ states that $a$ does not occur in the conclusion.}
\label{fig:nested-calculus-propositional-APP}
\end{figure}

\end{document}